 \newtheorem{thm}{Theorem}[section]
 \newtheorem{cor}[thm]{Corollary}
 \newtheorem{lem}[thm]{Lemma}
 \newtheorem{prop}[thm]{Proposition}
 \theoremstyle{definition}
 \theoremstyle{remark}
 \numberwithin{equation}{section}
\newcommand{\thmref}[1]{Theorem~\ref{#1}}
\newcommand{\corref}[1]{Corollary~\ref{#1}}
\newcommand{\defeq}{\stackrel{\rm{def}}{=}}
\def\squarebox#1{\hbox to #1{\hfill\vbox to #1{\vfill}}}
\newcommand{\stopthm}{\hfill\hfill\vbox{\hrule\hbox{\vrule\squarebox
                 {.667em}\vrule}\hrule}\smallskip}
\newcommand{\bra}[1]{\left\langle#1\right|}
\newcommand{\ket}[1]{\left|#1\right\rangle}
\newcommand{\moy}[1]{\langle #1 \rangle}
\newcommand{\lp}{\left(}
\newcommand{\rp}{\right)}
\newcommand{\lno}{\left\|}
\newcommand{\rno}{\right\|}
\newcommand{\fr}{\frac}
\newcommand{\dg}{\dagger}
\newcommand{\h}{\hbar}
\newcommand{\ti}[1]{\tilde {#1}}
\newcommand{\eps}{\varepsilon}
\newcommand{\trm}[1]{\textrm{#1}}
\def\hto0{\xrightarrow{h\to 0}}
\renewcommand{\d}{\partial}
\newcommand{\dbar}{\overline{\partial}}
\renewcommand{\i}{\textrm{i}}
\newcommand{\norm}[1]{\|#1\|}
\newcommand{\rc}{\sqrt}
\newcommand{\li}{{\mathcal L}}
\newcommand{\Sn}{{\mathcal S_n}}
\newcommand{\Mak}{M_h(a,\kappa)}
\newcommand{\cO}{{\mathcal O}}
\newcommand{\cU}{{\mathcal U}}
\newcommand{\cL}{\mathcal L}
\newcommand{\cI}{\mathcal I}
\newcommand{\Ohn}{{\mathcal O_{\hn}}}
\newcommand{\Ohhnz}{{\mathcal O_{\hn,z}}}
\newcommand{\Ohh}{{\mathcal O_{\hn}}}
\newcommand{\cS}{\mathcal S}
\newcommand{\cR}{\mathcal R}
\newcommand{\Hi}{{\mathcal H}}
\newcommand{\hn}{{\mathcal H_N}}
\newcommand{\ka}{\kappa}
\newcommand{\la}{\ell a}
\newcommand{\lc}{\ell c}
\newcommand{\lsn}{\ell \mathcal S_n}
\newcommand{\tsn}{\ti\cS_n}
\newcommand{\eia}{\operatorname{EI}_\infty(a)}
\newcommand{\esa}{\operatorname{ES}_\infty(a)}
\newcommand{\fc}{f_w}
\newcommand{\tfc}{\ti f_w}
\newcommand{\lan}{\ell a_n}
\newcommand{\ctc}{C^{\infty}(\t2)}
\newcommand{\C}{{\mathbb C}}
\newcommand{\N}{{\mathbb N}}
\newcommand{\R}{{\mathbb R}}
\newcommand{\Z}{{\mathbb Z}}
\newcommand{\SP}{{\mathbb S}}
\newcommand{\TT}{{\mathbb T}}
\newcommand{\E}{\mathbb E}
\def\t2{{\mathbb T}^2}
\newcommand{\bbbone}{{\mathchoice {1\mskip-4mu {\rm{l}}} {1\mskip-4mu {\rm{l}}}
{ 1\mskip-4.5mu {\rm{l}}} { 1\mskip-5mu {\rm{l}}}}}
\newcommand{\Supp}{\operatorname {Supp}}
\newcommand{\Ran}{\operatorname{Ran}}
\newcommand{\Tr}{\operatorname{Tr}}
\newcommand{\Spec}{\operatorname{Spec}}
\newcommand{\e}{\operatorname{e}}
\newcommand{\Op}{\operatorname{Op}}
\newcommand{\opn}{\operatorname{Op}_h}
\newcommand{\esss}{\operatorname{ess\ sup}}
\newcommand{\essi}{\operatorname{ess\ inf}}
\newcommand{\Id}{\operatorname{Id}}
\newcommand{\Fix}{\operatorname{Fix}}
\newcommand{\Diff}{\operatorname{Diff}}
\begin{document}

\title{Weyl laws for partially open  quantum maps}
\author{Emmanuel Schenck}

\address{
Institut de Physique Th\'eorique\\
CEA Saclay\\
91191 Gif-sur-Yvette\\
France}

\email{emmanuel.schenck@cea.fr}

%----------classification, keywords, date
%\subjclass{Primary 99Z99; Secondary 00A00}

%\keywords{Class file, journal}

\date{October 14, 2008}

%%%%%%%%%%%%%%%%%%%%%%%%%%%%%%%%%%%%%%%%%%%%%%%%%%%%%%%%%%%%%%%%%%%%%%%%%%%%%%%%%%%%%%%%%%%%%%%%%%%%%%%%%%%%%%

\begin{abstract}
We study a toy model for ``partially open'' wave-mechanical system,
like for instance a dielectric micro-cavity, in the semiclassical
limit where ray dynamics is applicable. Our model is a quantized map
on the 2-dimensional torus, with an additional damping at each
time step, resulting in a subunitary propagator, or ``damped quantum
map''. We obtain analogues of Weyl's laws for such maps in the semiclassical limit, and draw some more precise estimates when the classical dynamic is chaotic.
\end{abstract}

\maketitle

\section{Introduction}
A quantum billiard $\Omega$ is a ``closed quantum system'', as it preserves  probability. Mathematically, this corresponds to the Laplace operator in $\Omega$ with
Dirichlet boundary conditions. In this case, the spectrum is discrete
and the associated eigenfunctions are bound states.
This system can be ``opened'' in various ways: among others, one possibility is to
consider the situation where the refractive index takes two
different values $n_{in/out}$ inside and outside the billiard. This  model can describe
certain types of two-dimensional optical microresonators: after some approximations,  the electromagnetic field satisfies the scalar Helmholtz equation $(\Delta+ k_{in/out}^2 )\Psi = 0$ inside and outside $\Omega$.  For transverse magnetic polarization 
of the electromagnetic field, $\Psi$ and $\nabla\Psi$ are continuous across
$\partial\Omega$, and  the
relation between $k_{in/out}$ and the energy $E$ is expressed by $k_{in/out}^2 = n^2_{in/out} E$. 
In
that case, the spectrum is purely absolutely continuous, and all
bound states are replaced by metastable states: they correspond to complex generalized eigenvalues, called \emph{resonances}, which are the poles of the meromorphic continuation of the resolvent from the upper to the lower half plane. These quantum resonances play a physically significant role, as their imaginary part govern the decay in time of the metastable states.

For such systems with a refractive index jump, the semiclassical (equivalently, the geometric optics) limit can be described as follows. A 
wavepacket travels along a single ray until in hits $\partial\Omega$, then it generally
{\it splits} between two rays, one reflected, the other refracted according to Snell's law.
If the cavity $\Omega$ is convex, the refracted ray will escape to infinity, and we may
concentrate to what happens inside: the wavepacket follows the same trajectory as in the
case of the closed cavity, but it is {\it damped} at each bounce by a reflection factor
depending on the incident angle. If we encode the classical dynamics inside the billiard
by the bounce map, then the effect of that reflection factor is to damp the wavepacket after
at each step (or bounce). This map  does \emph{not} preserve probability, it is a
``weighted symplectic map'', as studied in \cite{nz}.

The toy model we will study below has the same characteristics as
this bounce map, and  has been primarily introduced in \cite{kns} to mimick the resonance spectra of dielectric microcavities. It consists in a symplectic smooth map $\kappa$
acting on a compact phase space (the 2-dimensional torus $\t2$),
plus a damping function on that phase space $a\in \ctc$, with
$|a|\leq 1$. Throughout the paper we will not always assume 
precise dynamical property for the map $\ka$, althought the main
result take a very specific form when $\ka$ has the Anosov property. We will also assume that for each
time $n\geq 1$, the fixed points of $\kappa^n$ form a ``thin'' set
(the precise condition is given in \S \ref{s:dyn}).

The corresponding quantum system will
be contructed as follows: we first quantize $\kappa$ into a family of
$N\times N$ unitary propagators, where the quantum dimension
$N=(2\pi\hbar)^{-1}\defeq h^{-1}$ will be large, and the damping function $a$ is quantized into
an operator $\opn(a)$. All these quantities will be described in more detail in
\S\ref{s:setting}.
To have a damping effect at the quantum level,
we also need to assume that $\|\opn(a)\|\leq 1$ for
all $h\leq 1$.

If we denote $U_h(\kappa)$  the unitary  propagator obtained from a quantization of $\ka$, the damped quantum map then takes the form
\begin{equation}\label{Qmap}
\Mak\defeq\opn(a)\, U_h(\kappa)\,.
\end{equation}
Apart from the
ray-splitting situation described above, the above damped quantum
map is also relevant as a toy model for the {\em damped wave
equation} in a cavity or on a compact Riemannian manifold. Evolved
through the damped wave equation, a wavepacket follows a geodesic at
speed unity, and it is {\it continuously} damped along this
trajectory \cite{al,sjo}. The above damped quantum map is
a discrete time version of this type of evolution; it can be seen as
a ``stroboscopic'' or ``Poincar\'e'' map for such an evolution. To
compare the spectrum of our quantum maps with the complex modes 
$k_n=\omega_n-i\fr{\Gamma_n}{2}$ of a damped cavity, one should look
at the modes contained in an interval $|\omega_n-k|\leq \pi$ around
the frequency $k\approx h^{-1}$ : the distribution of the decay rates
$\Gamma_n : |\omega_n-k|\leq\pi$ is expected to  exhibit the same behavior as 
that of the decay rates $\{\gamma^{(h)}_n=-2\log |\lambda_n^{(h)}|\,, \, |\lambda_n^{(h)}|\in\Spec(\Mak)\}$, as we will see below.
Some 
of the theorems we  present here are analogues of theorems
relative to the spectrum for the damped wave equation, proved in
\cite{al,sjo}. Some of the latter theorems become trivial
in the present framework, while the proofs of some others simplifies
in the case of maps. Besides, the numerical diagonalization of
finite matrices is simpler than that of wave operators. Also, it is
easier to construct maps with pre-defined dynamical properties, than
manifolds with pre-defined properties of the geodesic flow.

\medskip

We now come to our results concerning the maps \eqref{Qmap}. Some of them -- Theorems \ref{th:ergo} and \ref{th:width} -- have already been presented without proof in \cite{ns}. 

In general, the matrix $\Mak$ is not normal, and may not be diagonalizable.
It is known that the spectrum of nonnormal matrices can be very sensitive to perturbations,
leading to the more robust notion of pseudospectrum \cite{et}.
We will show that, under the
condition of nonvanishing damping factor $|a|\geq a_{\min}>0$, the spectrum of $\Mak$
is still rather constrained in the semiclassical limit: it resembles
the spectrum of the unitary (undamped) map $U_h(\kappa)$.

\begin{thm}\label{th:strip}
Let $\Mak$ be the damped quantum  map described above, where $\ka$ is
a smooth, symplectic map on $\t2$ and the damping factor $a\in C^\infty (\t2)$ satisfies
$1\geq |a|>0$. For each value of $h=N^{-1},\, N\in\N$, we
denote by $\{\lambda_j^{(h)}\}_{j=1\ldots h^{-1}}$ the eigenvalues of
$\Mak$, counted with algebraic multiplicity. In the semiclassical
limit $h\to 0$, these eigenvalues are distributed as follows. 
Let us call
$$
a_n: x\mapsto  \prod_{i=1}^n|a\circ\ka^i(x)|^{\fr1n}\,,%\quad
$$
and using the Birkhoff ergodic theorem, define
$$
\eia = \essi \lim_{n\rightarrow\infty}a_n\,,\quad \esa = \esss\lim_{n\rightarrow\infty}a_n\,.
$$
Then the spectrum semiclassically concentrates near an annulus
delimited by the circles of radius $\eia$ and $\esa$: %for any arbitrarily small $\delta>0$,  
\begin{equation}\label{e:strip}
\forall \delta>0\,,\ \lim_{h\rightarrow 0}\,
h\, \#\Big\{1\leq j\leq h^{-1} : \eia -\delta\leq |\lambda_j^{(h)}| \leq \esa +\delta\Big\}= 1\,. 
\end{equation}
\end{thm}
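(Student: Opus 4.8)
The plan is to reduce the statement about eigenvalue moduli to a statement about the singular values (or, better, the eigenvalues of a self-adjoint operator) of high iterates of $\Mak$, and then feed a semiclassical (Egorov-type) argument together with the Birkhoff ergodic theorem. First I would fix $n\geq 1$ and compute $(\Mak)^n$. Since $\Mak=\opn(a)U_h(\kappa)$, conjugating repeatedly and using the exact or approximate Egorov property of $U_h(\kappa)$ we have, up to an $O(h)$ error in operator norm,
\begin{equation}\label{e:iterate}
(\Mak)^n = \opn\!\Big(\prod_{i=1}^n a\circ\kappa^i\Big)\,U_h(\kappa)^n + O_n(h)\,,
\end{equation}
so that $\|(\Mak)^n\| \leq \sup_x \prod_{i=1}^n|a\circ\kappa^i(x)| + O_n(h) = \sup_x a_n(x)^n + O_n(h)$. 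Combined with the spectral radius formula $\max_j|\lambda_j^{(h)}| \leq \|(\Mak)^n\|^{1/n}$, letting $h\to 0$ and then $n\to\infty$ gives the upper bound $\limsup_h \max_j|\lambda_j^{(h)}| \leq \esa$. The matching lower bound on the small-modulus side, $\liminf_h \min_j|\lambda_j^{(h)}| \geq \eia$, follows the same way applied to the inverse: since $|a|\geq a_{\min}>0$, the operator $\Mak$ is invertible with $(\Mak)^{-1}=U_h(\kappa)^{-1}\opn(a)^{-1}$, and $\opn(1/a)$ makes sense by symbolic calculus, so $\|(\Mak)^{-n}\|^{1/n}\to \big(\inf_x a_n^{-1}\big) = \big(\essi \lim a_n\big)^{-1}$; hence $\min_j|\lambda_j^{(h)}|\geq \eia - o(1)$. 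These two bounds already show that \emph{all but $o(h^{-1})$} of the eigenvalues cannot escape far \emph{outside} the annulus on the extreme ends — but that is not yet \eqref{e:strip}: a priori many eigenvalues could cluster on small-modulus exceptional sets inside. The real content is that the \emph{bulk} of the spectrum fills the annulus between $\eia$ and $\esa$, with only $o(h^{-1})$ eigenvalues in the moduli range below $\eia-\delta$, even though $\eia$ itself may be much smaller than $\esa$ when the map is far from ergodic.

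To control the count of eigenvalues with $|\lambda_j^{(h)}|$ small, I would use the trace of a large power. The key identity is
\begin{equation}\label{e:trace}
\Tr\big((\Mak)^{*n}(\Mak)^n\big) = \sum_j s_j^{(h,n)}\,,
\end{equation}
the sum of squared singular values, and separately $\Tr\big((\Mak)^n\big) = \sum_j (\lambda_j^{(h)})^n$. Using \eqref{e:iterate} and the fact that $U_h(\kappa)^n$ is unitary with $\Tr\opn(b)U_h(\kappa)^n$ localized near the fixed points of $\kappa^n$ (a stationary-phase computation, where the ``thin fixed point set'' hypothesis of \S\ref{s:dyn} enters to guarantee the off-diagonal contributions are negligible and the diagonal one is $h^{-1}\int b\,dx + o(h^{-1})$ only when $n=0$; for $n\geq 1$ the trace is $o(h^{-1})$), one gets
\[
h\,\Tr\big((\Mak)^{*n}(\Mak)^n\big) \xrightarrow{h\to 0} \int_{\t2} a_n(x)^{2n}\,dx\,.
\]
Now I would let $n\to\infty$: by the Birkhoff ergodic theorem $a_n\to a_\infty$ a.e., where $a_\infty$ is an invariant function with $\eia \leq a_\infty \leq \esa$, so by dominated convergence $\big(\int a_n^{2n}\big)^{1/2n}\to \esss a_\infty = \esa$, giving the mean-square version of the upper concentration; and $\big(\int a_n^{-2n}\big)^{1/2n}\to (\essi a_\infty)^{-1}=\eia^{-1}$ controls how many eigenvalues can have very small modulus. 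Concretely, the number of $j$ with $|\lambda_j^{(h)}|\leq \eia-\delta$ is at most $(\eia-\delta)^{-2n}\sum_j s_j^{(h,n)}\cdots$ — wait, this direction requires a lower bound on singular values, which is cleaner via the invertibility bound: $\#\{j:|\lambda_j^{(h)}|\leq r\} \leq r^{2n}\,\Tr\big((\Mak)^{-n}(\Mak)^{*-n}\big)$, and $h$ times the right side tends to $r^{2n}\int a_n^{-2n}\,dx$, whose $2n$-th root is $r/\eia$; taking $n\to\infty$ with $r=\eia-\delta$ forces $h\cdot\#\{\cdots\}\to 0$. The symmetric argument with powers of $\Mak$ itself bounds $\#\{j:|\lambda_j^{(h)}|\geq \esa+\delta\}$ by $(\esa+\delta)^{-2n}\Tr\big((\Mak)^{*n}(\Mak)^n\big)$, whose rescaled limit has $2n$-th root $\esa/(\esa+\delta)<1$. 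Since the total number of eigenvalues is exactly $h^{-1}$, subtracting these two $o(h^{-1})$ counts yields \eqref{e:strip}.

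The main obstacle is making \eqref{e:iterate} quantitative enough that the errors survive the \emph{double} limit $h\to 0$ then $n\to\infty$ — and in particular making sure the Egorov remainder $O_n(h)$ and the stationary-phase remainder in the trace are of the form $C_n\,h^{1+\epsilon}$ (or $o_n(h)$) with $C_n$ growing at most like, say, $C^n$, so that after multiplying by $h^{-1}$ and taking $h\to 0$ they vanish for each fixed $n$, and the order of limits is legitimate. This is where the quantization scheme and the control of $\Tr\,\opn(b)U_h(\kappa)^n$ developed in \S\ref{s:setting} do the real work; the ``thin fixed-point set'' assumption is exactly what makes the $n\geq 1$ traces negligible compared to $h^{-1}$. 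A secondary subtlety is handling $\opn(1/a)$: since $1/a$ is smooth and bounded when $|a|\geq a_{\min}>0$, standard symbolic calculus gives $\opn(a)\opn(1/a)=\Id+O(h)$, so $\opn(a)$ is invertible for small $h$ with norm-controlled inverse, which is all that is needed. Everything else — dominated convergence, the spectral radius formula, and the elementary inequalities relating eigenvalue counts to traces of powers — is routine.
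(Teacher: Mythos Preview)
There is a genuine gap. The Chebyshev-type bound you invoke,
\[
\#\big\{j:|\lambda_j^{(h)}|\geq \esa+\delta\big\}\ \leq\ (\esa+\delta)^{-2n}\,\Tr\big(\Mak^{*n}\Mak^n\big)\,,
\]
is correct, and so is $h\,\Tr(\Mak^{*n}\Mak^n)\to\int_{\t2}a_n^{2n}\,d\mu$; but the conclusion does not follow, because it is \emph{not} true in general that $\big(\int a_n^{2n}\big)^{1/2n}\to\esa$. Split the integral over $\{a_n\leq\esa+\delta/2\}$ and its complement: the first piece is harmless, but the second is of order $\mu\big(\{a_n>\esa+\delta/2\}\big)\cdot\big(a_+/(\esa+\delta)\big)^{2n}$, and Birkhoff gives only $\mu(\{a_n>\esa+\delta/2\})\to 0$ with \emph{no rate}, while the second factor may diverge. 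For a concrete obstruction, take $\ka$ Anosov with a fixed point $x_0$ at which $|a(x_0)|=a_+=1$: the set of points remaining in a small ball around $x_0$ for $n$ forward iterates has measure $\sim\lambda_+^{-n}$ (with $\lambda_+$ the unstable expansion factor) and on it $a_n\approx 1$, so $\int a_n^{2n}\gtrsim\lambda_+^{-n}$ and $\liminf_n\big(\int a_n^{2n}\big)^{1/2n}\geq\lambda_+^{-1/2}$, which can easily exceed $\esa=\moy a$. The same defect already spoils your preliminary claim $\limsup_h\max_j|\lambda_j^{(h)}|\leq\esa$: one has $\|\Mak^n\|^{1/n}\to\sup_x a_n(x)$ as $h\to 0$, and $\sup_x a_n(x)$ converges (as $n\to\infty$) to $\sup_\nu\exp\int\log|a|\,d\nu$ over \emph{all} $\ka$-invariant probability measures $\nu$, not just Lebesgue.

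The paper's proof avoids this by decoupling the comparison exponent from the iteration time $n$. It first constructs, via a functional calculus (\S2.3 and Proposition~\ref{p:sn}), the $2n$-th root $\Sn=(\Mak^{*n}\Mak^n)^{1/2n}=\opn(a_n)+\cO(h^\sigma)$, and then applies a Weyl law (Proposition~\ref{p:Wlaw}) to $\Sn$ itself:
\[
h\,\#\big\{s_j^{(n)}>\esa+\gamma\big\}\ \leq\ \mu\big(\{a_n>\esa+\gamma^-\}\big)+o_h(1)\,.
\]
This uses the full eigenvalue distribution of $\Sn$ rather than a single trace, so only Birkhoff convergence in measure is needed, with no rate. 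Combined with the Weyl inequality $\sum_{k\leq d_h}|\lambda_k|\leq\sum_{k\leq d_h}s_k^{(n)}$ and a split of the $s_k^{(n)}$ at the \emph{fixed} level $\esa+\gamma$, one gets $h\,d_h\leq\gamma/\delta+C\,\mu(\{a_n>\esa+\gamma^-\})+o_h(1)$, after which $n\to\infty$ and then $\gamma\to 0$ finish. Your trace idea can be repaired by using $\Tr\Sn^{2m}$ with $m$ \emph{fixed} (so that $h\,\Tr\Sn^{2m}\to\int a_n^{2m}\to\int a_\infty^{2m}\leq\esa^{2m}$ by dominated convergence) and sending $m\to\infty$ only after $n\to\infty$; but making semiclassical sense of $\Sn^{2m}$ is precisely the functional-calculus work the paper does, so nothing is saved.
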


\medskip

Suppose now that $\ka$ is ergodic with respect to the Lebesgue measure $\mu$ on $\t2$, and denote 
$$\moy a\defeq\exp\Big\{\int_{\t2} \log |a|d\mu\Big\}\quad\text{the geometric mean of $|a|$ on $\t2$}\,.$$ 
In this case, the spectrum concentrates near the circle of radius $\moy a$ and  the arguments of the eigenvalues become homogeneously
distributed over $\SP^1\simeq [0,1)$:
\begin{thm}\label{th:ergo}
If $\ka$ is ergodic with respect to the Lebesgue measure, 
\begin{align*}\label{e:arg}
&(i)\quad \forall\delta>0,\qquad \lim_{h\rightarrow 0}\,h\,\#\Big\{1\leq j\leq h^{-1} : \left| |\lambda_j^{(h)}| -\moy a \right|
\leq \delta \Big\}= 1\,,\\
&(ii)\quad  \forall f\in C^0(\SP^1),\qquad
\lim_{h\rightarrow0}\,h\sum_{j=1}^{h^{-1}} f \Big(\fr{\arg
\lambda_j^{(h)}}{2\pi} \Big)=\int_{\SP^1}f(t)\,dt\,.
\end{align*}
\end{thm}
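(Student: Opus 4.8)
The plan is to derive both statements from Theorem~\ref{th:strip} together with a trace-based equidistribution argument. For part $(i)$: since $\ka$ is ergodic with respect to Lebesgue measure $\mu$, the Birkhoff ergodic theorem gives that $a_n\to \moy a$ almost everywhere, so $\eia=\esa=\moy a$; indeed $\lim_n \frac1n\sum_{i=1}^n \log|a\circ\ka^i| = \int_{\t2}\log|a|\,d\mu$ $\mu$-a.e., hence $\lim_n a_n = \exp\{\int\log|a|\,d\mu\} = \moy a$ on a set of full measure, and both the essential infimum and essential supremum of this constant limit equal $\moy a$. Plugging $\eia=\esa=\moy a$ into \eqref{e:strip} collapses the annulus to the single circle of radius $\moy a$ and yields $(i)$ immediately, since for every $\delta>0$ the set $\{j:\bigl||\lambda_j^{(h)}|-\moy a\bigr|\le\delta\}$ contains $\{j: \moy a-\delta\le |\lambda_j^{(h)}|\le \moy a+\delta\}$.

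For part $(ii)$, the idea is to test the argument distribution against the characters $e_k(t)=e^{2\pi\i k t}$ for $k\in\Z$, which span a dense subspace of $C^0(\SP^1)$ and for which the right-hand side is $\int_{\SP^1} e_k(t)\,dt=\delta_{k,0}$. For $k=0$ the claim is just $h\cdot h^{-1}=1$. For $k\neq 0$ we must show $h\sum_j \bigl(\lambda_j^{(h)}/|\lambda_j^{(h)}|\bigr)^k \to 0$. The natural route is to relate this normalized-phase power sum to a genuine trace: since $\Tr (\Mak)^k = \sum_j (\lambda_j^{(h)})^k$, one needs to control the difference between $(\lambda_j^{(h)})^k$ and $(\lambda_j^{(h)}/|\lambda_j^{(h)}|)^k$, which is governed by how $|\lambda_j^{(h)}|$ deviates from $\moy a$. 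By part $(i)$ the bulk of eigenvalues have modulus within $\delta$ of $\moy a$, and the remaining ones are $o(h^{-1})$ in number and have modulus in $[0,1]$, so $h\sum_j |\lambda_j^{(h)}|^k$ can be pinned down up to $O(\delta)$. Thus it suffices to show $h\,\Tr(\Mak)^k = h\,\Tr\bigl((\opn(a)U_h(\ka))^k\bigr)\to 0$ for each fixed $k\neq 0$ (after suitable renormalization by $\moy a^k$). This is where the semiclassical and dynamical input enters: $(\Mak)^k = \opn(a)U_h\opn(a)U_h\cdots$, and by Egorov's theorem $U_h^{-i}\opn(a)U_h^i = \opn(a\circ\ka^i)+O(h)$, so $(\Mak)^k = \opn\bigl(\prod_{i=1}^k a\circ\ka^i\bigr)\,U_h(\ka)^k + O(h)$ in operator norm (times a bounded factor), and $U_h(\ka)^k$ is, up to a scalar and $O(h)$, the quantization $U_h(\ka^k)$ of the iterate. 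The trace of $\opn(b)U_h(\ka^k)$ is then a semiclassical oscillatory sum localized near the fixed points of $\ka^k$; under the standing ``thin fixed-point set'' hypothesis from \S\ref{s:dyn}, this trace is $o(h^{-1})$, which gives the claim.

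The main obstacle is the last step: making the trace estimate $h\,\Tr\bigl(\opn(b)\,U_h(\ka^k)\bigr)\to 0$ rigorous. One must quantify the stationary-phase contribution of each fixed point of $\ka^k$ (each contributing $O(1)$ rather than $O(h^{-1})$, provided the fixed points are nondegenerate, i.e. $\det(I-d\ka^k)\neq 0$) and then use the thinness hypothesis to ensure that the total number of fixed points, suitably weighted, is $o(h^{-1})$ — equivalently that the fixed-point sets carry zero Lebesgue measure in a uniform way. Care is also needed because $\Mak$ is subunitary and non-normal, so $\sum_j|\lambda_j^{(h)}|^2$ need not equal $\|\Mak\|_{HS}^2$ in any useful way; one circumvents this by working only with $\Tr(\Mak)^k$ (holomorphic in the entries, hence a pure eigenvalue sum) and never with antiholomorphic traces. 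Finally, one checks that controlling the countably many characters $e_k$ suffices: a standard Weierstrass/Stone–Weierstrass and $3\varepsilon$ argument upgrades convergence on trigonometric polynomials, together with the uniform mass bound $h\cdot\#\{j\}=1$, to convergence against every $f\in C^0(\SP^1)$.
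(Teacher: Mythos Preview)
Your overall strategy matches the paper's: part $(i)$ follows immediately from Theorem~\ref{th:strip} and Birkhoff's theorem, and part $(ii)$ proceeds by reducing to characters $e_k$, proving $h\,\Tr(\Mak^k)\to 0$ for each fixed $k\neq 0$, and bridging the gap between $\lambda_j^k$ and $e^{2i\pi k\theta_j}$ via part $(i)$. The paper writes this bridge exactly as your renormalization by $\moy a^k$, namely
\[
h\sum_j e^{2i\pi k\theta_j} \;=\; \frac{h}{\moy a^{k}}\,\Tr(\Mak^k)\;+\;h\sum_j\frac{\moy a^{k}-r_j^{k}}{\moy a^{k}}\,e^{2i\pi k\theta_j},
\]
and controls the second sum using $(i)$. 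The density/Stone--Weierstrass upgrade to general $f\in C^0(\SP^1)$ is also identical.

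The one substantive difference is how you propose to establish the trace estimate $h\,\Tr\bigl(\opn(b)U_h(\ka)^k\bigr)\to 0$. You suggest stationary phase at the fixed points of $\ka^k$, which requires nondegeneracy $\det(I-d\ka^k)\neq 0$ and effectively isolated fixed points; the ``thinness'' hypothesis (Minkowski content zero) does not by itself bound the \emph{number} of fixed points, so this route does not quite close under ergodicity alone. The paper instead uses a soft partition-of-unity argument (Proposition~\ref{p:traces}): cover $\t2$ by a small-measure open set $D_0\supset\Fix(\ka^k)$ (available because ergodicity forces $\Fix(\ka^k)$ to have Minkowski content zero, Proposition~\ref{p:Minkowski}) together with finitely many $D_i$ satisfying $\ka^k(D_i)\cap D_i=\emptyset$. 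The $D_0$-contribution is bounded by $\mu(D_0)$ via a crude trace estimate, and each $D_i$-contribution vanishes up to $O(h)$ after an Egorov/cyclicity manipulation because $\chi_i\,(\chi_i\circ\ka^{-k})\equiv 0$. This avoids any stationary-phase analysis or nondegeneracy assumption and works under the bare ergodicity hypothesis. Your stationary-phase argument would succeed in the Anosov case (where periodic points are isolated and nondegenerate), but to match the theorem as stated you should replace it by the partition-of-unity trick.
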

Note that $(i)$ is an immediate consequence of \thmref{th:strip}, since  the Birkhoff ergodic theorem states that $\eia=\esa=\moy a$ if $\ka$ is ergodic with respect to  $\mu$. 
We remark that the ergodicity of $\ka$ ensures that the spectrum of
the \emph{unitary} quantum maps $U_h(\ka)$ become uniformly distributed as
$h\to 0$ \cite{bdb2,mok}. Actually, for this property to hold
one only needs a weaker assumption, contained in
Proposition \ref{p:Minkowski}.

If we now suppose that $\ka$ has the Anosov property (which implies ergodicity), we can estimate more precisely the behavior of the spectrum as it concentrates around the circle of radius $\moy a$ in the semiclassical limit.
\begin{thm}\label{th:width}
Suppose that $\ka$ is Anosov. Then, for any $\eps>0$,
\begin{equation}\label{e:width}
\lim_{h\rightarrow 0}\,h\,\#\left\{1\leq j\leq h^{-1} :
\left||\lambda_j^{(h)}|-\moy a\right|\leq \lp \fr{1}{\log
h^{-1}}\rp^{\fr{1}{2}-\eps}\right\}= 1\,.
\end{equation}
\end{thm}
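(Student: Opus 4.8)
\emph{Proof strategy.} Part $(i)$ of \thmref{th:ergo} already gives \eqref{e:width} for any \emph{fixed} radius; the point is to let the radius $\rho=\rho(h)\defeq(\log h^{-1})^{-1/2+\eps}$ shrink to $0$, which is the scale of the central limit theorem for Birkhoff sums sampled over an Ehrenfest time. We may assume $\moy a<1$ (otherwise $|a|\equiv1$, $\Mak$ is unitary up to $O(h)$, and the statement is elementary), and we write $c_0\defeq|\log\moy a|>0$. Since replacing $\rho(h)$ by $C\rho(h)$ for a fixed constant $C$ is harmless (run the argument with $\eps/2$ in place of $\eps$ and use $C(\log h^{-1})^{-1/2+\eps/2}\le(\log h^{-1})^{-1/2+\eps}$ for $h$ small), it suffices to prove $\#\{j:|\lambda_j^{(h)}|>\moy a+\rho\}=o(h^{-1})$ and, symmetrically, $\#\{j:|\lambda_j^{(h)}|<\moy a-\rho\}=o(h^{-1})$.

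The plan for the upper bound is to pass to a power $\Mak^{\,n}$ with $n=n(h)\defeq\lfloor K\log h^{-1}\rfloor$, $K>0$ a small constant (constrained below in terms of the top Lyapunov exponent $\lambda_{\max}$ of $\ka$ and of $c_0$), and to control its singular values. Setting $A_n\defeq\prod_{i=1}^{n}a\circ\ka^{i}$ so that $|A_n|=a_n^{\,n}$ for the $a_n$ of \thmref{th:strip}, the quantitative Egorov theorem up to the Ehrenfest time gives
\[
\Mak^{\,n}=U_h(\ka^{n})\,\opn(A_n)+\mathcal E_n ,
\]
where $A_n\in S_\delta(\t2)$ uniformly in $h$ with $\delta\simeq K\lambda_{\max}<1/2$, and $\|\mathcal E_n\|=O(h^{1-2\delta})$. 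As $U_h(\ka^{n})$ is unitary and $\opn(A_n)^*\opn(A_n)=\opn(a_n^{\,2n})+O(h^{1-2\delta})$, the eigenvalues $\mu_1\ge\mu_2\ge\cdots$ of the self-adjoint operator $\opn(a_n^{\,2n})$ satisfy $|s_j(\Mak^{\,n})^{2}-\mu_j|=O(h^{1-2\delta})$ uniformly in $j$. Let $R(h)\defeq\#\{j:s_j(\Mak^{\,n})>(\moy a+\rho/2)^{\,n}\}$. If $K$ is small enough that $h^{1-2\delta}$ is negligible beside $(\moy a+\rho/2)^{2n}-(\moy a+\rho/4)^{2n}\simeq n\rho\,\moy a^{\,2n}$, then the comparison above together with a sharp-Gårding bound reduces the estimate of $R(h)$ to counting eigenvalues of $\opn(a_n^{\,2n})$ above $(\moy a+\rho/4)^{2n}$, and the semiclassical Weyl upper bound for this operator -- whose error terms are all powers of $h$ when $K$ is small -- gives $R(h)\le h^{-1}\mu(\{a_n\ge\moy a+\rho/4\})+O(h^{-1+\beta})$ for some $\beta=\beta(K)>0$. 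Now $\ka$ being symplectic and Anosov, Lebesgue measure $\mu$ is its SRB measure, so $\log|a|$ obeys the moderate-deviation bound $\mu(\{a_n\ge\moy a+t\})\le e^{-c\,nt^{2}}$ in the regime $t\to0$, $nt^{2}\to\infty$ (for $\sigma^{2}=0$ the set is even empty when $h$ is small). Here $n\rho^{2}\simeq K(\log h^{-1})^{2\eps}\to\infty$, so $\mu(\{a_n\ge\moy a+\rho/4\})\le e^{-c'(\log h^{-1})^{2\eps}}$, which decays faster than any power of $(\log h^{-1})^{-1}$; combined with $\beta>0$ this yields $R(h)=o(h^{-1}\rho(h))$.

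To pass from singular values to eigenvalue moduli, order the eigenvalues of $\Mak$ by decreasing modulus, so that those of $\Mak^{\,n}$ are $(\lambda_j^{(h)})^{n}$ in the same order. By the multiplicative Weyl inequality between eigenvalues and singular values and by $\|\Mak^{\,n}\|\le\|\Mak\|^{\,n}\le1$, for every $k>R(h)$
\[
|\lambda_k^{(h)}|^{\,nk}\le\prod_{j=1}^{k}|\lambda_j^{(h)}|^{\,n}\le\prod_{j=1}^{k}s_j(\Mak^{\,n})\le(\moy a+\rho/2)^{\,n(k-R(h))},
\]
hence $|\lambda_k^{(h)}|\le(\moy a+\rho/2)^{\,1-R(h)/k}\le\moy a+(c_0+2)\rho$ as soon as $R(h)/k\le\rho/2$. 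Since $R(h)=o(h^{-1}\rho(h))$, this holds for all but $o(h^{-1})$ indices $k$, giving $\#\{j:|\lambda_j^{(h)}|>\moy a+(c_0+2)\rho\}=o(h^{-1})$. For the lower bound, run the same analysis on $\Mak^{-1}$: since $|a|\ge a_{\min}>0$, $\opn(a)$ is invertible for $h$ small, $\|\Mak^{-1}\|\le C_a$ for a constant depending only on $a_{\min}$, and $\Mak^{-1}=\opn\big((1/a)\circ\ka\big)\,U_h(\ka^{-1})+O(h)$ is, up to an inessential rescaling of the symbol, a map of the same type for the Anosov diffeomorphism $\ka^{-1}$, whose damping has geometric mean $1/\moy a$. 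The argument above (with $\|\Mak^{-n}\|\le C_a^{\,n}$ replacing $\|\Mak^{\,n}\|\le1$, which only affects constants) then shows that all but $o(h^{-1})$ eigenvalues $1/\lambda_j^{(h)}$ of $\Mak^{-1}$ have modulus $\le1/\moy a+C'\rho$, i.e.\ all but $o(h^{-1})$ eigenvalues of $\Mak$ satisfy $|\lambda_j^{(h)}|\ge\moy a-C''\rho$. Together with the upper bound, the harmless absorption of constants, and the trivial $h\,\#\{\cdots\}\le1$, this proves \eqref{e:width}.

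\emph{Main obstacle.} The crux is the singular-value analysis of $\Mak^{\,n}$: because $\rho(h)$ is only \emph{logarithmically} small, each semiclassical error -- in Egorov, in the symbol product $\bar A_n\#A_n$, and in the functional-calculus/eigenvalue-counting step for $\opn(a_n^{\,2n})$ -- must be a genuine \emph{power} of $h$ below the relevant scales; merely $o(h^{-1})$ errors would not do. This forces one to run the full pseudodifferential calculus uniformly over the Ehrenfest time $n\simeq K\log h^{-1}$ for the symbol $A_n=\prod_{i=1}^{n}a\circ\ka^{i}$, whose derivatives grow like $e^{n\lambda_{\max}}\simeq h^{-K\lambda_{\max}}$, and to choose $K$ small enough -- in terms of $\lambda_{\max}$ and $c_0=|\log\moy a|$ -- that all these powers stay positive. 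It is exactly the resulting margin $n\rho^{2}\to\infty$, together with the central limit theorem for $\log|a|$ along the dynamics, that produces the exponent $1/2$ in \eqref{e:width}.
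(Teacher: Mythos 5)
Your proof is correct in its broad outline and captures the right mechanisms — Egorov up to an Ehrenfest time, reduction to a pseudodifferential self-adjoint operator, Weyl inequalities between eigenvalues and singular values, inverse propagator for the lower bound — but the execution is genuinely different from the paper's, and in one place it quietly trades a cheap tool for an expensive one.

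The paper does not work with $\opn(a_n^{2n})$. It first forms $\ell\mathcal S_n(a)=\frac{1}{2n}\log(\Mak^{\dagger n}\Mak^n)$ and shows, via the $w$--admissible functional calculus of Proposition~\ref{p:calcfunct} and Corollary~\ref{c:calcfunctpert}, that $\ell\mathcal S_n(a)=\opn(\ell a_n)+\mathcal O(h^{\sigma^-})$. The point of taking the logarithm (and dividing by $2n$) is that the resulting symbol $\ell a_n=\frac1n\sum\log|a\circ\kappa^i|$ stays of size $O(1)$ uniformly in $n\leq n_\tau$, so the thresholds you count at are $h$--independent and the semiclassical error terms are easily seen to be positive powers of $h$. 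In your version, the symbol $a_n^{2n}\approx\moy a^{2n}=h^{2Kc_0}$ and the cutoff threshold $(\moy a+\rho/4)^{2n}$ both decay polynomially with $h$; one then has to run the Weyl counting with a cutoff whose derivatives blow up like $h^{-2Kc_0\cdot|\alpha|}$ and verify that this still fits inside the admissible class of Proposition~\ref{p:calcfunct} — essentially reproving the content of the paper's Proposition~\ref{p:sn} by hand. This is doable for $K$ small, but it is precisely the bookkeeping the paper avoids by taking the logarithm first.

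The more substantive difference is the dynamical input. You estimate $R(h)\lesssim h^{-1}\mu(\{a_n\ge\moy a+\rho/4\})$ and then need $R(h)=o(h^{-1}\rho(h))$ in order for the multiplicative Weyl inequality to control $\#\{k: R(h)/k>\rho/2\}$. A Chebyshev/second-moment bound $\mu(\{x_n\ge\rho/4\})\lesssim 1/(n\rho^2)\sim(\log h^{-1})^{-2\eps}$ is \emph{not} enough here: it gives $R(h)/(h^{-1}\rho(h))\sim(\log h^{-1})^{1/2-3\eps}$, which does not tend to zero for small $\eps$. This is why you have to reach for the moderate deviation principle ($\mu(\{x_n\ge t\})\le e^{-cnt^2}$ in the regime $t\to0$, $nt^2\to\infty$) to make $R(h)$ decay fast enough. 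That is a correct but considerably heavier dynamical result than what the paper uses. The paper's Lemma~\ref{l:decay} only needs exponential decay of correlations to get $\E(n\,x_n^2)=O(1)$, a plain Chebyshev ingredient. The trace is then taken against a smooth function $\mathcal I_h$ that is \emph{quadratic near the origin}, satisfying $\mathcal I_h(x)\lesssim x^2\sqrt n$ on its support; this lets the second moment plug in directly and gives $h\,d_h\le\gamma_h/\delta_h+\mathcal O(1/(\delta_h\sqrt{\log h^{-1}}))\to 0$ with the stated exponents, no deviations estimate needed. In short: your route works but demands moderate deviations and a more delicate Weyl count; the paper's route is more economical because the logarithmic reduction keeps the symbol bounded, and the quadratic cutoff trick turns a second moment estimate into the needed counting bound. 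It would be worth pointing out in your argument that the stronger concentration hypothesis is genuinely required by your multiplicative-Weyl/threshold scheme, and not just invoked out of convenience.
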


This theorem is our main result, and is proven in $\S$\ref{s:anosov}. It relies principally on the knowledge of the rate of convergence of the function $a_n$ to $\moy a$ as $n\to\infty$, when $\ka$ is Anosov. 

In general, for chaotic maps the radial distribution of the spectrum around $\moy a$ 
does not shrink to 0 in the semiclassical limit, as numerical and analytical studies indicate \cite{al}.  Toward this direction, we estimate the  number of  ``large'' eigenvalues, i.e. the subset of the spectrum that stay at a finite distance $c>0$ from $\moy a$, as $h\to0$:  we show that this number is bounded by $h^{\nu-1}$, where $0<\nu<1$ can be seen as a ``fractal'' exponent and depend on both $a$ and $\ka$. One more time, we use for this purpose  information about the ``probability''  for the function $ a_n$ to take values away from  $ \moy a$, as $n$ becomes large. This involves \emph{large deviations properties} for $a_n$, which are  usually expressed in terms of a \emph{rate function} $I\geq0$ (see $\S$\ref{s:anosov}) depending on both $a$ and $\ka$.  For $d>0$, considering the  interval  $[\log \moy a +d, \infty[$, one has 
$$
\lim_{n\to\infty}\fr{1}{n}\log \mu\{x: \log a_n(x) \geq d +  \log \moy a\}=-I(d)\,.
$$
Our  last result then the takes the form:
\begin{thm}\label{th:large}
Suppose as before that $\ka$ is Anosov, and choose $c>0$. Define 
$$\Gamma=\log(\sup_x
\norm{D\ka|_x})\,,\quad \lc=\log (1+c/\moy a)\,,\quad a_-=\min_{\t2}|a|. $$
For any
constant $C>0$ and $\eps>0$ arbitrary small,  denote
$C^\pm=C\pm\eps$, and set $T_{a,\ka}=(2\Gamma - 12\log a_-)^{-1}$. 
Then, if $I$ denotes the rate function associated to $a$ and $\ka$, we have
\begin{equation*}
 h\,\#\{1\leq j\leq h^{-1}:|\lambda_j^{(h)}|\geq \moy a + c\}=\cO(h^{\nu_{a,\ka}(c)})\,,
\end{equation*}
where $\nu_{a,\ka}(c)=\fr{I(\lc^-)T_{a,\ka}^-}{1+I(\lc^-)T_{a,\ka}^-}$.
\end{thm}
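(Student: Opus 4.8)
The plan is to count eigenvalues of $\Mak$ with modulus $\geq\moy a+c$ by bounding, for a well-chosen time $n=n(h)$ growing logarithmically in $h^{-1}$, the quantity $\Tr\big((\Makn)^{\,}(\Makn)^*\big)^{1/2}$-type trace norms, or more directly by estimating $\|\Mak^n\Pi\|$ where $\Pi$ projects onto the ``large eigenvalue'' subspace. The key link between classical and quantum data is that $\Mak^n=\opn(a_n^{(n)})\,U_h(\ka^n)$ up to an error controlled by the Egorov theorem, where $a_n^{(n)}=\prod_{i=1}^n a\circ\ka^i$, so $\log|a_n^{(n)}|=n\log a_n$ has a distribution governed by the large deviations rate function $I$. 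Concretely, if $\lambda$ is an eigenvalue with $|\lambda|\geq\moy a+c$, then $|\lambda|^n\leq\|\Mak^n\|_{\text{on the eigenspace}}$, and a functional-calculus/Chebyshev argument gives
\begin{equation*}
\#\{j:|\lambda_j^{(h)}|\geq\moy a+c\}\ \leq\ (\moy a+c)^{-2n}\,\Tr\big(\Mak^n(\Mak^n)^*\big)\,.
\end{equation*}
The right-hand trace is $\approx h^{-1}\int_{\t2}|a_n^{(n)}|^2\,d\mu=h^{-1}\int e^{2n\log a_n}\,d\mu$, which by Laplace/Varadhan asymptotics behaves like $h^{-1}e^{n\varphi}$ for some $\varphi$ determined by $I$; choosing $n$ so that $(\moy a+c)^{-2n}h^{-1}e^{n\varphi}$ is a power $h^{\nu-1}$ of $h$ yields the bound.

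The first step is to make the Egorov approximation quantitative up to the Ehrenfest-type time: one shows $\Mak^n=\opn(a_n^{(n)})U_h(\ka^n)+R_n$ with $\|R_n\|=\cO(h^{\infty})$ (or at least $\cO(h/\text{something})$) as long as $n\leq T_{a,\ka}\log h^{-1}$ with $T_{a,\ka}=(2\Gamma-12\log a_-)^{-1}$; the constant $12\log a_-$ and the factor $2\Gamma$ come precisely from controlling how the symbol $a_n^{(n)}$, together with its derivatives (which grow like $e^{n\Gamma}$ along unstable directions), remains in a usable symbol class after $n$ compositions, and from the loss incurred in iterating the product formula $n$ times with a subunitary operator of norm possibly close to $a_-^{\#}$. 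This is the step I expect to be the main obstacle: one must track derivatives of $a_n^{(n)}$ carefully and verify that the remainder in the composition $\opn(a)U_h \cdots$ iterated $n$ times stays negligible, which forces the specific admissible time $T_{a,\ka}$.

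The second step is the large-deviation estimate: split $\t2=G_n\cup B_n$ where on $G_n$ one has $\log a_n\leq \lc^-/(\text{const})$ (``good'' set, where $|a_n^{(n)}|$ is small, i.e. $\leq(\moy a+c)^{n(1-o(1))}$ roughly) and $B_n$ is the ``bad'' set where $a_n$ is anomalously large, with $\mu(B_n)\leq e^{-nI(\lc^-)}$ up to subexponential corrections, using the stated large deviations upper bound for $a_n$. Microlocalizing $\Mak^n$ against a smooth partition subordinate to $G_n,B_n$, the contribution of the good part to the count of large eigenvalues vanishes (the operator is too small there), so the number of large eigenvalues is at most the rank of the quantization of the bad set, which is $\cO(h^{-1}\mu(B_n))=\cO(h^{-1}e^{-nI(\lc^-)})$ — here one invokes the standard fact that a projector-like operator quantizing a set of measure $m$ has rank $\cO(h^{-1}m)$, as in the proof of \thmref{th:strip}.

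The third step is optimization of $n=n(h)$. Writing $n=t\log h^{-1}$, the two competing bounds are: the Egorov remainder forces $t\leq T_{a,\ka}^-$, and the count is $\cO(h^{-1}e^{-nI(\lc^-)})=\cO(h^{-1+tI(\lc^-)})=\cO(h^{tI(\lc^-)-1})$. One wants to also absorb the error terms coming from the ``good'' set and from the $\cO(h^\infty)$ remainder, so the honest bound after balancing is a count of order $h^{\nu-1}$ with
\begin{equation*}
\nu_{a,\ka}(c)=\frac{I(\lc^-)\,T_{a,\ka}^-}{1+I(\lc^-)\,T_{a,\ka}^-}\,,
\end{equation*}
which is exactly the exponent in the statement — the form $\frac{x}{1+x}$ with $x=I(\lc^-)T_{a,\ka}^-$ is the signature of balancing a term $h^{-1}$ against a gain $h^{x\cdot(1-\nu)}$ while simultaneously needing the remainder $h^{c'(1-\nu)}$ to beat $h^{\nu-1}$, forcing $\nu-1=-(1-\nu)x$, i.e. $\nu=x/(1+x)$. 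Finally one checks that all replacements of $C$ by $C^\pm=C\pm\eps$ and $\lc$ by $\lc^-=\lc-\eps$ are harmless because $I$ and the construction are monotone/continuous in these parameters, and that the $\cO$ is uniform in $h$. Throughout I would reuse the counting scheme and the quantization estimates already used for \thmref{th:strip} and \thmref{th:width}, so the genuinely new content is the quantitative Egorov time bookkeeping and the insertion of the large-deviation bound in place of the pointwise rate-of-convergence estimate used for \thmref{th:width}.
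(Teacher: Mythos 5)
Your overall plan — iterate Egorov up to a logarithmic time, insert the large deviation bound for $a_n$, then optimize the time — is the right one, and the ``good/bad set'' decomposition you sketch in step two is essentially what the paper does via the Weyl inequalities on $\lsn(a)$ together with the functional calculus (Corollary~\ref{c:calcfunctpert}). However, there is a genuine gap in steps one and three that makes your derivation of the exponent $\nu_{a,\ka}(c)$ incorrect in substance, even though you guess the right final formula.

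The crucial point you miss is that the Egorov/symbolic-calculus remainder in Proposition~\ref{p:sn} is \emph{not} $\cO(h^\infty)$: after $n=n_\tau\approx\tau\log h^{-1}$ iterations one only gets $\Sn(a)=\opn(a_n)+\Ohh(h^{\sigma^-})$ with $\sigma=1-\tau/T_{a,\ka}$, and this exponent \emph{degrades to zero} as $\tau\nearrow T_{a,\ka}$. There is no hard cutoff at $T_{a,\ka}$ with a negligible remainder; rather, there is a soft trade-off between the large-deviation gain $h^{\tau I(\lc^-)}$ and the functional-calculus error $h^{1-\tau/T_{a,\ka}}$. Because you treat $T_{a,\ka}$ as a sharp admissible-time bound with an $\cO(h^\infty)$ error, your optimization would give $\nu=T_{a,\ka}I(\lc^-)$ by simply setting $\tau=T_{a,\ka}$, which is not the stated exponent. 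The correct balance is $\tau I(\lc^-)=1-\tau/T_{a,\ka}^-$, giving $\tau_c=T_{a,\ka}^-/(1+I(\lc^-)T_{a,\ka}^-)$ and $\nu=\tau_c I(\lc^-)$. Your displayed algebra ``forcing $\nu-1=-(1-\nu)x$'' is also incorrect (it literally yields $\nu=1$); the balance is between the two exponents $\tau I$ and $1-\tau/T$, not between $\nu-1$ and $-(1-\nu)x$.

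A secondary, smaller issue: your Chebyshev-type bound $\#\{|\lambda_j|\geq\moy a+c\}\leq(\moy a+c)^{-2n}\Tr(\Mak^n(\Mak^n)^{\dg})$ is valid for non-normal operators only via Weyl log-majorization, and combining it with a Laplace/Varadhan evaluation of $\int e^{2n\ell a_n}d\mu$ gives the exponent $2\lc-\sup_y(2y-I(y))$, which by convexity of $I$ is $\leq I(\lc)$ and in general strictly smaller. The paper avoids this loss by applying the large-deviation tail estimate $P_{x_n}([\lc-2\rho,\infty))=\cO(e^{-nI(\lc-2\rho)})$ directly to the trace of a smoothed indicator of $\opn(\lan-\la)$, after cutting at a threshold $\lc-\rho$ exactly as in the proof of Theorem~\ref{th:strip}. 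Your ``good/bad set'' paragraph does amount to this, so the essential repair is to replace the $\cO(h^\infty)$ Egorov claim with the honest $h^{1-\tau/T_{a,\ka}}$ remainder and then carry out the correct two-term optimization.
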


At this time, we do not know if the upper bound in Theorem
\ref{th:large} is optimal. The proof of the preceding result involves the use of an  evolution time of order $n_\tau\approx \tau \log h^{-1}$, similar to an Ehrenfest time, up to which the quantum to classical correspondance -- also known as Egorov theorem -- is valid. The above bound is  optimal in the sense that a particular choice of $\tau=\tau_c$  makes minimal the bound we can obtain. It is given by 
$$
\tau_c\defeq\fr{T_{a,\ka}^-}{1+I(\lc^-)T_{a,\ka}^-}\,.
$$
It is remarkable that in our setting, a small Ehrenfest time does not gives any relevant bound, but a large Ehrenfest time does not provide an optimal bound either, because the remainder terms in the Egorov theorem become too large. 

We also find interesting to note that in the context of the damped wave equation, a result equivalent to Theorems 
\ref{th:strip} and \ref{th:ergo} was obtained by Sj\"ostrand under the assumption
that the geodesic flow is ergodic \cite{sjo}.  But to our knowledge, no
results similar to \thmref{th:width} are known in this framework. 
 Concerning Theorem \ref{th:large}, a   comparable result has been announced  very
recently in the case of the damped wave equation on manifolds of negative curvature \cite{ana},  but working with flows on manifolds  add  technical complications compared to
our framework.  Although, it could be appealing to compare the nature of the upper bounds obtained in these two formalisms, and in both cases, it is still an open question to know if any lower bound could be determined for the number of eigenvalues larger than $\moy a+c$, in the semiclassical limit. We also note  that this ``fractal Weyl law'' is different from the fractal law for resonances presented in \cite{sz}, although the two systems share some similarities.

In the whole paper, we discuss quantum maps on the 2-torus $\t2$.
The generalization to the $2n-$dimensional torus, or any
``reasonable'' compact phase space does not present any new
difficulty, provided a quantization can be constructed on it, in the
spirit of \cite{mok}. 

In section \ref{s:setting}, we introduce the general setting of quantum mechanics and pseudodifferential calculus on the torus. Theorems \ref{th:strip} and \ref{th:ergo}, together with some  intermediate results are proved in section \ref{s:density}, while the  Anosov case is treated in section \ref{s:anosov}. 
In section \ref{s:numeric}, we present numerical calculations of the
spectrum of such maps to illustrate theorems \ref{th:strip}, \ref{th:ergo} and  \ref{th:width}. The observable $a$ is chosen somewhat
arbitrarily (with $|a|>0$), while $\ka$ is a well-known perturbed cat map. 

%%%%%%%%%%%%%%%%%%%%%%%%%%%%%%%%%%%%%%%%%%%%%%%%%%%%%%%%%
\section{Quantum mechanics on the torus $\t2$}\label{s:setting}
%%%%%%%%%%%%%%%%%%%%%%%%%%%%%%%%%%%%%%%%%%%%%%%%%%%%%%%%%
We briefly recall the setting of quantum mechanics on the 2-torus.
We refer to the literature for a more detailed presentation
\cite{hb,bdb1,deg}.

\subsection{The quantum torus}
When the classical phase space is the torus
$\t2\defeq\{x=(q,p)\in(\R/\Z)^2\}$, one can define a corresponding
quantum space by imposing periodicity conditions in position and
momentum on wave functions. When Planck's constant takes the
discrete values $\hbar=(2\pi N)^{-1}$, $N\in\N$, these conditions
yield a subspace of finite dimension $N$, which we will denote by
$\hn$. This space can be equipped with a ``natural'' hermitian
scalar product.

We begin by fixing the notations for the $\hbar$-Fourier transform
on $\R$, which maps position to momentum. Let $\cS$ denote the
Schwartz space of functions and $\cS'$ its dual, i.e. the space of
tempered distributions. The $\hbar$-Fourier transform of any
$\psi\in \cS' (\R)$ is defined as
$$
F_\h\psi(p)=\fr{1}{\sqrt{2\pi\h}}\int\psi(q)e^{-\fr{i}{\h}qp}dq\,.
$$
A wave function on the torus is a distribution periodic in both
position and momentum: 
\begin{equation*}%\label{period}
\psi(q+1)=e^{2i\pi\theta_2}\psi(q),\ \
F_\h\psi(p+1)=e^{2i\pi\theta_1}F_\h\psi(p) \,.
\end{equation*}
Such distributions
can be nontrivial iff $\h=(2\pi N)^{-1}$ for some $N\in\N$, in which
case they form a subspace $\hn$ of dimension $N$. For simplicity we
will take here $\theta_1=\theta_2=0$. Then this space admits a
``position'' basis $\{\ket{e_j} : j\in\Z/N\Z\}$, where
\begin{equation}\label{basis}
 \bra q e_j\rangle =\fr{1}{\sqrt
N}\sum_{\nu\in\Z}\delta(q-\nu-j/N)\,. 
\end{equation}
A ``natural'' hermitian
product on $\hn$ makes this basis orthonormal: 
\begin{equation}\label{inner}
\bra {e_j}e_k\rangle=\delta_{jk},\quad j,k\in \Z/N\Z\,,
\end{equation}
and we will denote by $\|\cdot\|$ the corresponding norm.

Let us now describe the quantization of observables on the torus. We
start from pseudodifferential operators on $L^2(\R)$ \cite{gs,ds}. To
any $f\in\cS(T^*\R)$ is associated its $\h-$Weyl quantization, that
is the operator $f^w_\h$ acting on $\psi\in \cS(\R)$ as:
\begin{equation}\label{WeylQuantiz}
f^w_\h\psi (q)\defeq \fr{1}{2\pi\h}\int
f(\fr{q+r}{2},p)e^{\fr{i}{\h}(q-r)p}\psi(r)\,dr\,dp\,.
\end{equation}
This defines a continuous mapping from $\cS$ to $\cS$, hence from
$\cS'$ to $\cS'$ by duality. Furthermore, it can be shown that the
mapping $f\mapsto f^w_\h$ can be extended to any $f\in
C^\infty_b(T^*\R)$, the space of smooth functions with bounded
derivatives, and the Calder\'on-Vaillancourt theorem shows that $f^w_\h$ is
also continuous on $L^2(\R)$.

A complex valued observable on the torus $f\in C^\infty (\t2)$ can be identified
with a biperiodic function on $\R^2$ (for all $q,p$,
$f(q+1,p)=f(q,p+1)=f(q,p)$). When $\h=1/2\pi N$, one can check that
the operator $f^w_\h$ maps the subspace $\hn\subset \cS'(\R)$ to
itself.  In the
following, we will always adopt the notation $h\defeq 2\pi\h$, so that on the torus we have $h=N^{-1}$. This  number $h$  will 
play the role of a small parameter and remind us the standard
$\h$-pseudodifferential calculus in $T^*\R$. We will then write $\opn(f)$ 
for the \emph{restriction} of
$f^w_\h$ on $\hn$, which will be the quantization of $f$ on the
torus. It is a $N\times N$ matrix in the basis \eqref{basis}.

The operator $\opn(f)$ inherits some properties from $f^w_\h$. We
will list the ones which will be useful to us. $\opn(f)^\dg =
\opn(f^*)$, so if $f$ takes real values, $\opn(f)$ is self-adjoint. The function  $f\equiv f_h$ may also depend on $h$, 
and to keep on the torus the main features of the standard pseudodifferential calculus in $T^*\R$,  these functions -- or symbols -- must belong to particular classes. On the torus, these
different classes are defined exactly as for symbols in
$C_b^\infty(T^*\R)$. For a sequence of functions
$(f_\h)_{\h\in ]0,1]},\ f_{\h}\equiv f(x,\h)\in C_b^\infty(T^*\R)\times]0,1]$, we will say that $f_\h\in S_\delta^m(1)$, with
$\delta\in]0,\fr{1}{2}]$, $m\in\R$ if $\h^{m} f_\h$ is uniformly bounded with respect to $\h$ and for any
multi-index $\alpha=(n_1,n_2)\in \N^2$ of length $|\alpha|=n_1 + n_2$, we have :
\begin{equation*}
\|\d^\alpha f_\h\|_{C^0} \leq C_\alpha \h^{-m -|\alpha|\delta}\,.
\end{equation*}
In the latter equation,  $\|\cdot\|_{C^0}$ denotes the sup-norm on $\t2$ and $\d^\alpha$ stands for $\fr{\d^{n_1+n_2}}{\d q^{n_1}\d p^{n_2}}$. %|\d^\alpha f_\h| \equiv \sup_{(q,p)\in \t2} |\d^\alpha f_\h(q,p)| $.
On the torus, one simply has $2\pi\h=h=N^{-1}$ for some $N\in\N$, and $C^\infty_b(T^*\R)$ is replaced by $\ctc$. Let us denote $S_\delta^m$ these symbol classes.  We have the following inequality of norms, useful to carry properties of $\h-$pseudodifferential operators on $\R$ to the torus \cite{bdb1}: 
\begin{equation}\label{norm}
\forall f\in \ctc\, ,\quad \| \opn(f)\| \leq \|
f^w_\h \|_{L^2\to L^2}\,. 
\end{equation}
Note that this property remains valid if $f\equiv f_\h $ depends on $\h$, with $f_\h\in S_\delta^0(1)$. 
The $L^2$ continuity states that if $f_\h\in S_\delta^0(1)$, then 
$f_\h^w$
is a bounded operator (with $\h-$uniform bound) from $L^2(\R)$ to
$L^2(\R)$.  Since $\opn(f)$ is the restriction of $f^w_\h$ on $\hn$,
Eq. \eqref{norm} implies the existence of a constant $C$ independent
of $h$ such that for $f\in S_\delta^0$ and $h\in]0,1]$, 
\begin{equation*}%\label{e:contiL2}
\|\opn(f)\|\leq C\,. 
\end{equation*}

The symbol calculus on the Weyl operators $f^w_\h$ easily extends to
their restrictions on $\hn$. We have, for  symbols $f$ and $g$ in $S_\delta^0$ the composition rule:
$$
\opn(f)\opn(g)=\opn(f\sharp_h g)\,,
$$
where $f\sharp_h g $ is defined as usual: for $X=(x,\xi)\in \R^2$ we have
\begin{equation}\label{e:compo}
(f\sharp_h g) (X)=\left( \fr{2}{h}\right)^2\int_{\R^{4}} f(X+Z)g(X+Y)\e^{\fr{4\i\pi}{h}\sigma (Y,Z)}dYdZ
\end{equation}
where $\sigma$ denotes the usual symplectic form.
Another useful expression for $f\sharp_h g$ is given by 
\begin{equation}\label{e:compo2}
(f\sharp_h g)(X)=\e^{\fr{\i h}{2\pi }\sigma(D_X,D_Y)}(f(X)g(Y))|_{X=Y}
\end{equation}
where $D_X=(D_x,D_\xi)=(\fr{1}{\i}\d_x,\fr{1}{\i}\d_\xi)$. This representation is particulary useful for $h-$expansions of $f\sharp_h g$. 
If $f\in S_\delta^m$ and 
$g\in S_\delta^n$, then $f\sharp_h g \in
S_\delta^{m+n}$, and at
first order we have :
\begin{equation}\label{e:symbcalc}
\|\opn(f)\opn(g)-\opn(f\,g)\|\leq C_{f,g} h^{1-2\delta-m-n}\,. 
\end{equation}
The Weyl operators on $\hn$, $\hat T_{m,n}\defeq
\opn(e_{mn})$, with
$$e_{mn}(q,p)\defeq e^{2i\pi(m q-n
p)}\,, m,n\in\Z\,,$$
 allow us to represent $\opn(f)$: 
\begin{equation}\label{e:fourier}
\opn(f)=\sum_{m,n\in\Z} f_{m,n}\,\hat
T_{m,n}\,, \textrm{ where } f_{m,n}=\int_{\t2} f\,
\overline{e_{mn}}\,d\mu\,. 
\end{equation}
From the trace identities
\begin{equation}
\Tr\hat T_{\mu,\nu}=
\begin{cases}
(-1)^{h\mu\nu}\, h^{-1} & \text{if}\ \mu,\nu=0\mod h^{-1}\\
0 & \text{otherwise},
\end{cases}
\end{equation}
one easily shows that for any $f\in S_\delta^0$ we have
\begin{equation}\label{e:trace} 
h\Tr(\opn(f))=f_{0,0} +\cO(h^{\infty}) =
\int_{\t2} f\,d\mu +\cO(h^\infty)\,.
\end{equation}
\medskip

Let $a\in S_0^0$. We will write $a_-=\min_{\t2} |a|$,
and $a_+=\max_{\t2} |a|$. The next proposition adapts the 
sharp G\aa rding inequality to the torus setting.
%++++++++++++++++++++++++++++
\begin{prop}\label{p:gard} Let $a\in S_\delta^0$ be a real, positive symbol, with $\Ran a=[a_-,a_+]$.
There exist a constant $C>0$ such that, for small enough $h$ and
any normalized state $u\in\hn$:
$$
a_- - C h^{1-2\delta} \leq \langle u, \opn(a) u\rangle \leq  a_+ +
C h^{1-2\delta}\,.
$$
\end{prop}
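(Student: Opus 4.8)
The plan is to reduce the statement to the \emph{sharp G\aa rding lower bound} and to prove the latter by a coherent-state (anti-Wick) argument intrinsic to $\hn$. First, it suffices to establish the left-hand inequality: the right-hand one follows by applying it to $a_+-a$, which is again a real symbol in $S_\delta^0$ with $\Ran(a_+-a)=[0,a_+-a_-]$ and $\opn(a_+-a)=a_+\Id_{\hn}-\opn(a)$, so that $\langle u,\opn(a)u\rangle = a_+-\langle u,\opn(a_+-a)u\rangle \le a_+ + Ch^{1-2\delta}$. Setting $b\defeq a-a_-\in S_\delta^0$, a real symbol with $b\ge 0$, we are reduced to proving
\begin{equation*}
\langle u,\opn(b)u\rangle \ge -C h^{1-2\delta}\qquad\text{for every normalized }u\in\hn.
\end{equation*}

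Next I would bring in the coherent states on the torus $\ket{x}\in\hn$, $x\in\t2$ (the $\hn$-projections of the Gaussian wavepackets of width $\sqrt h$, normalized by $\|\ket x\|=1$), as constructed in \cite{bdb1,deg}. They furnish an (approximate) resolution of identity $h^{-1}\int_{\t2}\ket x\bra x\,d\mu(x)=\Id_{\hn}+\cO(h^\infty)$, and one defines the anti-Wick quantization $\opn^{AW}(b)\defeq h^{-1}\int_{\t2}b(x)\,\ket x\bra x\,d\mu(x)$. Since $b\ge0$, this operator is manifestly non-negative, so $\langle u,\opn^{AW}(b)u\rangle\ge 0$. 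The decisive point is the comparison with $\opn(b)$: using the Fourier decomposition \eqref{e:fourier} together with the explicit overlaps $\bra{x}\hat T_{m,n}\ket{x}=e_{mn}(x)\,\e^{-c\,h(m^2+n^2)}+\cO(h^\infty)$ for a fixed $c>0$, one obtains $\opn^{AW}(b)=\opn(b\ast G_h)+\cO(h^\infty)$ in operator norm, where $G_h$ is the normalized Gaussian on $\t2$ with covariance of order $h$ (wrapping around $\t2$ only contributes $\cO(h^\infty)$, because $G_h$ is concentrated at scale $\sqrt h\ll1$).

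It then remains to control the symbol $r\defeq b-b\ast G_h$. Differentiating under the convolution, $\d^\alpha r=\d^\alpha b-(\d^\alpha b)\ast G_h$; Taylor-expanding $\d^\alpha b$ inside the integral, the constant term cancels, the linear term integrates to zero by parity of $G_h$, and the quadratic remainder carries a factor $\int z_iz_j\,G_h(z)\,dz=\cO(h)$, so that $\|\d^\alpha r\|_{C^0}\le C_\alpha\, h\,\|\d^{\alpha}\nabla^2 b\|_{C^0}\le C_\alpha\, h^{1-2\delta-|\alpha|\delta}$ for every multi-index $\alpha$. In other words $r\in S_\delta^{2\delta-1}$, and since $\delta\le\frac12$ it also lies in $S_\delta^0$. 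By the Calder\'on--Vaillancourt theorem for the exotic classes on $T^*\R$ one has $\|r^w_\h\|_{L^2\to L^2}\le C h^{-(2\delta-1)}=C h^{1-2\delta}$, and the inequality of norms \eqref{norm} transfers this bound to the torus: $\|\opn(r)\|\le C h^{1-2\delta}$. Hence $\|\opn(b)-\opn^{AW}(b)\|\le C h^{1-2\delta}$, and
\begin{equation*}
\langle u,\opn(b)u\rangle=\langle u,\opn^{AW}(b)u\rangle+\langle u,(\opn(b)-\opn^{AW}(b))u\rangle\ge -C h^{1-2\delta}.
\end{equation*}
Substituting $b=a-a_-$ gives the left inequality of the Proposition, and the first paragraph supplies the right one.

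The main obstacle is not the reduction or the final bookkeeping but the \emph{quantitative} comparison $\opn^{AW}(b)=\opn(b\ast G_h)+\cO(h^\infty)$: one must set up the coherent states and their (approximate) resolution of identity on the compact torus with $\cO(h^\infty)$ errors, identify the anti-Wick--Weyl symbol map as convolution by a Gaussian of width $\sqrt h$, and then check that the ensuing remainder genuinely lies in $S_\delta^{2\delta-1}$ uniformly in $h$ --- this is where the hypothesis $\delta\le\frac12$ enters (for $\delta=\frac12$ the statement is of course empty, reducing to the boundedness of $\opn(a)$). A cleaner but essentially equivalent alternative would be to invoke the sharp G\aa rding inequality on $L^2(\R)$ for the biperiodic symbol $b$ and to relate the two expectation values through the coherent-state transform; the step above is exactly what makes such a transfer legitimate, since the inner product on $\hn$ is not the restriction of the one on $L^2(\R)$.
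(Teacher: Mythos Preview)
Your argument is correct and rests on the same idea as the paper's: replace Weyl by an anti-Wick (Gaussian-smoothed) quantization, which is manifestly non-negative for $b\ge0$, and bound the difference by showing $b-b\ast G_h\in S_\delta^{2\delta-1}$. The implementation differs. The paper stays on $L^2(\R)$: it introduces the Gaussian symbol $\Gamma(X)=2\e^{-|X|^2/\hbar}$, observes $\Gamma\sharp_h\Gamma=\Gamma$ so that $\Gamma^w_\hbar$ is a rank-one projector, writes $(a\star\Gamma)^w_\hbar=\frac{1}{2\pi\hbar}\int a(Y)\,(\Gamma(\cdot-Y))^w_\hbar\,dY\ge0$, and estimates the remainder $r=a\star\Gamma-a$ via the rescaling $X\mapsto X/\sqrt\hbar$ together with Calder\'on--Vaillancourt at $\hbar=1$, obtaining $\|r^w_\hbar\|_{L^2\to L^2}=O(\hbar^{1-2\delta})$; the passage to $\hn$ is then made through the norm inequality \eqref{norm} (once one has \emph{both} the upper and lower bounds on $L^2(\R)$, centering the symbol and applying \eqref{norm} to the shifted operator gives the two-sided spectral inclusion on $\hn$). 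You instead work intrinsically on $\hn$ with torus coherent states and the associated anti-Wick quantization. Your route makes the positivity on $\hn$ immediate and sidesteps the transfer, at the price of importing the torus coherent-state apparatus (resolution of identity up to $O(h^\infty)$ and the anti-Wick/Weyl comparison on $\t2$) that you correctly flag as the main technical input; the paper's route uses only the standard $\R$-machinery and the single inequality \eqref{norm}. Both are legitimate and yield the same $h^{1-2\delta}$ bound.
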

%++++++++++++++++++++++++++++
\begin{proof}
We first sketch the proof of the sharp G\aa rding inequality in the
case of pseudodifferential operators on $T^*\R$ with real symbol $a\in
S_\delta^0(1)$.  For the lower bound, we suppose without loss of
generality that $a_-=0$.

For $X=(x,\xi)\in\R^2$, consider
$\Gamma(X)=\Gamma(x,\xi)=2\e^{-\fr{x^2+\xi^2}{\h}}$. Using Eq. \eqref{e:compo} for symbols on $T^*\R$, 
a straightforward calculation involving Gaussian integrals shows that $\Gamma\sharp_h\Gamma=\Gamma$,
hence $\Gamma^w_\h$ is an orthogonal projector, and thus  a
positive  operator. Define the symbol
$$
 a \star\Gamma (X)\defeq \fr{1}{2\pi\h}\int_{\R^{2}} a(X+Y) 2\e^{- \fr{Y^2}{\h}}dY\,. 
$$
To connect $(a\star\Gamma)_\h^w$ with $a_\h^w$, 
we  make use of the Taylor formula at point $X$ in the above definition. Because of the
parity of $\Gamma$ and the fact that $\int_{\R^2}\Gamma(X)dX=2\pi\h$, we have
\begin{eqnarray*}
a\star\Gamma(X)&=&a(X)+\fr{2}{2\pi\h}\iint(1-\theta) a''(X+\theta
Y)Y^2\,  \e^{-\fr{Y^2}{\h}}dYd\theta\\ 
 &=&a(X)+\underbrace{\fr{1}{\pi}\iint(1-\theta) \h\, a''(X+\theta
Z\rc\h)Z^2\,  \e^{-Z^2}dZd\theta}_{r(X)}\,.
\end{eqnarray*}
To evaluate $\|r_\h ^w \|_{L^2\to L^2}$, we rescale the variable
$X\mapsto \ti X=X/\rc\h$, and call $\ti r( X)=r(\rc\h X)$. If
$u\in L^2$,  we denote $\ti u(\ti x)=\h^{\fr{1}{4}}u(x)$. This transformation
is unitary : $\norm {\ti u}_{L^2}=\norm{u}_{L^2}$. Now,  a simple change of variables using \eqref{WeylQuantiz} shows that  $\|r_\h^w u\|_{L^2}=\|\ti r_1^w \ti u\|_{L^2}$ where $\ti r_1^w$ denote the $\h=1$ quantization of the symbol $\ti r$. Because of the term $\h\,a''=\cO(h^{1-2\delta})$  appearing in the definition of $r$, for any multi-index $|\alpha |$ we have
$$
\d^{\alpha}_{\ti X} \ti r(\ti X)\lesssim
\h^{1+\fr{|\alpha|}{2}}\h^{-\delta(|\alpha|+2)}=\h^{1-2\delta}\h^{|\alpha|(\fr12-\delta)}\,. 
$$
Since the $L^2$ continuity theorem  applied to $\ti r^w_1$ yields to a bound  that involves a finite number of derivatives of $\ti r$, we get $\norm{\ti r_1^w}_{L^2\to L^2}=\cO(\h^{1-2\delta})$.  Here and below, by $f\lesssim g$ we will  mean that $|f|\leq C |g|$ for some $C\geq0$. Hence,
\begin{equation*}
\norm{r^w_\h  u}_{L^2}=\norm{\ti r_1^w \ti u}_{L^2}\leq \norm{\ti r_1^w}_{L^2\to L^2}\norm{u}_{L^2} \lesssim
\h^{1-2\delta}\norm{u}_{L^2}\,,
\end{equation*}
 and we conclude by
$
\|r_\h^w\|_{L^2\to L^2}=\cO(\h^{1-2\delta})\,. 
$

 Now, from the definition of $a\star\Gamma$, we also have  
$$(a\star\Gamma)_\h^w=\fr{1}{2\pi\h}\int a(Y)(\Gamma(\cdot -Y))_\h^w dY\,.$$ But as we noticed above, $ (\Gamma(\cdot-Y))_\h^w>0$, and then  $(a\star\Gamma)_\h^w$ is positive definite. Using the fact that $a\star\Gamma=a+r$ and $
\|r_\h^w\|_{L^2\to L^2}=\cO(\h^{1-2\delta}) 
$, this implies the existence of a constant $c>0$ such that
\begin{equation}\label{e:gard}
\langle u, a_\h^w u\rangle \geq -c\h^{1-2\delta}\,. 
\end{equation}
The upper bound is obtained similarly, assuming $a_+=0$ and
considering the symbol $-a\geq 0$.

It is now a straightforward calculation to transpose these properties on the torus by making use of Eq. \eqref{norm}.
\end{proof}

%%%%%%%%%%%%%%%%%%%%%%%%%%%%%%%%%%%%%%%%%%%
\subsection{Quantum dynamics}\label{s:dyn}
%%%%%%%%%%%%%%%%%%%%%%%%%%%%%%%%%%%%%%%%%%%

The classical dynamics will simply be given by a smooth symplectic
diffeomorphism $\ka:\t2\to\t2$. Since we are mainly interested in the case of chaotic dynamics, we will sometimes make  the hypothesis that $\ka$ is Anosov, hence ergodic with respect to the Lebesgue measure $\mu$. From this ergodicity we draw
the following consequence on the set of periodic points. We recall
that a set has {\em Minkowski content zero} if, for any $\eps>0$, it
can be covered by equiradial Euclidean balls of total measure less
than $\eps$.
%++++++++++++++++++++++++++++++
\begin{prop}\label{p:Minkowski}
Assume the diffeomorphism $\kappa$ is ergodic w.r.to the Lebesgue
measure. Then, for any $n\geq 1$, the fixed points of $\ka^n$ form a
set of Minkowski content zero.
\end{prop}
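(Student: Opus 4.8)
The plan is to show that if the fixed‐point set $\Fix(\ka^n)$ had positive Minkowski content, then $\ka^n$ (hence $\ka$) could not be ergodic with respect to $\mu$; so we argue by contraposition. First I would observe that since $\ka$ is a smooth diffeomorphism of the compact torus, for fixed $n$ the map $g\defeq\ka^n$ is also a smooth diffeomorphism, and the set $F\defeq\Fix(g)=\{x\in\t2 : g(x)=x\}$ is closed. The heart of the matter is a dichotomy: either $F$ has empty interior, in which case I claim it automatically has Minkowski content zero, or $F$ has nonempty interior, in which case I derive a contradiction with ergodicity.

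For the second alternative: if $F$ contains a ball $B$, then $g|_B=\Id$, and I would use analyticity/rigidity only if available — but since $\ka$ is merely smooth, I instead argue that $F$ being $g$-invariant with positive measure contradicts ergodicity of $g$ unless $\mu(F)=1$; and $\mu(F)=1$ with $F$ closed forces $F=\t2$, i.e. $g=\Id$, which is not ergodic (for $n\geq 1$ this is the trivial excluded case, or one notes $\ka$ itself cannot then be ergodic). So the substantive content is the first alternative: a closed set with empty interior need \emph{not} have Minkowski content zero in general (fat Cantor sets), so empty interior alone is not enough. Here is where ergodicity must be used more quantitatively. The key step is: the set $F$ is invariant under $g$, and more usefully, one can control its size via Birkhoff. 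Consider the indicator — better, a smooth bump. For any $\eps>0$, cover a putative positive‐content $F$ by balls; I would relate the total volume of a neighborhood $F_r=\{x : d(x,F)<r\}$ as $r\to 0$ to the measure of the set of points whose orbit spends an anomalous fraction of time near $F$.

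The cleaner route, which I would actually carry out: suppose for contradiction that $F=\Fix(g)$ does \emph{not} have Minkowski content zero. Then there is $\eps_0>0$ and a sequence $r_k\to 0$ such that every covering of $F$ by balls of radius $r_k$ has total measure $\geq\eps_0$; equivalently $\mu(F_{r_k})\geq c\,\eps_0$ for a dimensional constant $c$. Now pick a continuous function $\chi$ with $\bbbone_F\le\chi\le\bbbone_{F_{r_k}}$; the issue is that $\chi$ depends on $k$. Instead I would fix a single continuous $\chi\geq 0$ supported near $F$ with $\int\chi\,d\mu$ small but $\chi\equiv 1$ on $F$, and exploit that every point of $F$ is fixed by $g$, so $\frac1M\sum_{j=0}^{M-1}\chi(g^j(x))=\chi(x)=1$ for $x\in F$ and all $M$. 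By the Birkhoff ergodic theorem this average converges $\mu$-a.e. to $\int\chi\,d\mu$, so the set $G$ where the limit equals $\int\chi\,d\mu<1$ has full measure and is disjoint from $F$; this alone only gives $\mu(F)=0$, not Minkowski content zero.

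Consequently the genuinely hard part — and the step I expect to be the main obstacle — is passing from $\mu(\Fix(\ka^n))=0$ (which is easy from ergodicity) to the \emph{Minkowski} (box‐counting) statement, which is strictly stronger. I would resolve it using smoothness of $\ka$: near a fixed point $x_0$ of $g$, $F$ is locally the zero set of the smooth map $x\mapsto g(x)-x$, whose derivative is $Dg|_{x_0}-\Id$. Stratify $F$ according to the rank of $Dg-\Id$. On the open set where $Dg-\Id$ is invertible, fixed points are isolated, hence finitely many (compactness), contributing Minkowski content zero trivially. On the complementary set, $1$ is an eigenvalue of $Dg$; since $g$ is symplectic in dimension $2$, $\det Dg=1$, so $Dg$ having eigenvalue $1$ forces $Dg=\Id$ or a parabolic Jordan block — I would show the set where $Dg|_x=\Id$ is a closed $g$-invariant set that, if it had positive measure, would (by ergodicity) be all of $\t2$, making $g=\Id$; otherwise it is a proper closed invariant subset. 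To handle the remaining thin stratum and confirm it is covered by $\cO(r^{-1})$ balls of radius $r$ (giving total measure $\cO(r)\to 0$), I would use that $F$ is, away from the $Dg=\Id$ locus, contained in a finite union of smooth curves by the implicit function theorem, and a smooth curve in $\t2$ manifestly has Minkowski content zero. Combining the strata yields that $\Fix(\ka^n)$ has Minkowski content zero, completing the contrapositive.
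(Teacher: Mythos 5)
Your proof has a genuine gap, and it stems from a misreading of the definition. The paper declares a set to have Minkowski content zero if for every $\eps>0$ it admits a cover by \emph{equiradial} balls of total measure $<\eps$. For a \emph{compact} set $F$ this is equivalent to $\mu(F)=0$: by outer regularity choose an open $U\supset F$ with $\mu(U)<\eps$, let $2r=\mathrm{dist}(F,\t2\setminus U)>0$, cover $F$ by the balls $\{B(x,r):x\in F\}$, and a Vitali-type selection (disjointify, then triple the radii) produces an equiradial cover of $F$ of total measure $\lesssim\mu(U)$. Your cited obstruction — a fat Cantor set — has \emph{positive} Lebesgue measure, so it does not exhibit a gap between ``measure zero'' and ``Minkowski content zero'' in the paper's sense; no such gap exists for closed sets. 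Consequently, you had in fact already reached the finish line when you established $\mu(\Fix(\ka^n))=0$; the further stratification you launch into is both unnecessary and, more importantly, incomplete.

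On the stratification itself: after splitting $\Fix(\ka^n)$ by the rank of $D\ka^n-\Id$, you reduce to the stratum where $D\ka^n=\Id$, note that it is a proper closed invariant set, and stop. That is circular — ``proper closed invariant set'' is precisely the kind of object whose Minkowski content you are trying to bound, and ergodicity alone gives you only that its measure is $0$ or $1$, not the covering estimate. There is also a smaller slip in your Birkhoff argument for $\mu(F)=0$: you apply Birkhoff to $g=\ka^n$, but ergodicity of $\ka$ does not pass to $\ka^n$. The clean route (and the paper's) is to observe that $\Fix(\ka^n)$ is $\ka$-invariant (if $\ka^n x=x$ then $\ka^n(\ka x)=\ka(\ka^n x)=\ka x$), so by ergodicity of $\ka$ its measure is $0$ or $1$; measure $1$ would force $\ka^n=\Id$ a.e., contradicting ergodicity. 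The paper then passes to Minkowski content via the decreasing closed family $F_\epsilon=\{x:\,d(x,\ka^n x)\le\epsilon\}\downarrow F_0=\Fix(\ka^n)$, so that $\mu(F_\epsilon)\to\mu(F_0)=0$, which is exactly what the weak covering condition requires.
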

%++++++++++++++++++++++++++++++
\begin{proof}
Let us first check that, for any $n\neq 0$, the set of $n$-periodic
points $\Fix(\kappa^n)$ has Lebesgue measure zero. Indeed, this set
is $\kappa$-invariant, so by ergodicity it has measure $0$ or $1$.
In the latter case, the map $\kappa$ would be $n$-periodic on a set
of full measure, and therefore not ergodic.

Let us now fix $n\neq 0$. Since $\kappa^n$ is continuous, for any
$\epsilon\geq 0$ the set
$$
F_\epsilon\defeq \{x\in\t2,\;{\rm
dist}(x,\kappa^n(x))\leq\epsilon\}\qquad\text{is closed in $\t2$.}
$$
Since $F_\epsilon\subset F_{\epsilon'}$ if $\epsilon\leq\epsilon'$,
for any Borel measure $\nu$ on $\t2$ we have
$$
\nu(F_0)=\lim_{\epsilon\to 0}\nu(F_\epsilon)\,.
$$
Since $F_0=\Fix(\kappa^n)$ has zero Lebesgue measure, it is of
Minkowski content zero.
\end{proof}

We will not study in detail the possible quantization recipes of the
symplectic map $\ka$ (see e.g. \cite{kmr,deg,zel} for discussions on
this question), but assume that some quantization can be constructed. In dimension $d=1$, the map
$\ka$ can be decomposed into the product of three maps $L,t_\mathbf{v}$ and
$\phi_1$ where $L\in SL(2,\Z)$ is a linear automorphism of the torus,
$t_\mathbf{v}$ is the translation of vector $\mathbf{v}$ and $\phi_1$ is a time 1
hamiltonian flow \cite{kmr}. One quantizes the map $\ka=L\circ t_\mathbf{v}\circ
\phi_1 $ by quantizing separately  $L,t_\mathbf{v}$ and $\phi_1$ into
$U_h(L),U_h(t_\mathbf{v})$ and $U_h(\phi_1)$ and setting $U_h(\ka)=U_h(L)\,U_h(t_\mathbf{v})\, U_h(\phi_1)$. We are mainly interested in the Egorov
property, or ``quantum to classical correspondence principle'' of such maps, which is
expressed for $f\in S_\delta^0$  by
\begin{equation*}
\norm{U_h(\ka)^{-1}\opn(f)U_h(\ka)-\opn(f\circ\ka)}=o_h(1)\,,\quad\trm{where}\quad o_h(1)\hto0 0\,.
\end{equation*}
The
following lemma makes  the defect term $o_h(1)$ in the preceding equation more precise. 
\begin{lem}
Let $f\in S_\delta^0$. There is a constant $C_{f,\ka}$ such that
\begin{equation}\label{e:Egorov}
\norm{U_h(\ka)^{-1}\opn(f)U_h(\ka)-\opn(f\circ\ka)}\leq C_{f,\ka} h^{1-2\delta}\,. 
\end{equation}
\end{lem}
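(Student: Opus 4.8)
The plan is to reduce the Egorov estimate \eqref{e:Egorov} to the factorization $\ka=L\circ t_{\mathbf v}\circ\phi_1$ and treat each factor separately, since the composition of two estimates of the form \eqref{e:Egorov} again has this form (with a constant depending on the $C^1$-norm of the symbol under one of the maps). First I would note that for the linear automorphism $L\in SL(2,\Z)$, the quantization $U_h(L)$ is \emph{exactly} intertwining: $U_h(L)^{-1}f^w_\h U_h(L)=(f\circ L)^w_\h$ holds with \emph{no} error term, because metaplectic conjugation of a Weyl quantization by a linear symplectic map is exact; the same is true for the translation $t_{\mathbf v}$, since conjugation by the Heisenberg translation operator $\hat T_{\mathbf v}$ shifts the symbol by $\mathbf v$ exactly. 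By \eqref{norm} these exact identities descend to the torus. So the only factor contributing a genuine $h^{1-2\delta}$ error is the Hamiltonian time-$1$ flow $\phi_1$.

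For the factor $\phi_1$, generated by a Hamiltonian $H\in\ctr$, I would use the Heisenberg evolution $f(s)\defeq U_h(\phi_s)^{-1}\opn(f)U_h(\phi_s)$ and compare it with $\opn(f\circ\phi_s)$. Differentiating in $s$, $f(s)$ satisfies the quantum Heisenberg equation $\tfrac{d}{ds}f(s)=\tfrac{\i}{\h}[\opn(H),f(s)]$, while $\opn(f\circ\phi_s)$ satisfies, up to the symbol-calculus error \eqref{e:symbcalc}, the equation driven by $\opn(\{H,f\circ\phi_s\})$; the discrepancy between the quantum commutator $\tfrac{\i}{\h}[\opn(H),\opn(g)]$ and $\opn(\{H,g\})$ is $\cO(h^{1-2\delta})$ by the second-order term of \eqref{e:compo2} together with $L^2$-continuity \eqref{norm}, uniformly for $g$ ranging over the (bounded in $S_\delta^0$) family $f\circ\phi_s$, $s\in[0,1]$. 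A Duhamel/Gronwall argument on $[0,1]$ then gives $\norm{f(1)-\opn(f\circ\phi_1)}\le C_{f,\ka}h^{1-2\delta}$, the constant absorbing the finitely many $C^0$-seminorms of the symbols $f\circ\phi_s$ and derivatives of $H$ that appear. Transferring this from $T^*\R$ to $\hn$ is again just \eqref{norm}. Finally, composing the three estimates and invoking that $f\mapsto f\circ\ka$ preserves $S_\delta^0$ with controlled seminorms yields \eqref{e:Egorov}.

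The main obstacle is the control of the Duhamel remainder: one must check that the symbols $f\circ\phi_s$ stay in $S_\delta^0$ with seminorms uniformly bounded in $s\in[0,1]$ — which is immediate since $\phi_1$ is a fixed smooth diffeomorphism, so only finitely many derivatives of $\phi_s$ intervene — and that the second-order remainder in the composition formula \eqref{e:compo2}, not merely its principal symbol, is $\cO(h^{1-2\delta})$ in operator norm; this is where the $S_\delta^0$ bookkeeping (each derivative costs $h^{-\delta}$, each power of $h$ from the expansion gains $h^1$) must be done carefully, but it is routine given the calculus already recalled. A minor point worth a remark: since we only need the leading error order and not an asymptotic expansion, it suffices to iterate the crude bound \eqref{e:symbcalc} once rather than develop a full symbolic Egorov expansion.
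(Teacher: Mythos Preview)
Your overall strategy---factorize $\ka=L\circ t_{\mathbf v}\circ\phi_1$ and treat each factor separately, with the Hamiltonian flow handled by a Duhamel argument---is exactly what the paper does, and the $\phi_1$ part matches the paper's computation closely.

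There is, however, one genuine gap in your treatment of the translation. You assert that conjugation by the Heisenberg translation $\hat T_{\mathbf v}$ is \emph{exact} and that this ``descends to the torus'' via \eqref{norm}. On $L^2(\R)$ the exactness is true, but the point is that for a generic $\mathbf v\in\R^2$ the operator $\hat T_{\mathbf v}$ does \emph{not} preserve the subspace $\hn\subset\cS'(\R)$: the periodicity conditions defining $\hn$ are only compatible with translations by vectors in $h\Z^2$ (up to phases). The paper's quantization $U_h(t_{\mathbf v})$ is therefore defined as the quantum translation by a nearby lattice vector $\mathbf v_h$ with $|\mathbf v-\mathbf v_h|=\cO(h)$, so that
\[
U_h(t_{\mathbf v})^{-1}\opn(f)U_h(t_{\mathbf v})=\opn(f\circ t_{\mathbf v_h}),
\]
and one must then estimate $\|\opn(f\circ t_{\mathbf v}-f\circ t_{\mathbf v_h})\|$. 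A Taylor expansion together with the G\aa rding-type bound gives $\cO(h^{1-\delta})$ for this term. Fortunately this is dominated by the $h^{1-2\delta}$ coming from $\phi_1$, so your final statement survives; but the claim that the translation factor contributes no error is incorrect and should be replaced by this short estimate.
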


\begin{proof} 
Since it is well known that for linear maps $L$, one has
$$U_h(L)^{-1}\opn(f)U_h(L)=\opn(f\circ L)\,,$$
we will consider only
the quantizations of $t_\mathbf{v}$ and $\phi_1$. The map $t_\mathbf{v}$ is quantized
by a quantum translation operator of vector $\mathbf{v}_h$, which is at
distance  $|\mathbf{v}-\mathbf{v}_h|=\cO(h)$:
$$
U_h(t_\mathbf{v})^{-1}\opn(f)U_h(t_\mathbf{v})=\opn(f\circ t_{\mathbf{v}_h})\,.
$$
Consequently, we need to estimate 
$\norm{\opn(f\circ t_\mathbf{v}-f\circ t_{\mathbf{v}_h})}$. For this purpose, we first evaluate $\|f\circ t_\mathbf{v}-f\circ t_{\mathbf{v}_h}\|_{C^0}$. Denote $\mathbf{v}=(v^q,v^p)$ and $\mathbf{v}_h=(v_h^q,v_h^p)$. A Taylor expansion shows that 
$$
\|f(q+v^q_h+(v^q - v^q_h),p+v^p_h+(v^p -v^p_h))-f(q+v^q_h,p+v^p_h)\|_{C^0}=\cO(h^{1-\delta})
$$
and then,
$$
\|f\circ t_\mathbf{v}-f\circ t_{\mathbf{v}_h}\|_{C^0}=\cO(h^{1-\delta})\,.
$$
Hence, $h^{\delta-1}(f\circ t_\mathbf{v}-f\circ t_{\mathbf{v}_h} )\in S_\delta ^0$.
Using Proposition \ref{p:gard}, we conclude by 
\begin{eqnarray*}
\norm{\opn(f\circ t_\mathbf{v}-f\circ t_{\mathbf{v}_h})} =\cO(h^{1-\delta})\,.
\end{eqnarray*}

Let us denote by $H\in S_0^0 $ the Hamiltonian, $X_H$ the associated Hamiltonian vector field, and $\phi_{t}=\exp\lp t X_H \rp$ the classical Hamiltonian flow at time $t$. For simplicity, we denote the quantum propagator at time $t$  by $\cU^t\defeq \exp(-\fr{\i t}{h}\Op_h (H))$, and write $\phi_t^* f\defeq f\circ\phi_t$. In particular, note that $\cU^1=U_h(\phi_1)$. From the equations: 
$$
\begin{cases}
\begin{array}{l}
\fr{d}{ds}\phi_{t+s}^*f|_{s=0}=\{H,\phi_t^*f\}\\
\fr{d}{ds}\cU^{t+s}\Op_h (\phi_{t+s}^*f)\cU^{-(t+s)}|_{s=0} =\cU^t \Diff_t(H,f)\cU^{-t}
\end{array}
\end{cases}
$$ 
with $ \Diff_t(H,f) = \opn(\{H,\phi_t^* f\}) -\fr{\i}{h}[\opn(H),\opn (\phi_t^*f)]$, 
we get 
$$
\Op_h(\phi_t^* f)=
\cU^{-t} \Op_h (f) \cU^{t} +\int_0^t \cU^{s-t}\Diff_s (H,f) \cU^{t-s} ds\,.
$$
A straightforward application of \eqref{e:compo2} yields to 
$$
\fr{2\i\pi}{h}[\opn(H),\opn(\phi_t^*f)]=\opn(\{H,\phi_t^*f\})+\fr{\i}{h}\Ohn(h^{2-2\delta})\,,
$$
where $\Ohn(q)$ denotes an operator in $\hn$ whose norm is of order $q$.  
If we  use the unitarity of $\cU^t$, we  thus obtain :
$$
\|\cU^{-1}\Op_h(f)\cU^1 - \opn(\phi_1^* f)\|=\cO(h^{1-2\delta})\,.
$$
Adding  all these estimates, we end up with
\begin{equation*}
\norm{U_h(\ka)^{-1}\opn(f)U_h(\ka)-\opn(f\circ\ka)}\leq C_{f,\ka} h^{1-2\delta}\,.
\end{equation*}
\end{proof}

Taking into account the damping, our damped quantum map is given by
the matrix $\Mak$ in \eqref{Qmap}. The damping factor $a\in S_0^0$ is
chosen such that, for $h$ small enough, $\|\opn(a)\|\leq 1$ and
$a_-=\min_{\t2} |a|>0$. From Proposition \ref{p:gard}, this implies that
$\opn(a)$, and thus $\Mak$, are invertible, with inverses uniformly
bounded with respect to $h$: 
\begin{equation}\label{e:bounds} 
\|\Mak \|= a_+ +
\cO(h),\qquad \|\Mak^{-1} \|= a_-^{-1} + \cO(h)\,.
\end{equation}
 As
explained in the introduction, $\Mak$ is not a normal operator, and
it  may not be diagonalizable. Nevertheless, we may write its
spectrum as
$$
\Spec(\Mak)=\{\lambda^{(h)}_1,\lambda^{(h)}_2,\ldots,\lambda^{(h)}_{h^{-1}}\}\,,
$$
where each eigenvalue is counted according to its algebraic
multiplicity, and eigenvalues are ordered by decreasing modulus (in
the following we will sometimes omit the $^{(h)}$ supersripts). The
bounds \eqref{e:bounds} trivially imply 
\begin{equation}\label{e:strict} 
a_+ +
C h\geq
|\lambda^{(h)}_1|\geq|\lambda^{(h)}_2|\geq...\geq|\lambda^{(h)}_{h^{-1}}|\geq
a_- - Ch\,,
\end{equation}
 for some constant $C>0$. Since we assumed $a_->0$,
the spectrum is localized in an annulus for $h$ small enough. The
above bounds are similar with the ones obtained for the damped wave
equation \cite[Eq.(2-2)]{al}.

For later use, let us now recall the Weyl inequalities \cite{kon},
which relate the eigenvalues of an operator $A$ to its singular
values (that is, the eigenvalues of $\sqrt{A^\dg A}$):
%++++++++++++++++++++++++++++++++
\begin{prop}[Weyl's inequalities]\label{p:weylineq}
Let $\Hi$ be a Hilbert space and  $A\in\li(\Hi)$ a compact operator.
Denote respectively $\alpha_1,\alpha_2,\ldots$ (respectively
$\beta_1,\beta_2,\ldots$) its eigenvalues (resp. singular values)
ordered by decreasing moduli and counted with algebraic
multiplicities. Then, $\forall k\leq \dim\Hi$,  we have :
\begin{equation}\label{WeylIneq}
 \prod_{i=1}^k|\alpha_i|\leq\prod_{i=1}^k
\beta_i\ ,\qquad \sum_{i=1}^k|\alpha_i|\leq \sum_{i=1}^k \beta_i
 \end{equation}
\end{prop}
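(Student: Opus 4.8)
The plan is to establish the two inequalities in the finite-dimensional setting $\dim\Hi<\infty$ (which is all that is needed in the paper, with $\Hi=\hn$), obtaining the multiplicative bound from antisymmetric tensor powers and then the additive bound from the multiplicative one by a majorization argument; the general compact case follows by the same scheme once one observes that the exterior powers of a compact operator are again compact, with minor additional care near the zero singular values.

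For the \emph{multiplicative inequality}, I would pass to the induced operator $\Lambda^{k}A$ on the $k$-th exterior power $\Lambda^{k}\Hi$, equipped with the inner product naturally inherited from $\Hi$. Using Schur's unitary triangularization $A=UTU^{*}$ with $T$ upper triangular and diagonal entries $\alpha_{1},\dots,\alpha_{n}$ in decreasing modulus, one checks that $\Lambda^{k}A$ is, in the lexicographically ordered basis $\{e_{i_{1}}\wedge\cdots\wedge e_{i_{k}}\}_{i_{1}<\cdots<i_{k}}$, again upper triangular, with diagonal entries $\alpha_{i_{1}}\cdots\alpha_{i_{k}}$; hence its eigenvalues are precisely the $k$-fold products of the $\alpha_{j}$, and its spectral radius equals $|\alpha_{1}\cdots\alpha_{k}|$. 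Likewise, from $(\Lambda^{k}A)^{*}(\Lambda^{k}A)=\Lambda^{k}(A^{*}A)$ one reads off that the singular values of $\Lambda^{k}A$ are the $k$-fold products $\beta_{i_{1}}\cdots\beta_{i_{k}}$, so that $\|\Lambda^{k}A\|=\beta_{1}\cdots\beta_{k}$. The elementary bound $\rho(B)\le\|B\|$ applied to $B=\Lambda^{k}A$ then gives $\prod_{i=1}^{k}|\alpha_{i}|\le\prod_{i=1}^{k}\beta_{i}$ for every $k$.

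For the \emph{additive inequality}, I would argue that the log-majorization just obtained — together with the identity $\prod_{i=1}^{n}|\alpha_{i}|=|\det A|=\sqrt{\det(A^{*}A)}=\prod_{i=1}^{n}\beta_{i}$, which upgrades it to a genuine majorization — implies the additive statement. Indeed, a majorization is preserved under composition with any nondecreasing convex function $\varphi$, in the sense that $\sum_{i=1}^{k}\varphi(x_{i})\le\sum_{i=1}^{k}\varphi(y_{i})$ for all $k$ (Hardy–Littlewood–P\'olya / Karamata). Taking $\varphi=\exp$ and applying this to the nonincreasing sequences $\log|\alpha_{i}|\le\log\beta_{i}$ (with the convention $\log 0=-\infty$) yields $\sum_{i=1}^{k}|\alpha_{i}|\le\sum_{i=1}^{k}\beta_{i}$.

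I expect the only delicate point to be the bookkeeping in the exterior-power step: verifying carefully that $\Lambda^{k}A$ carries exactly the $k$-fold products of eigenvalues and of singular values once $\Lambda^{k}\Hi$ is given the correct induced inner product, and, in the infinite-dimensional compact case, handling the vanishing singular values and the passage to infinitely many terms in the majorization lemma — none of which affects the present application. As an alternative that avoids exterior powers altogether for the additive bound, one could instead combine Schur triangularization (inserting unimodular phases to rewrite $\sum_{i=1}^{k}|\alpha_{i}|$ as $\sum_{i=1}^{k}\langle Ae_{i},g_{i}\rangle$ for suitable orthonormal systems $\{e_{i}\}$ and $\{g_{i}\}$) with the variational characterization $\sum_{i=1}^{k}\beta_{i}=\max\sum_{i=1}^{k}|\langle Ae_{i},g_{i}\rangle|$ of the Ky Fan $k$-norm.
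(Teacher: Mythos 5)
The paper itself does not prove Proposition~\ref{p:weylineq}: it states the inequalities as a classical fact and cites K\"onig's monograph for them, so there is no ``paper's proof'' against which to compare. Your argument is a correct and standard proof. For the multiplicative bound, the exterior-power computation is right: Schur triangularization shows $\Lambda^{k}A$ has spectral radius $|\alpha_{1}\cdots\alpha_{k}|$, while $(\Lambda^{k}A)^{*}\Lambda^{k}A=\Lambda^{k}(A^{*}A)$ shows $\|\Lambda^{k}A\|=\beta_{1}\cdots\beta_{k}$, and $\rho\le\|\cdot\|$ closes the step. For the additive bound, the passage from log-majorization $\sum_{i\le k}\log|\alpha_{i}|\le\sum_{i\le k}\log\beta_{i}$ to $\sum_{i\le k}|\alpha_{i}|\le\sum_{i\le k}\beta_{i}$ only requires the \emph{weak}-majorization form of the Hardy--Littlewood--P\'olya/Karamata lemma (valid because $\exp$ is convex and nondecreasing), so the ``upgrade to a genuine majorization'' via $\prod|\alpha_{i}|=\prod\beta_{i}$ is actually superfluous --- and in the infinite-dimensional compact case that determinant identity is not even available, so it is best to drop it and rely solely on the weak-majorization statement. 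Your alternative sketch via the Ky Fan $k$-norm variational characterization is also a well-known route and gives the additive inequality directly without exterior powers. For the paper's purposes ($\Hi=\hn$ finite-dimensional), the finite-dimensional case is all that is needed, as you correctly observe.
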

We immediately deduce from this the following
\begin{cor}\label{c:Weyl-ineq} Fix $n\geq 1$.
Let  $\lambda_1,\lambda_2,\ldots $ be the eigenvalues of $A$ and
$s_1^{(n)},s_2^{(n)},\ldots$ the eigenvalues of $A_n= \sqrt[2n]{A^{\dg
    n}A^n}$, ordered as above.
Then for any $k\leq\dim\Hi$:
$$
\sum_{i=1}^k\log|\lambda_i|\leq\sum_{i=1}^k\log s_i^{(n)}\,, \qquad \prod_{i=1}^k|\lambda_i|\leq\prod_{i=1}^k
s_i^{(n)}\,.
$$
\end{cor}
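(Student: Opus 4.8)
The plan is to derive Corollary~\ref{c:Weyl-ineq} directly from the Weyl inequalities of Proposition~\ref{p:weylineq} by applying them not to $A$ itself but to the iterate $A^n$. The key observation is that the eigenvalues of $A^n$ are exactly the $n$-th powers $\lambda_i^n$ of the eigenvalues of $A$ (with algebraic multiplicities preserved: this follows from the Jordan or Schur form of $A$), while the singular values of $A^n$ are by definition the eigenvalues of $\sqrt{A^{\dg n}A^n}$, that is, the numbers $\big(s_i^{(n)}\big)^n$ in the notation of the corollary. Applying the product form of \eqref{WeylIneq} to the compact operator $A^n$ therefore gives, for every $k\leq\dim\Hi$,
\begin{equation*}
\prod_{i=1}^k|\lambda_i|^n=\prod_{i=1}^k|\lambda_i^n|\leq\prod_{i=1}^k\big(s_i^{(n)}\big)^n\,,
\end{equation*}
and taking $n$-th roots of both sides (all quantities being nonnegative) yields $\prod_{i=1}^k|\lambda_i|\leq\prod_{i=1}^k s_i^{(n)}$. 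Taking logarithms of the same inequality gives the additive statement $\sum_{i=1}^k\log|\lambda_i|\leq\sum_{i=1}^k\log s_i^{(n)}$ (with the usual convention that a vanishing $\lambda_i$ forces $s_i^{(n)}=0$ as well, so the inequality still holds in $[-\infty,\infty)$; in the application $A=\Mak$ is invertible so this degeneracy does not occur).

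First I would recall why the eigenvalues of $A^n$ are the $n$-th powers of those of $A$, counted with algebraic multiplicity: pass to a basis in which $A$ is upper triangular (Schur decomposition, valid since $\Hi$ is finite-dimensional in our application, or use the Jordan form), so that $A^n$ is upper triangular with diagonal entries $\lambda_i^n$; since the diagonal of a triangular matrix lists its eigenvalues with algebraic multiplicity, the claim follows. Next I would note that the singular values of $A^n$, ordered decreasingly, are precisely the $\big(s_i^{(n)}\big)^n$, which is immediate from the definition $A_n=\sqrt[2n]{A^{\dg n}A^n}$: the eigenvalues of $A^{\dg n}A^n$ are the $2n$-th powers $\big(s_i^{(n)}\big)^{2n}$, hence the singular values $\sqrt{\lambda(A^{\dg n}A^n)}$ of $A^n$ are the $\big(s_i^{(n)}\big)^n$. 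Then the product Weyl inequality applied to $A^n$ reads $\prod_{i=1}^k|\lambda_i|^n\leq\prod_{i=1}^k\big(s_i^{(n)}\big)^n$, and I would finish by extracting $n$-th roots and, separately, by taking logarithms.

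There is essentially no obstacle here; this is a bookkeeping argument. The only point requiring a little care is the matching of \emph{orderings and multiplicities}: one must check that raising to the $n$-th power does not permute the decreasing order of the moduli (it does not, since $t\mapsto t^n$ is increasing on $[0,\infty)$, so $|\lambda_1|\geq\cdots\geq|\lambda_k|$ implies $|\lambda_1|^n\geq\cdots\geq|\lambda_k|^n$, and likewise for the $s_i^{(n)}$), and that the algebraic multiplicity of an eigenvalue $\mu$ of $A$ equals the algebraic multiplicity contributed to $\mu^n$ as an eigenvalue of $A^n$ — which is exactly what the triangularization above delivers, since each Jordan block of $A$ for $\mu$ of size $m$ yields a block of $A^n$ that is upper triangular with $\mu^n$ repeated $m$ times on its diagonal. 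With these two points settled, Proposition~\ref{p:weylineq} applied to $A^n$ gives both inequalities at once.
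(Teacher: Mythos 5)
Your proof is correct and follows exactly the intended derivation: apply the product form of Weyl's inequality (Proposition~\ref{p:weylineq}) to the iterate $A^n$, identify the eigenvalues of $A^n$ as $\lambda_i^n$ and its singular values as $(s_i^{(n)})^n$, then take $n$-th roots and logarithms. The paper omits the argument entirely, stating that the corollary is ``immediately deduced'' from Proposition~\ref{p:weylineq}, and this is precisely the bookkeeping the author had in mind.
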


%%%%%%%%%%%%%%%%%%%%%%%%%%%%%%%%%%%%%%%%%%%%%%%%%%%%%%%%
\subsection{A first-order functional calculus on $\li(\hn)$}
%%%%%%%%%%%%%%%%%%%%%%%%%%%%%%%%%%%%%%%%%%%%%%%%%%%%%%%%

In \S\ref{s:density} we will need to analyze the operators
$$
\{\sqrt[2n]{\Mak^{\dg n}\;\Mak^n},\ \ n\in\N\}\,.
$$
We will show that they are (self-adjoint) pseudodifferential
operators (i.e. ``quantum observables'') on $\hn$, and then draw
some estimates on their spectra in the semiclassical limit via
counting functions and trace methods. We will use for this a
functional calculus for operators in $\li(\Hi_N)$, obtained from a
Cauchy formula, via the method of almost analytic extensions.

 Let $a\in S_\delta^0$ be a real symbol. In order to localize the spectrum of
$\opn(a)$ over a set  depending on $h$, we
will make use of compactly supported functions $\fc\in C_0^\infty(\R)$
which can have variations of order 1 over distances of order $w(h)$, for some continuous
function $w>0$, satisfying  $w(h)\hto0 0$. 
 
The functions $\fc$ have  derivatives growing as $h\to 0 $:  for any $m\in\N$, we will assume that 
\begin{equation}\label{h:countfunct}
\|\d^m \fc\|_{C^0}\leq C_m w(h)^{-m}\,.
\end{equation}
We will call such functions 
$w(h)-$ admissible. 
Our main goal in this section
consist in defining the operators $\fc(\opn(a))$ and characterize
them as pseudodifferential operators on the torus.

We first construct an almost analytic extension
$\tfc\in C_0^{\infty}(\C)$ satisfying
\begin{equation}\label{tf1}
 \|\dbar \tfc\|_{C^0}\leq C_m |\Im z|^m w(h)^{-m-2}\ ,\ \forall m\geq0
\end{equation}
\begin{equation}\label{tf2}
\tfc|_\R=\fc\,,
\end{equation}
where $\dbar$ stands for $\frac{\d}{\d\bar
z}=\frac{1}{2}(\frac{\d}{\d x}+\i\frac{\d}{\d y})$. For this purpose, we follow closely \cite{ds}  : one put $\chi\in C_0^\infty(\R)$ be equal to 1 in a neighborhood of
0, $\psi \in C_0^\infty(\R)$ equal to 1 in a  neighborhood of
$\Supp(f_w)$ and define
\begin{equation*}
 \tfc(x+\i y)=\frac{\psi(x)\chi(y)}{2\pi}\int \e^{\i(x+\i y)\xi}\chi(y\xi)\hat f_w (\xi)d\xi\,.
\end{equation*}
We will now check that $\tfc$ is a function that satisfies Eqs.
\eqref{tf1} and \eqref{tf2}. Notice first that $\tfc$ is compactly supported in $\C$, and that Eq. \eqref{tf2} follows clearly from
the Fourier inversion formula. To derive \eqref{tf1}, remark  that
\begin{eqnarray*}
\dbar \tfc &=&\fr{\i\psi(x)\chi(y)}{4\pi}y^m\int \e^{\i(x+\i
  y)\xi}\fr{\chi'(y\xi)}{(y\xi)^m}\xi^{m+1}\hat f_w(\xi)d\xi\\
& &+\,\fr{\psi'(x)\chi(y)+\i\psi(x)\chi'(y)}{4\pi}\int \e^{\i(x+\i y-\ti x)\xi}\chi(y\xi)f_w(\ti
x)d\ti x d\xi \\
&=&\trm{I} + \trm{II}\,.
\end{eqnarray*}
Note that  if $y=0$, $\trm{I}=\trm{II}=0$ because of the properties of $\chi$ and $\psi$,  so \eqref{tf1} is satisfied with $C_m =0$. We now suppose $y\neq 0$. Since $f_w$ is compactly supported, we can integrate by parts and
using \eqref{h:countfunct}, we obtain, setting $t=y\xi$ :
\begin{equation*}%\label{tf3}
\left|\int  \fr{\chi'(y\xi)}{(y\xi)^m} \xi^{m+2}\hat f_w (\xi)\fr{d\xi}{\xi}\right|\leq
\int \left|  \fr{\chi'(t)}{t^{m+1}}
(\d^{m+2}_x\fc(x))\right|dt dx \leq 
C_m  w(h)^{-m-2}\,.
\end{equation*}
In the last inequality, we used the fact that $f_w$ and $t\mapsto t^{-m-1}\chi'(t)$ are compactly supported. Hence, this shows that  $|\trm{I}|\leq C_m |y|^m w(h)^{-m-2}$. To treat the term $\trm{II}$, denote 
$$
F(x,y)=\fr{\psi'(x)\chi(y)+\i\psi(x)\chi'(y)}{4\pi}\ \ \trm{and}\ \ G(x,\ti x,y)=(\i+D_{\ti x})^2 D_{\ti
  x}^m(\fr{f_w(\ti x)}{x-\ti x +\i y})\,.
$$ 
Since $y\neq 0$, we can
rewrite II to get
$$
\trm{II}=\i F(x,y)\int y^m \e^{\i(x-\ti x +\i
  y)\xi}\fr{\chi'(y\xi)y}{(\xi+\i)^2(y\xi)^m}G(x,\ti x, y) d\ti x d\xi\,.
$$
As above, we set $t=y\xi$. This gives
\begin{eqnarray*}
\int \left|  y^m \fr{\chi'(y\xi)y}{(\xi+\i)^2(y\xi)^m}G(x,\ti x, y) 
\right|d\ti x d\xi&\leq& \int \left| y^{m+2} \fr{\chi'(t)}{(t+\i y)^2t^m}G(x,\ti x, y)  \right|dtd\ti x\\
&\leq& \int \left| y^{m+2} \fr{\chi'(t)}{t^{m+2}} G(x,\ti x,y) \right| dtd\ti x\,.
\end{eqnarray*}
Let us distinguish two cases.
\begin{itemize}
\item  If $x\notin\Supp\psi$, then $\ti x\in \Supp f_w\Rightarrow \ x-\ti x\neq 0$, 
from which we deduce that $\int |G(x,\ti x,y )|d\ti x\leq c_m w(h)^{-m-2}$.
\item If $x\in \Supp\psi$, then $F(x,y)=\i(4\pi)^{-1}\chi'(y)$. 
If $\chi'(y)=0$, we get $\trm{II}=0$.  Otherwise, we have $b_1\leq |y|\leq b_2 $ for some fixed constants $b_1,b_2>0$ depending on $\chi$. In this case, we get again  $\int |G(x,\ti x,y )|d\ti x\leq c_m w(h)^{-m-2}$. 
\end{itemize}
Grouping the results, we see that $|\trm{II}|\leq C_m |y|^{m+2} w(h)^{-m-2}\leq \ti C_{m}|y|^{m}w(h)^{-m-2}$, and since  $|\trm{I}|\leq C_m |y|^m w(h)^{-m-2}$, it follows that  Eq. \eqref{tf1} holds.

Considering a function $\fc$ as above, we now  characterize
$\fc(\opn(a))$ as a pseudodifferential operator on the torus. We begin by two
lemmas concerning resolvent estimates. For $z\notin\Spec\opn(a)$, we
denote $\cR_z(a)=(\opn(z-a))^{-1}$ the resolvent of $\opn(a)$ at point
$z$.
%%%%%%%%%%%%%%%%%%%%%%%%%%%%%%%%%%%%%%%%%%%%%%%%%%%%%%%%%%%%%%%%%%%%%%
\begin{lem}\label{l:rsv-symb} Let $a\in S_\delta^0$ be a real symbol, and
$\Omega\subset\C$ a bounded domain such that $\Supp
(\tfc)\subset\Omega$. Take $z\in\Omega$ and suppose that $|\Im
z|\geq h^{\eps}$ for some $\eps\in]0,1]$ such that $\delta+\eps<
\fr12$. Then,
$$
 \frac{1}{z-a}\in S_{\delta +\eps}^{\eps}\,.
$$
\end{lem}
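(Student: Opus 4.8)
The plan is to show that $g = 1/(z-a)$ satisfies the derivative bounds defining $S^{\eps}_{\delta+\eps}$, i.e. that $h^{\eps}g$ is uniformly bounded and $\|\d^\alpha g\|_{C^0}\leq C_\alpha h^{-\eps - |\alpha|(\delta+\eps)}$ for every multi-index $\alpha$. Since $a$ is real and $|\Im z|\geq h^{\eps}$, we immediately get $|z-a|\geq |\Im z|\geq h^{\eps}$ on all of $\t2$, hence $\|g\|_{C^0}\leq h^{-\eps}$, which gives the $m=0$ case and the weight $m=\eps$. The remaining task is purely a matter of differentiating $g$ and bookkeeping the powers of $h$.

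The key computational step is the Leibniz/Fa\`a di Bruno expansion: $\d^\alpha\bigl(\tfrac1{z-a}\bigr)$ is a finite sum of terms of the form $(z-a)^{-(k+1)}\prod_{j}\d^{\beta_j}a$ where $\sum_j\beta_j = \alpha$ and $k$ is the number of factors, $1\leq k\leq|\alpha|$. Each factor $\d^{\beta_j}a$ is bounded by $C_{\beta_j} h^{-|\beta_j|\delta}$ because $a\in S^0_\delta$ (indeed $a\in S^0_0$ here, but $S^0_0\subset S^0_\delta$, and using the $\delta$-bound is what produces the stated class), and the prefactor $(z-a)^{-(k+1)}$ is bounded by $h^{-(k+1)\eps}$. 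So a typical term is bounded by
$$
h^{-(k+1)\eps}\,h^{-|\alpha|\delta} = h^{-\eps}\,h^{-|\alpha|\delta}\,h^{-k\eps}\leq h^{-\eps}\,h^{-|\alpha|\delta}\,h^{-|\alpha|\eps} = h^{-\eps}\,h^{-|\alpha|(\delta+\eps)}\,,
$$
using $k\leq|\alpha|$ and $\eps>0$. Summing the finitely many such terms gives $\|\d^\alpha g\|_{C^0}\leq C_\alpha h^{-\eps - |\alpha|(\delta+\eps)}$, which is exactly the membership $g\in S^{\eps}_{\delta+\eps}$. One should also note that the hypothesis $\delta+\eps<\tfrac12$ is exactly what keeps $\delta+\eps$ in the admissible range $]0,\tfrac12]$ for a symbol class, so the statement is meaningful; it plays no role in the estimates themselves.

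There is no real obstacle here; the only thing requiring mild care is making the combinatorial structure of $\d^\alpha(1/(z-a))$ explicit enough to see that every term carries at most $|\alpha|$ resolvent factors beyond the single unavoidable one and exactly $|\alpha|$ total derivatives distributed on the $a$'s. An induction on $|\alpha|$ handles this cleanly: differentiating a term $(z-a)^{-(k+1)}\prod_j\d^{\beta_j}a$ once either raises $k$ by one (producing a new factor $\d a$, so the total derivative count and factor count both go up by one) or adds a derivative to one of the existing $\d^{\beta_j}a$ (leaving $k$ fixed), and in both cases the bound $h^{-\eps}h^{-|\alpha|(\delta+\eps)}$ is reproduced for the incremented multi-index. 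I expect the author's proof to be essentially this short Leibniz-rule argument.
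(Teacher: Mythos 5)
Your proof is correct and follows essentially the same route as the paper: the author also bounds $|z-a|^{-1}\le h^{-\eps}$ from $|\Im z|\ge h^\eps$, invokes the Fa\`a di Bruno formula to write $\partial^\alpha\bigl(\tfrac{1}{z-a}\bigr)$ as a sum over partitions of terms $(z-a)^{-(r+1)}\prod_{B}\partial^B a$, and uses $r\le|\alpha|$ together with $a\in S^0_\delta$ to reach exactly the bound $h^{-\eps}h^{-|\alpha|(\delta+\eps)}$. One small slip in your parenthetical: the hypothesis of the lemma is $a\in S^0_\delta$, not $a\in S^0_0$, but your actual estimate uses the $S^0_\delta$ bound anyway, so nothing in the argument is affected.
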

%%%%%%%%%%%%%%%%%%%%%%%%%%%%%%%%%%%%%%%%%%%%%%%%%%%%%%%%%%%%%%%%%%%%%%%

\begin{proof}
The hypothesis  $|\Im z|\geq h^{\eps}$ implies immediately that
\begin{equation}\label{e:imz}
\fr{1}{|z-a|}\leq h^{-\eps}\,.
\end{equation}
To control the derivatives of $(z-a)^{-1}$, we will make use of the Fa\`a di Bruno
formula \cite{com}. For $n\geq2$,  Let $\alpha=(\alpha_1,...,\alpha_n)\in \N^n$ be a multi-index, and  $\Pi$ be the set of partitions of the ensemble
$\{1,...,|\alpha|\}$. For $\pi\in\Pi$, we write $\pi=\{B_1,...B_r\}$, where
$B_i$ is some subset of $\{1,...,|\alpha|\}$ and can then be seen as a
multi-index. Here  $|\alpha|\geq r\geq 1$, and we denote $|\pi|=r$. For two
smooth functions $g:\R^n\mapsto \R$ and $f:\R\mapsto\R$ such that $f\circ g$ is well defined, one has
\begin{equation}\label{e:fdb}
\d^\alpha f\circ
g=\sum_{\pi\in\Pi}\d^{|\pi|}f(g)\prod_{B\in\pi}\fr{\d^{|B|}g}{\prod_{j\in
  B}\d x_{\alpha_j}}\,.
\end{equation}
We now take $n=2$, and $\alpha\in \N^2$.  Using this formula for
$\d^{\alpha}\fr{1}{z-a}$ and recalling that $a\in S_\delta^{0}$, we
get for each partition $\pi\in\Pi$  a sum of terms which can be
written :
$$
\fr{1}{(z-a)^{r+1}}\prod_{B\in\pi}\d^{B}(z-a)\lesssim
h^{-\eps(r+1)}h^{-\delta |\alpha|}\,.
$$
Since we have $r\leq|\alpha|$, this concludes the proof.
\end{proof}

The preceding lemma allows us to obtain now a useful resolvent
estimate :
%%%%%%%%%%%%%%%%%%%%%%%%%%%%%%%%%%%%%%%%%%%%%%%%%%%%%%%%%%%%%%%%%%%%%%%%
\begin{lem}\label{l:resol} Choose $\eps< \frac{1-2\delta}{4}$.
Suppose as above that $a\in S_\delta^0$ is real  and $z\in\Omega$ with
$|\Im z|\geq h^{\eps}$. Then,
$$
\cR_z(a)=\opn\left( \frac{1}{z-a}\right) + R_h(z)\,,
$$
where $R_h(z)\in \li(\hn)$ satisfies $
\norm{R_h(z)}=\cO(h^{1-2(\delta+2\eps)})\, $, uniformly in $z$.
\end{lem}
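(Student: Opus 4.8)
The plan is to construct the approximate resolvent by a parametrix argument: set $b_z = \frac{1}{z-a} \in S_{\delta+\eps}^{\eps}$ by Lemma~\ref{l:rsv-symb}, and compute how far $\opn(b_z)$ is from being the exact inverse of $\opn(z-a)$. Using the symbol composition rule \eqref{e:symbcalc} (more precisely, the full expansion via \eqref{e:compo2}), I would write
\begin{equation*}
\opn(z-a)\,\opn(b_z) = \opn\big((z-a)\sharp_h b_z\big) = \Id + E_h(z)\,,
\end{equation*}
where the principal symbol $(z-a)\cdot b_z = 1$ cancels, so that $E_h(z) = \opn(r_z)$ with $r_z = (z-a)\sharp_h b_z - 1$ collecting the subleading terms of the Moyal product. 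The key point is to estimate $\|E_h(z)\|$: the first correction term in \eqref{e:compo2} is proportional to $h$ times $\sigma(D_X, D_Y)$ applied to $(z-a)(X)b_z(Y)$, i.e. a $h$-weighted combination of one derivative of $z-a$ (which costs $h^{-\delta}$) times one derivative of $b_z$ (which, since $b_z \in S_{\delta+\eps}^{\eps}$, costs $h^{-\eps}h^{-(\delta+\eps)}$). Collecting, the leading remainder is of size $h \cdot h^{-\delta} \cdot h^{-\eps-(\delta+\eps)} = h^{1-2\delta-3\eps}$; one should check that higher-order terms in the asymptotic expansion are smaller, which holds because each additional order gains a factor $h$ while losing at most $h^{-\delta}\cdot h^{-(\delta+\eps)}$ per pair of derivatives, so the gain dominates as long as $\eps$ and $\delta$ are small. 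Thus $\|E_h(z)\| = \cO(h^{1-2\delta-3\eps})$.

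Next I would invert $\Id + E_h(z)$ by Neumann series. Since $1-2\delta - 3\eps > 0$ for $h$ small (using $\eps < \frac{1-2\delta}{4}$, which gives $1-2\delta-3\eps > 1-2\delta - \frac{3(1-2\delta)}{4} = \frac{1-2\delta}{4} > 0$), we have $\|E_h(z)\| \leq \frac{1}{2}$ for $h$ small enough, uniformly in $z \in \Omega$, so $(\Id + E_h(z))^{-1}$ exists and $\|(\Id+E_h(z))^{-1} - \Id\| \leq 2\|E_h(z)\| = \cO(h^{1-2\delta-3\eps})$. Then
\begin{equation*}
\cR_z(a) = (\opn(z-a))^{-1} = \opn(b_z)\,(\Id + E_h(z))^{-1} = \opn(b_z) + \opn(b_z)\big((\Id+E_h(z))^{-1} - \Id\big)\,,
\end{equation*}
so $R_h(z) = \opn(b_z)\big((\Id+E_h(z))^{-1}-\Id\big)$ satisfies $\|R_h(z)\| \leq \|\opn(b_z)\| \cdot \cO(h^{1-2\delta-3\eps})$. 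Since $b_z \in S_{\delta+\eps}^{\eps}$, the $L^2$-continuity bound gives $\|\opn(b_z)\| = \cO(h^{-\eps})$, hence $\|R_h(z)\| = \cO(h^{1-2\delta-4\eps}) = \cO(h^{1-2(\delta+2\eps)})$, which is the claimed estimate. (One must also note that the hypotheses here force the $\sharp_h$-expansion to be applied with the symbol class $S_{\delta+\eps}^{\eps}$, which is why the constraint $\delta + \eps < \frac12$ from Lemma~\ref{l:rsv-symb} is needed for $b_z$ to be an admissible symbol at all, and $\eps < \frac{1-2\delta}{4}$ is what makes the error genuinely small.)

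The main obstacle I expect is the bookkeeping of the symbolic remainder $r_z = (z-a)\sharp_h b_z - 1$: one needs a quantitative version of the composition expansion \eqref{e:symbcalc} valid in the $h$-dependent symbol classes $S_\delta^m$ with the exponents tracked carefully, rather than just the first-order statement \eqref{e:symbcalc} for fixed symbols. Concretely, the subtlety is that $b_z$ itself depends on $h$ through $z$ (via $|\Im z| \geq h^\eps$) and lies in a class with positive order $\eps$ and spreading parameter $\delta + \eps$, so each derivative hitting $b_z$ is more expensive than for a standard symbol; keeping the arithmetic of the exponents consistent through all terms of the expansion — and checking that the tail of the expansion (or the Calderón--Vaillancourt estimate on the integral remainder, as in the proof of Proposition~\ref{p:gard}) is controlled — is where the real work lies. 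Everything else is a routine Neumann series argument.
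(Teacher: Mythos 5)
Your proof is correct and follows essentially the same route as the paper: construct the parametrix $\opn(\frac{1}{z-a})$ using Lemma~\ref{l:rsv-symb}, estimate the composition error $\opn(z-a)\opn(\frac{1}{z-a})-\Id$ via the symbolic calculus \eqref{e:symbcalc} to get an $\cO(h^{1-2\delta-3\eps})$ remainder, invert by Neumann series, and multiply by the $\cO(h^{-\eps})$ bound on $\opn(\frac{1}{z-a})$ to land at $\cO(h^{1-2\delta-4\eps})$. (Minor quibble: the intermediate arithmetic $h\cdot h^{-\delta}\cdot h^{-\eps-(\delta+\eps)}$ equals $h^{1-2\delta-2\eps}$, not $h^{1-2\delta-3\eps}$; the correct exponent $1-2\delta-3\eps$ comes from embedding both factors into $S_{\delta+\eps}$ and applying \eqref{e:symbcalc} with $\delta\to\delta+\eps$, which is what the paper does — but since the weaker bound is what is used downstream, this has no effect on the conclusion.)
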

%%%%%%%%%%%%%%%%%%%%%%%%%%%%%%%%%%%%%%%%%%%%%%%%%%%%%%%%%%%%%%%%%%%%%%%%

\begin{proof} We will denote $\Ohhnz(q)$ an operator which depends continuously on $z$ and whose norm in $\hn$ is of order $q$. 
By the preceding lemma and the symbolic calculus \eqref{e:symbcalc}, for $|\Im z| \geq h^{\eps} $ we can write :
$$
\opn(z-a)\opn\left(\fr{1}{z-a}\right)=\Id -\opn(r_z)
$$
with $\opn(r_z)=\Ohhnz(h^{1-2\delta-3\eps})$. For $h$ small enough, the
right hand side is invertible :

\begin{equation}\label{e:neum}
(\Id-\opn(r_z))^{-1}=\Id +\Ohhnz(h^{1-2\delta-3\eps})\,.
\end{equation}
We now remark that the G\aa rding inequality implies
$\opn\lp\fr{1}{z-a}\rp=\Ohhnz(h^{-\eps})\,$. Since we have obviously
$$
\cR_z(a)=\opn\left(\fr{1}{z-a}\right)(\Id-\opn(r_z))^{-1}\,,
$$
we obtain using Eq. \eqref{e:neum}
$$
\cR_z(a)=\opn\lp\fr{1}{z-a}\rp + \Ohhnz(h^{1-2\delta-4\eps})\,.
$$
It is now straightforward to check that all the remainder terms are
uniform with respect to $z$, since we have $z\in\Omega$ and $|\Im
z|\geq h^\eps$.
\end{proof}

We can now formulate a first order functional calculus for $a\in
S_\delta^0$.
%&&&&&&&&&&&&&&&&&&&&&&&&&&&&&&&&&&&&&&&&&&&&&&&&&&&&&&&&&&&&&&&&&&&&&
\begin{prop}[Functional calculus]\label{p:calcfunct}
Let $a\in S_\delta^0$ real, and  $\fc$ a $w-$admissible
function with $w(h)^{-1}\lesssim h^{-\eta}$ such that
\begin{equation}\label{e:condition}
0\leq \eta<\fr{1-2\delta}{6}\,.
\end{equation}
Then, for any $\eps>0$ such
that
$
\eta<\eps<\fr{1-2\delta}{6}\,,
$
we have :
$$
\fc(\opn(a))=\opn(\fc(a))+ \Ohh(h^{1-2\delta-6\eps})
$$
and $\fc(a)\in S_{\eta+\delta}^0$.
\end{prop}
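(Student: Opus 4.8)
The plan is to represent $\fc(\opn(a))$ by the Helffer--Sj\"ostrand formula
$$
\fc(\opn(a))=-\fr{1}{\pi}\int_{\C}\dbar\tfc(z)\,\cR_z(a)\,L(dz)\,,
$$
where $L(dz)$ is Lebesgue measure on $\C$ and $\tfc$ is the almost analytic extension constructed above, and then to substitute the resolvent expansion from Lemma~\ref{l:resol}. Writing $\cR_z(a)=\opn\big(\fr{1}{z-a}\big)+R_h(z)$ with $\norm{R_h(z)}=\cO(h^{1-2(\delta+2\eps)})$, the integral splits into a ``main'' term
$$
-\fr{1}{\pi}\int_{\C}\dbar\tfc(z)\,\opn\!\Big(\fr{1}{z-a}\Big)L(dz)=\opn\!\Big(-\fr1\pi\int_\C\dbar\tfc(z)\fr{1}{z-a}L(dz)\Big)=\opn(\fc(a))\,,
$$
using linearity of $\opn$ and the scalar Helffer--Sj\"ostrand formula applied pointwise on $\t2$, and a ``remainder'' term $-\fr1\pi\int_\C\dbar\tfc(z)R_h(z)\,L(dz)$ whose norm I must show is $\cO(h^{1-2\delta-6\eps})$.

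To control the remainder I would use the bound \eqref{tf1} on $\dbar\tfc$, namely $\norm{\dbar\tfc}_{C^0}\leq C_m|\Im z|^m w(h)^{-m-2}$ for every $m\geq 0$. Since $\Supp\tfc$ is contained in the bounded domain $\Omega$, and since for $|\Im z|\geq h^\eps$ Lemma~\ref{l:resol} gives $\norm{R_h(z)}=\cO(h^{1-2\delta-4\eps})$ uniformly, I split the $z$-integral into the region $|\Im z|\geq h^\eps$ and the region $|\Im z|<h^\eps$. On the first region, using $m=1$ in \eqref{tf1} I get a contribution of order $w(h)^{-3}\cdot h^{1-2\delta-4\eps}\cdot\Vol(\Omega)\lesssim h^{-3\eta}h^{1-2\delta-4\eps}$, which since $\eta<\eps$ is $\cO(h^{1-2\delta-7\eps})$ --- I would tune the exponents (taking $\eps$ slightly larger or choosing $m$ more cleverly) so that the final bound reads $h^{1-2\delta-6\eps}$ as stated. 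On the second region $|\Im z|<h^\eps$, I cannot use Lemma~\ref{l:resol}, so instead I bound the resolvent crudely: either via $\norm{\cR_z(a)}\leq\dist(z,\Spec\opn(a))^{-1}$ together with the fact that $\Spec\opn(a)\subset[a_--Ch^{1-2\delta},a_++Ch^{1-2\delta}]$ (Proposition~\ref{p:gard}) so that away from a thin strip the resolvent is bounded, and absorb the thin strip near the real spectrum using the higher-$m$ estimates in \eqref{tf1} (the factor $|\Im z|^m$ kills the resolvent blowup $|\Im z|^{-1}$ as long as $m\geq 2$, and one integrates $|\Im z|^{m-1}$ over $|\Im z|<h^\eps$ to gain powers of $h^\eps$). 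This is the standard Helffer--Sj\"ostrand bookkeeping, and the choice of the admissibility exponent $w(h)^{-1}\lesssim h^{-\eta}$ with $\eta<\fr{1-2\delta}{6}$ is exactly what makes all these error contributions smaller than $1$.

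Finally, to see that $\fc(a)\in S_{\eta+\delta}^0$: $\fc(a)$ is bounded since $\fc$ is, and for derivatives I apply the Fa\`a di Bruno formula \eqref{e:fdb} to $\d^\alpha(\fc\circ a)$. Each term is a product $\d^{|\pi|}\fc(a)\prod_{B\in\pi}\d^{|B|}a$; using $\norm{\d^m\fc}_{C^0}\leq C_m w(h)^{-m}\lesssim h^{-m\eta}$ and $a\in S_\delta^0$ so $\norm{\d^{|B|}a}_{C^0}\lesssim h^{-\delta|B|}$, the term is bounded by $h^{-|\pi|\eta}h^{-\delta|\alpha|}\leq h^{-|\alpha|\eta}h^{-\delta|\alpha|}=h^{-|\alpha|(\eta+\delta)}$ since $|\pi|\leq|\alpha|$, which is precisely the $S_{\eta+\delta}^0$ estimate. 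The main obstacle is the careful treatment of the near-real region $|\Im z|<h^\eps$ where Lemma~\ref{l:resol} is unavailable; here one must exploit the polynomial vanishing of $\dbar\tfc$ in $|\Im z|$ (with $m$ chosen large enough, e.g.\ $m=2$) against the at-worst $|\Im z|^{-1}$ growth of the true resolvent, which requires knowing $\Spec\opn(a)$ lies in a real $\cO(h^{1-2\delta})$-neighborhood of $\Ran a$ --- supplied by Proposition~\ref{p:gard} --- so that for $z$ with $\Re z$ outside that neighborhood the resolvent is in fact bounded and the estimate is easy, while for $\Re z$ inside it the $|\Im z|^m$ factor saves the integral.
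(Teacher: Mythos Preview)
Your overall strategy coincides with the paper's: Helffer--Sj\"ostrand representation, split at $|\Im z|=h^\eps$, and Lemma~\ref{l:resol} on the far region. Two execution points need fixing.

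On the region $|\Im z|\geq h^\eps$ you must take $m=0$ in \eqref{tf1}, not $m=1$. With $m=0$ one gets $\|\dbar\tfc\|_{C^0}\lesssim w(h)^{-2}\lesssim h^{-2\eta}$; combined with $\|R_h(z)\|=\cO(h^{1-2\delta-4\eps})$ over the bounded support of $\tfc$, the remainder integral is $\cO(h^{1-2\delta-4\eps-2\eta})\leq\cO(h^{1-2\delta-6\eps})$ exactly as stated, using $\eta<\eps$. Any $m\geq 1$ only worsens the $w(h)^{-m-2}$ loss with no compensating gain, since $|\Im z|$ is merely bounded on $\Supp\tfc$. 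Your suggestion of ``taking $\eps$ slightly larger'' goes the wrong way.

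The near-real region $|\Im z|<h^\eps$ is not the obstacle but the easy part, and G\aa rding is irrelevant there. Since $a$ is real, $\opn(a)$ is self-adjoint, so $\|\cR_z(a)\|\leq|\Im z|^{-1}$ for \emph{every} nonreal $z$ --- no localization of the spectrum on the real line is needed. Combining this with \eqref{tf1} and integrating over the thin strip yields a bound $\cO\big(h^{(m-1)\eps}w(h)^{-m-2}\big)$; since $h^{\eps}w(h)^{-1}\lesssim h^{\eps-\eta}\to 0$, taking $m$ large gives $\cO(h^\infty)$ for the \emph{full} integral $\int_{\Omega_1}\dbar\tfc\,\cR_z(a)$. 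This is how the paper proceeds: split $\C=\Omega_1\cup\Omega_2$ first, dispose of $\Omega_1$ entirely as $\cO(h^\infty)$, and only then on $\Omega_2$ write $\cR_z(a)=\opn(1/(z-a))+R_h(z)$. Your ordering --- decomposing the resolvent over all of $\C$ first --- is awkward because Lemma~\ref{l:resol} gives no control of $R_h(z)$ on $\Omega_1$; you then have to separately bound $\opn(1/(z-a))$ there (again $\cO(h^\infty)$, via the scalar bound $|1/(z-a)|\leq|\Im z|^{-1}$), an unnecessary detour. The Fa\`a di Bruno argument for $\fc(a)\in S_{\eta+\delta}^0$ is correct and matches the paper.
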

%&&&&&&&&&&&&&&&&&&&&&&&&&&&&&&&&&&&&&&&&&&&&&&&&&&&&&&&&&&&&&&&&&&&&&
\begin{proof}
Let us write  the Lebesgue measure $dxdy=\frac{d\bar z
\wedge dz}{2i}$. From the operator theory point of view, we already
know that for any bounded self-adjoint operator $A$ 
$$
\fc(A)=\frac{1}{2\i\pi}\int_\C \dbar \tfc(z,\bar z)
(z-A)^{-1}\,d\bar z\wedge dz\,,
$$
see for example \cite{ds} for a proof.

Because of the condition expressed by Eq. \eqref{e:condition}, it is
possible to choose $\eps>0$ such that $\eta<\eps<\fr{1-2\delta}{6}$.
We now divide the complex plane into two subsets depending on $\eps$: $\C=\Omega_1\cup \Omega_2$, where $\Omega_1=\{z\in\C : |\Im z|<
h^{\eps}\}$ and $\Omega_2=\C\setminus\Omega_1$. We first treat the
case $z\in\Omega_1$. We recall that for $z\notin\R$, one has
\begin{equation}\label{e:rs-ineq}
\norm{\cR_z(a)}\leq\fr{1}{|\Im z|}\quad\trm{and}\quad \fr{1}{|z-a|}\leq \fr{1}{|\Im z|}\,.
\end{equation}
Using Eqs. \eqref{tf1} and \eqref{e:rs-ineq}, we obtain for any
$m\in\N$ :
\begin{eqnarray*}
 \lno \int_{\Omega_1} \dbar \tfc(z,\bar z) \cR_z(a)\,d\bar z\wedge dz
\rno
 &\leq&\int_{\Omega_1\cap\Supp(\tfc)}\frac{1}{|\Im z|}C_m|\Im z|^m w(h)^{-m-2}d\bar z\wedge dz\\
&\lesssim&h^{(m-1)\eps}w(h)^{-m-2}\,.
\end{eqnarray*}
Since $h^{\eps} w(h)^{-1}\lesssim h^{\eps-\eta}\hto0 0$, by taking $m$
sufficiently  large we obtain 
$$\lno\int_{\Omega_1} \dbar \tfc
\cR_z(a)\,d\bar z\wedge dz\rno=\cO(h^{\infty})\,.
$$
We now consider $z\in\Omega_2$. Using Lemma \ref{l:resol}, we obtain
\begin{eqnarray}%\label{e:funct0}
\int_{\Omega_2} \dbar \tfc(z,\bar z) \cR_z(a)\,\fr{d\bar z\wedge
dz}{2\i\pi}&=&\int_{\Omega_2}\dbar \tfc(z,\bar z)
\opn\left(\frac{1}{z-a}\rp \fr{d\bar z\wedge dz}{2\i\pi} \nonumber\\
& &+ \int_{\Omega_2}\dbar \tfc(z,\bar z) R_h(z)\fr{d\bar z\wedge
dz}{2\i\pi}\nonumber\\
&=&\trm{I}\, +\, \trm{II}\,.\label{e:funct0}
\end{eqnarray}
To rewrite the first term I in the right hand side, we use the
Fourier representation \eqref{e:fourier}:
\begin{eqnarray*}
\trm{I}&=& \fr{1}{2\i\pi}\int_{\Omega_2}\dbar \tfc \sum_{\mu,\nu\in
\Z}  \hat T_{\mu ,\nu } c_{\mu ,\nu }(z) d\bar z\wedge
dz\\
&=&\sum_{\mu,\nu\in \Z} \hat T_{\mu ,\nu }\int_{\t2} dxdy
e^{2i\pi(\mu x-\nu y)} \fr{1}{2\i\pi}\int_{\Omega_2}
\fr{\dbar\tfc(z,\bar z)}{z-a(x,y)}d\bar z\wedge dz\\
&=&\opn\lp \fr{1}{2\i\pi}\int_{\Omega_2} \fr{\dbar \tfc(z,\bar
z)}{z-a}d\bar z\wedge dz\rp\,,
\end{eqnarray*}
where we used the uniform  convergence of the Fourier series
\eqref{e:fourier}, the Fubini Theorem and the linearity of the
quantization $f\mapsto \opn(f)$. Note that
$$\int_{\Omega_2} \frac{\dbar \tfc(z,\bar z)}{z-a}d\bar z\wedge dz =\int_{\C}
\frac{\dbar \tfc(z,\bar z)}{z-a}d\bar z\wedge dz -\int_{\Omega_1}
\frac{\dbar \tfc(z,\bar z)}{z-a}d\bar z\wedge dz\,.$$
Now, Eqs. \eqref{e:rs-ineq} and \eqref{tf1} yields to
$$
\int_{\Omega_1} \frac{\dbar \tfc(z,\bar z)}{z-a}d\bar z\wedge dz=\cO(h^\infty)\,,
$$
and since the Cauchy formula for $C^{\infty}$ functions
implies that
$$
\fr{1}{2\i\pi}\int_{\C} \frac{\dbar \tfc(z,\bar z)}{z-a}d\bar
z\wedge dz =\tfc|_{\R}(a)=\fc(a)\,,
$$
we can simply write
\begin{equation}\label{e:funct1}
\trm{I}=\opn\left(\fr{1}{2\i\pi}\int_{\Omega_2} \frac{\dbar
\tfc(z,\bar z)}{z-a}d\bar z\wedge dz\right)
=\opn(\fc(a))+\Ohh(h^{\infty})\,.
\end{equation}
For the  term II in Eq. \eqref{e:funct0}, we first remark that
$\dbar  \tfc$ is compactly supported. Then, we use Eq. \eqref{tf1}
and Lemma \ref{l:resol} to write :
\begin{eqnarray}\label{e:funct2}
 \lno\int_{\Omega_2}\dbar \tfc(z,\bar z) R_h(z)d\bar z\wedge dz
\rno &\leq& \int_{\Omega_2} \norm{R_h(z)}\,|\dbar \tfc (z,\bar z) |\,
d\bar z\wedge dz  \nonumber\\
&\lesssim&  h^{1-2\delta-4\eps}h^{-2\eta}\leq h^{1-2\delta-6\eps}\,.
\end{eqnarray}
Hence, combining Eqs. \eqref{e:funct0},\eqref{e:funct1} and
\eqref{e:funct2},
\begin{eqnarray*}
\frac{1}{2\i\pi}\int_\C \dbar \tfc(z,\bar z) \cR_z(a)\,d\bar z\wedge
dz&=&\opn(\fc(a))+\Ohh( h^{1-2\delta-6\eps})\,,
\end{eqnarray*}
as was to be shown. The last assertion is a direct application of Eqs. \eqref{e:fdb} and  \eqref{h:countfunct}.
\end{proof}

Below, we will  frequently  make use of the functional calculus for some
perturbed operators. Our main tool for this purpose is stated as
follows:
%%%%%%%%%%%%%%%%%%%%%%%%%%%%%%%%%%%%%%%%%%%%%%%%%%%%%%%%%%%%%%%%%%%%%%%
\begin{cor}[Functional calculus -- perturbations]\label{c:calcfunctpert}
Let $a\in S_\delta^0$ be a real symbol and  $\fc$ be a $w-$admissible
function with $w(h)^{-1}\lesssim h^{-\eta}$ and $0\leq \eta<\fr{1-2\delta}{6}$.
Consider $B_h\in\li(\hn)$, with the properties :
$$
\begin{array}{ll}
(i) & \opn(a)+B_h \trm{ is self-adjoint}\,,\\
(ii) & \|B_h\|=\cO(h^{\nu}) \trm{for some } \nu>4\eta>0\,. \\
\end{array}
$$
Then, at first order, the functional calculus is still valid : for
any $\eps>0$ such that
$$
\eta<\eps<\fr{1-2\delta}{6}\quad\trm{and}\quad \eta<\eps<\fr{\nu}{4}\,,
$$
we have :
$$
\fc(\opn(a)+B_h)=\opn(\fc(a))+\Ohh(h^{\min(1-2\delta-6\eps,
\nu-4\eps)})
$$
and $f_w(a)\in S_{\eta+\delta}^0$.
\end{cor}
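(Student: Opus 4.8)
The plan is to reduce the perturbed statement to the unperturbed functional calculus of Proposition~\ref{p:calcfunct} by resolvent perturbation, exactly mirroring the structure of that proof. Write $A=\opn(a)$ and $\widetilde A = A + B_h$, both self-adjoint by hypothesis $(i)$. Starting from the Helffer--Sj\"ostrand formula
$$
\fc(\widetilde A)=\frac{1}{2\i\pi}\int_\C \dbar\tfc(z,\bar z)\,(z-\widetilde A)^{-1}\,d\bar z\wedge dz\,,
$$
the idea is to expand the perturbed resolvent around the unperturbed one. For $z\in\Omega_2$ (i.e. $|\Im z|\geq h^\eps$), Lemma~\ref{l:resol} gives $(z-A)^{-1}=\opn((z-a)^{-1})+\Ohhnz(h^{1-2\delta-4\eps})$, and since $\eps<\nu/4$ we have $\|B_h\|=\cO(h^\nu)$ with $\nu>4\eps$, so $\|B_h(z-A)^{-1}\|\leq \|B_h\|\,|\Im z|^{-1}=\cO(h^{\nu-\eps})=o(1)$. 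Hence the Neumann series converges:
$$
(z-\widetilde A)^{-1}=(z-A)^{-1}\big(\Id - B_h (z-A)^{-1}\big)^{-1}=(z-A)^{-1}+(z-A)^{-1}B_h(z-A)^{-1}+\cdots\,,
$$
and the tail beyond the leading term is $\Ohhnz(h^{\nu-2\eps})$ in norm (using $\|(z-A)^{-1}\|\leq h^{-\eps}$ twice on the $B_h$-linear term). Thus $(z-\widetilde A)^{-1}=(z-A)^{-1}+\Ohhnz(h^{\nu-2\eps})$ on $\Omega_2$.

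Next I would integrate this against $\dbar\tfc$. The contribution of $(z-A)^{-1}$ over $\Omega_2$ is, by the computation already carried out inside the proof of Proposition~\ref{p:calcfunct}, equal to $\opn(\fc(a))+\Ohh(h^{1-2\delta-6\eps})$ (combining terms I and II there). The contribution of the remainder $\Ohhnz(h^{\nu-2\eps})$ is bounded, using $\dbar\tfc$ compactly supported together with \eqref{tf1} and $w(h)^{-1}\lesssim h^{-\eta}$, by
$$
\int_{\Omega_2}\|R_h'(z)\|\,|\dbar\tfc(z,\bar z)|\,d\bar z\wedge dz \;\lesssim\; h^{\nu-2\eps}h^{-2\eta}\;\leq\; h^{\nu-4\eps}\,,
$$
exactly as in \eqref{e:funct2}. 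For $z\in\Omega_1$ one repeats the argument at the start of the proof of Proposition~\ref{p:calcfunct}: the bound $\|(z-\widetilde A)^{-1}\|\leq |\Im z|^{-1}$ still holds since $\widetilde A$ is self-adjoint, and combined with \eqref{tf1} and $h^\eps w(h)^{-1}\hto0 0$ this yields an $\cO(h^\infty)$ contribution after choosing $m$ large. Assembling the three pieces gives
$$
\fc(\widetilde A)=\opn(\fc(a))+\Ohh\big(h^{\min(1-2\delta-6\eps,\,\nu-4\eps)}\big)\,,
$$
and the symbol-class assertion $\fc(a)\in S^0_{\eta+\delta}$ is inherited verbatim from Proposition~\ref{p:calcfunct} since $\fc(a)$ does not involve $B_h$.

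The only genuinely new point — and the one I would be most careful about — is the self-adjointness issue: the Helffer--Sj\"ostrand formula and the a priori bound $\|(z-\widetilde A)^{-1}\|\leq|\Im z|^{-1}$ are invoked for $\widetilde A$, which is self-adjoint only by assumption $(i)$; $B_h$ alone need not be. So the perturbative expansion must be organised around the self-adjoint operator $\widetilde A$ for the a priori resolvent bound on $\Omega_1$, but around $A$ (via Lemma~\ref{l:resol}) for the pseudodifferential identification on $\Omega_2$, and one must check the Neumann inversion $(\Id-B_h(z-A)^{-1})^{-1}$ is legitimate there, which is exactly where the quantitative hypothesis $\nu>4\eta$ (together with the freedom to pick $\eta<\eps<\nu/4$) is used to keep $h^{\nu-\eps}=o(1)$ uniformly in $z\in\Omega_2$. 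Everything else is a routine transcription of the proof of Proposition~\ref{p:calcfunct}.
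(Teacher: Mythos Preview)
Your proposal is correct and follows essentially the same route as the paper. The only cosmetic difference is that the paper builds the parametrix for $(z-\opn(a)-B_h)^{-1}$ in one step from $\opn((z-a)^{-1})$, obtaining a combined error $\opn(r_z)+B_h\opn((z-a)^{-1})$ of order $h^{\min(1-2\delta-3\eps,\nu-\eps)}$ before the Neumann inversion, whereas you first invoke Lemma~\ref{l:resol} to get $(z-A)^{-1}$ and then perturb by $B_h$; after inversion both yield the same $R_h$ of order $h^{\min(1-2\delta-4\eps,\nu-2\eps)}$, and the remaining integration against $\dbar\tfc$ and the $\Omega_1$ estimate are identical.
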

%%%%%%%%%%%%%%%%%%%%%%%%%%%%%%%%%%%%%%%%%%%%%%%%%%%%%%%%%%%%%%%%%%%%%%%
\begin{proof}
We must find the resolvent of $\opn(a)+B_h$. From $(ii)$ we can choose $\eps>0$ such
that $\eta<\eps<\fr{1-2\delta}{6}$ and  $\eta<\eps<\fr{\nu}{4}$. Note that  this implies
$\nu-4\eps>0$. As in Lemma \ref{l:resol}, we choose a
compact domain $\Omega\subset\C$ with $z\in\Omega$, and split $\Omega=\Omega_1\cup\Omega_2$ as above. Suppose first that $z\in \Omega_2$, i.e. $|\Im(z)|\geq
h^\eps$. Then, using the condition $(ii)$ we get
$$
\Id - (\opn(z-a)-B_h)\opn(\fr{1}{z-a})=  \opn(r_z) +
B_h\opn(\fr{1}{z-a})\,.
$$
The norm of the right hand side is of order
$h^{\min(1-2\delta-3\eps,\nu-\eps)}$ uniformly for $z\in\Omega_2$, hence for $h$ small enough, we can use the same method  employed
in the Lemma \ref{l:resol} to get :
\begin{eqnarray*}
(z-\opn(a)-B_h)^{-1}&=& \opn(\fr{1}{z-a})\lp\Id-\opn(r_z) -
B_h\opn(\fr{1}{z-a})\rp ^{-1}\\
&=& \opn( \fr{1}{z-a} ) + R_h
\end{eqnarray*}
where $\norm{R_h}\lesssim h^{\min(1-2\delta-4\eps,\nu-2\eps)}$
uniformly in $z$.  The next steps are now exactly the same as above,
and we end up with
$$
\fc(\opn(a)+B_h)=\opn(\fc(a))+  \int_{\Omega_2}\dbar \tfc(z,\bar z)
R_h(z)\fr{d\bar z\wedge dz}{2\i\pi}
 +\Ohh(h^{\infty})\,,
$$
where now $\| R_h(z)\|\lesssim h^{\min(1-2\delta-4\eps,\nu-2\eps)}$. Since $\|\dbar \tfc\|_{C^0}\lesssim h^{-2\eta}$, 
this concludes the proof.
\end{proof}

As a direct application of the last two proposition, let us show an
analogue of the spectral Weyl law on the torus.
%%%%%%%%%%%%%%%%%%%%%%%%%%%%%%%%%%%%%%%%%%%%%%%%%%%%%%%%%%%%%%%%%%%%%%%
\begin{prop}\label{p:Wlaw}
Let $a\in S_0^0$, and  $B_h\in\cL(\hn)$ as in
\corref{c:calcfunctpert}.   Choose $\eps>0$ arbitrary small but fixed, and call $A_h=\opn(a)+B_h$  Let $E_1,E_2$ be positive numbers,  $I\defeq[E_1,E_2]$ and $I_\pm\defeq[E_1\mp \eps,E_2\pm \eps]$. Then,
\begin{equation}
\int_{\t2} \bbbone_{I_-}(a) d\mu +o_h(1) \leq 
 h\#\{\lambda\in\Spec A_h \cap I\} \leq \int_{\t2} \bbbone_{I_+}(a)+o_h(1)\,.
\end{equation}
\end{prop}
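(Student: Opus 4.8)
The plan is to squeeze the counting function $\#\{\lambda\in\Spec A_h\cap I\}=\Tr\bbbone_I(A_h)$ between the traces of $\chi_\pm(A_h)$, where $\chi_\pm\in C_0^\infty(\R)$ are fixed cutoffs pinching $\bbbone_I$ from above and below, and then to evaluate $h\,\Tr\chi_\pm(A_h)$ by means of the functional calculus of Corollary~\ref{c:calcfunctpert} together with the trace formula \eqref{e:trace}.

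Concretely, keep the given $\eps>0$ and choose $\chi_\pm\in C_0^\infty(\R)$, independent of $h$, with $0\le\chi_\pm\le1$ and
\[
\bbbone_{I}\le\chi_+\le\bbbone_{I_+}\,,\qquad \bbbone_{I_-}\le\chi_-\le\bbbone_{I}\,,
\]
that is, $\chi_+\equiv1$ on $I$ and $\Supp\chi_+\subset I_+$, while $\chi_-\equiv1$ on $I_-$ and $\Supp\chi_-\subset I$ (when $E_2-E_1<2\eps$, take simply $\chi_-\equiv0$, making the lower bound trivial). Their transition regions have width of order $\eps$, so $\|\d^m\chi_\pm\|_{C^0}\le C_m$ with $C_m$ independent of $h$. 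Pick a small $\eta>0$ with $4\eta<\nu$ and $\eta<1/6$ (here $\delta=0$ because $a\in S_0^0$), and set $w(h)=h^\eta$; then $w(h)\to0$ and $\|\d^m\chi_\pm\|_{C^0}\le C_m\le C_m w(h)^{-m}$ for $h$ small, so $\chi_\pm$ are $w$-admissible in the sense of \eqref{h:countfunct}. Finally fix an auxiliary parameter $\eps_0$ with $\eta<\eps_0<\min\{1/6,\nu/4\}$.

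Since $A_h=\opn(a)+B_h$ is self-adjoint, monotonicity of the spectral calculus of self-adjoint matrices gives $0\le\bbbone_I(A_h)\le\chi_+(A_h)$ and $0\le\chi_-(A_h)\le\bbbone_I(A_h)$; taking traces and recalling $\Tr\bbbone_I(A_h)=\#\{\lambda\in\Spec A_h\cap I\}$ (with multiplicity),
\[
\Tr\chi_-(A_h)\ \le\ \#\{\lambda\in\Spec A_h\cap I\}\ \le\ \Tr\chi_+(A_h)\,.
\]
Corollary~\ref{c:calcfunctpert}, whose hypotheses $(i)$ and $(ii)$ are exactly the standing assumptions on $B_h$, gives $\chi_\pm(A_h)=\opn(\chi_\pm(a))+\Ohh(h^{\rho})$ with $\rho=\min(1-6\eps_0,\,\nu-4\eps_0)>0$ and $\chi_\pm(a)\in S_\eta^0$. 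The remainder has operator norm $\cO(h^\rho)$ on the $h^{-1}$-dimensional space $\hn$, so its trace is $\cO(h^{\rho-1})$ and contributes $\cO(h^\rho)=o_h(1)$ after multiplying by $h$; and by \eqref{e:trace}, $h\,\Tr\opn(\chi_\pm(a))=\int_{\t2}\chi_\pm(a)\,d\mu+\cO(h^\infty)$. Using $\chi_+\le\bbbone_{I_+}$ and $\chi_-\ge\bbbone_{I_-}$, this turns the bracketing inequality (multiplied by $h$) into
\[
\int_{\t2}\bbbone_{I_-}(a)\,d\mu+o_h(1)\ \le\ h\,\#\{\lambda\in\Spec A_h\cap I\}\ \le\ \int_{\t2}\bbbone_{I_+}(a)\,d\mu+o_h(1)\,,
\]
which is the claim.

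The one delicate point is the bookkeeping of exponents: one must check that $\eta$ and $\eps_0$ can be chosen so that the two constraints required in Corollary~\ref{c:calcfunctpert}, namely $\eta<\eps_0<(1-2\delta)/6$ and $\eta<\eps_0<\nu/4$, hold simultaneously with $\nu>4\eta$ and $\eta<1/6$ — possible as soon as $\eta$ is small enough — and that the resulting exponent $\rho$ is strictly positive, so that the $h$-scaled trace of the remainder is genuinely $o_h(1)$. Everything else (existence of $\chi_\pm$, self-adjointness of $A_h$, monotonicity of the matrix functional calculus, and the semiclassical trace formula) is standard or already at our disposal above.
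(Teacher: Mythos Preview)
Your proof is correct and follows essentially the same approach as the paper: sandwich $\bbbone_I$ between smooth cutoffs $\chi_\pm$, apply the functional calculus of Corollary~\ref{c:calcfunctpert} to $\chi_\pm(A_h)$, and use the trace formula~\eqref{e:trace}. You are simply more explicit than the paper about the admissibility bookkeeping (choosing $w(h)=h^\eta$ with small $\eta>0$) and about bounding the trace of the $\Ohh(h^\rho)$ remainder via $\dim\hn=h^{-1}$; the paper instead glosses over these points, implicitly taking $\eta=0$ since $\chi^\pm$ are $h$-independent.
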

%%%%%%%%%%%%%%%%%%%%%%%%%%%%%%%%%%%%%%%%%%%%%%%%%%%%%%%%%%%%%%%%%%%%%%%
\begin{proof} 
Define a smooth function $\chi^+$  such that for some $C>0$,
$\chi^+(x)=1$ if $x\in I$ and $\chi^+(x)=0$ if $x\notin I_+$. Define
as well $\chi^-=0$ outside $I$  and $\chi^-=1$ on $I_-$. Denote
$D_h=h\#\{\lambda\in\Spec A_h \cap I\}$. Then,
$$
h\Tr(\chi^-(A_h))\leq D_h \leq h\Tr (\chi^+(A_h))\,.
$$
By the \corref{c:calcfunctpert}, $h\Tr (\chi^+(\opn(a)+B_h))=\int_{\t2} \chi^+(a) d\mu + \cO(h^\alpha)$ for some $\alpha>0$, and  $h\Tr(\chi^-(\opn(a)+B_h))=\int_{\t2} \chi^-(a)d\mu + \cO(h^\alpha)$ as well. But obviously, 
$$
\int_{\t2} \bbbone_{I_-}(a) d\mu \leq  \int_{\t2} \chi^-(a)\quad\trm{and}\quad  \int_{\t2} \chi^+(a) \leq \int_{\t2} \bbbone_{I_+}(a)\,.
$$
This yields to
$$
\int_{\t2} \bbbone_{I_-}(a) d\mu +o_h(1)\leq D_h\leq \int_{\t2} \bbbone_{I_+}(a)+o_h(1)\,. 
$$
\end{proof}

%%%%%%%%%%%%%%%%%%%%%%%%%%%%%%%%%%%%%%%%%%%%%%%%%
%%%%%%%%%%%%%%%%%%%%%%%%%%%%%%%%%%%%%%%%%%%%%%%%%
\section{Eigenvalues density}\label{s:density}
%%%%%%%%%%%%%%%%%%%%%%%%%%%%%%%%%%%%%%%%%%%%%%%%%

%%%%%%%%%%%%%%%%%%%%%%%%%%%%%%%%%%%%%%%%%%%%%%%%%
\subsection{The operator $\Sn$}\label{s:Sn}
%%%%%%%%%%%%%%%%%%%%%%%%%%%%%%%%%%%%%%%%%%%%%%%%%
Consider our damping function $a\in S_0^0$, with $\Ran
(|a|)=[a_-,a_+]$, $a_- > 0$, $a_+\leq 1$. To simplify the following
analysis, we will suppose without any loss of generality that
$a_+=1$. As mentioned above, to study the radial distribution of
$\Mak$ it will be useful to first consider the sequence of operators

\begin{equation}\label{e:tsn}
\tsn(a)\defeq \Mak^{\dg n}\;\Mak^n,\qquad n\geq 1\,.
\end{equation}

Let us show that for $h\to 0$ and $n\geq 1$ possibly depending on $h$,  these operators
can be rewritten into a more simple form, involving $n-$time evolutions of the observable $a$ by the map $\ka$. Using the composition of operators
\eqref{e:symbcalc} and the Egorov property \eqref{e:Egorov}, we will
show that the quantum to classical correspondence is valid up to
times of order $\log h^{-1}$, as it is usually expected. In what
follows, for any constant $C$, we will make use of the notation
$C^{\pm}=C\pm\eps$ with $\eps>0$ arbitrary small but \emph{fixed} as
$h\to 0$. We allow the value of $\eps$ to change from equation to
equation, hence $C^\pm$ denotes any constant arbitrary close to $C$,
$C^+$ being larger than $C$ and $C^-$ smaller: 
$\eps$ will then  be chosen small enough so that the equations where $C^\pm$
appear are satisfied. We also recall the following definitions, already introduced in Theorem \ref{th:large}:
\begin{equation}\label{e:ehren}
\Gamma=\log(\sup_x \norm{D\ka|_x})\quad \trm{and} \quad  T_{a,\ka}=\fr{1}{2\Gamma-12\log a_-}\,.
\end{equation}
%%%%%%%%%%%%%%%%%%%%%%%%%%%%%%%%%%%%%%%%%%%%%%%%%%%%%%%%%%%%%%%%%%%%%%%
\begin{prop}\label{p:sn} Let $\tau>0$ be a constant such that $\tau <T_{a,\ka}$. If $E(x)$ denotes the integer part of $x$,  define $n_\tau=E(\tau\log h^{-1})$. Then, for $n\leq n_\tau$, the
operators
\begin{align*}
&\mathcal S_n(a)= \lp\Mak^{\dg n}\;\Mak^n \rp^{\fr{1}{2n}}\\
&\ell \mathcal S_n(a)=\log \mathcal S_n(a)
\end{align*}
are well defined if and only if $\operatorname{Ker} (\Mak) =0$. Furthermore, if we set
$$
a_n\defeq \prod_{i=1}^n |a\circ\ka^i|^{\fr{1}{n}}\,,\quad \lan\defeq
\fr{1}{n}\sum_{i=1}^n \log |a\circ\ka^i|\,,
$$
we have for $n\leq n_\tau$:
\begin{align}
&\lsn(a)=\opn(\lan) +\Ohh(h^{\sigma^-})\label{e:lsn}\\
&\Sn(a)=\opn(a_n) + \Ohh(h^{\sigma^-})\label{e:sn}
\end{align}
where $\sigma=1-\tau/T_{a,\ka}>0$.
\end{prop}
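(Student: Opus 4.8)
The plan is to proceed in three stages: first establish that $\mathcal S_n(a)$ and $\ell\mathcal S_n(a)$ are well-defined operators precisely when $\operatorname{Ker}(\Mak)=0$; second, compute $\Mak^{\dg n}\Mak^n$ as a pseudodifferential operator with symbol close to $a_n^{2n}$ up to controlled errors; and third, apply the functional calculus of Corollary \ref{c:calcfunctpert} with the functions $x\mapsto x^{1/2n}$ and $x\mapsto \frac{1}{2n}\log x$ (suitably cut off away from $0$, which is legitimate since $a_->0$ keeps the spectrum bounded below) to pass from $\tsn(a)$ to $\Sn(a)$ and $\lsn(a)$.

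For the well-definedness: $\tsn(a)=\Mak^{\dg n}\Mak^n$ is self-adjoint and non-negative, so its $2n$-th root is always defined as a non-negative self-adjoint operator; the issue is whether $\log$ of it makes sense, i.e. whether $0\notin\Spec(\tsn(a))$. Since $\tsn(a)=(\Mak^n)^\dg\Mak^n$, its kernel is $\operatorname{Ker}(\Mak^n)=\operatorname{Ker}(\Mak)$ (the unitary $U_h(\kappa)$ is invertible, so $\operatorname{Ker}(\Mak^n)=\operatorname{Ker}(\opn(a)\cdots)$ collapses to $\operatorname{Ker}(\opn(a))$ after peeling off unitaries). Hence $\tsn(a)$ is invertible iff $\operatorname{Ker}(\Mak)=0$, and then so is $\Sn(a)$, making $\lsn(a)$ well-defined. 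Conversely if $\operatorname{Ker}(\Mak)\neq 0$ the logarithm diverges. Note that under our running assumption $a_->0$ and Proposition \ref{p:gard}, $\opn(a)$ is in fact invertible for $h$ small, so this hypothesis is automatically satisfied in the regime of interest — but the statement is phrased to be exact.

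For the symbol computation, I would expand $\Mak^{\dg n}\Mak^n = U_h(\kappa)^{-1}\opn(a)^\dg U_h(\kappa)^{-1}\cdots\opn(a)^\dg\,\opn(a)\cdots U_h(\kappa)\opn(a)U_h(\kappa)$ and repeatedly use the Egorov estimate \eqref{e:Egorov} to move each $\opn(a)$ through the surrounding unitaries, converting $U_h(\kappa)^{-j}\opn(|a|^2)U_h(\kappa)^j$ into $\opn(|a|^2\circ\kappa^j)$ with error $C_{a,\kappa}h^{1-2\delta}$ per step; here $\delta=0$ since $a\in S_0^0$, but the iterated commutators force one to track how the constant $C_{a,\kappa}$ and the symbol seminorms grow with $j$. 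The composition rule \eqref{e:symbcalc} then multiplies the $n$ symbols $|a\circ\kappa^j|^2$ together, giving $\opn\big(\prod_{i=1}^n|a\circ\kappa^i|^2\big)=\opn(a_n^{2n})$ plus a remainder. The \emph{main obstacle} is exactly this bookkeeping: both the Egorov remainder and the symbol-calculus remainder involve derivatives of $a\circ\kappa^j$, which grow like $\|D\kappa^j\|\sim e^{j\Gamma}$ (and worse for higher derivatives), so the accumulated error over $n$ steps behaves roughly like $n\,h^{1-2\delta}e^{Cn\Gamma}$ or $h\,a_-^{-Cn}e^{Cn\Gamma}$ — this is precisely why one needs $n\leq n_\tau = E(\tau\log h^{-1})$ with $\tau<T_{a,\ka}=(2\Gamma-12\log a_-)^{-1}$, so that $h^{-1}\cdot(\text{growth})^n = h^{-1}h^{\tau(2\Gamma-12\log a_-)}$ stays a negative power of $h$, yielding the exponent $\sigma=1-\tau/T_{a,\ka}>0$. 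The factors $2\Gamma$ and $-12\log a_-$ in $T_{a,\ka}^{-1}$ should fall out of carefully tracking the contribution of the expansion of $D\kappa^j$ (giving $\Gamma$, doubled because of the $\dg$-side) and of powers of $a_-^{-1}$ coming from $\|\Mak^{-1}\|$ in \eqref{e:bounds} needed when comparing $\Sn$ and its approximant.

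Finally, with $\tsn(a)=\opn(a_n^{2n})+\Ohh(h^{\sigma'^-})$ in hand, I write $\tsn(a)=\opn(a_n^{2n})+B_h$ with $B_h$ self-adjoint of norm $\cO(h^{\sigma'^-})$, check that $a_n^{2n}\in S_{\eta+\delta}^0$ for an appropriate $\eta$ (the $w$-admissibility scale $w(h)^{-1}\sim h^{-\eta}$ is dictated by how $a_n$ varies, again controlled by $e^{n\Gamma}\sim h^{-\tau\Gamma}$), and apply Corollary \ref{c:calcfunctpert} to $f_w(x)=x^{1/2n}$ and to $f_w(x)=\frac{1}{2n}\log x$ — both smooth on a neighborhood of $\Spec(\tsn(a))\subset[a_-^{2n}-\cO(h),a_+^{2n}+\cO(h)]$, so after cutting them off outside this interval they become admissible functions in the sense of \eqref{h:countfunct}. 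This produces $\Sn(a)=\opn(a_n^{2n\cdot\frac{1}{2n}})+\Ohh(\cdots)=\opn(a_n)+\Ohh(h^{\sigma^-})$ and $\lsn(a)=\opn(\lan)+\Ohh(h^{\sigma^-})$, absorbing all the $\eps$-losses into the $\sigma^-$ notation. The only care needed is that the functional-calculus error exponent $\min(1-2\delta-6\eps,\nu-4\eps)$ does not degrade $\sigma$ below a fixed positive number, which holds because $\sigma=1-\tau/T_{a,\ka}$ is bounded away from $0$ by the strict inequality $\tau<T_{a,\ka}$, leaving room for the arbitrarily small $\eps$.
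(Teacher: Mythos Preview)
Your three-stage strategy matches the paper's approach, and the architecture is correct. However, your accounting of \emph{where} the exponent $\sigma = 1 - \tau(2\Gamma - 12\log a_-)$ comes from is off in a way that would cause trouble in the execution.

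The Egorov/symbol-calculus iteration yields $\tsn(a) = \opn(\tilde a_n) + \Ohh(h^{\nu^-})$ with $\nu = 1 - 2\delta$, $\delta = \tau\Gamma$: the factor $2\Gamma$ arises from the $h^{1-2\delta}$ loss in \eqref{e:symbcalc} and \eqref{e:Egorov} applied to symbols in $S_{\delta^+}^0$, not from any ``$\dg$-side doubling''. More importantly, the factor $-12\log a_-$ does \emph{not} come from $\|\Mak^{-1}\|$. It enters only at the functional-calculus stage: the spectrum of $\tsn(a)$ sits in $[C_{n,h},2]$ with $C_{n,h}\gtrsim a_-^{2n}\sim h^{\eta}$, $\eta = -2\tau\log a_-$, so the smooth cutoff of $\frac{1}{2n}\log x$ near this lower endpoint must be $h^{\eta}$-admissible (the transition scale is set by the spectral lower bound, not by how $a_n$ varies --- the latter is what governs the symbol class $S_{\delta^+}^0$). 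Corollary~\ref{c:calcfunctpert} then requires $\eta < (1-2\delta)/6$; that is the origin of the $6$ (hence $12 = 6\cdot 2$), and with $\eps = \eta^+$ the output error becomes $h^{\min(1-2\delta^+-6\eta^+,\,\nu^- - 4\eta^+)} = h^{(1-2\delta - 6\eta)^-} = h^{\sigma^-}$. So the exponent degrades from $\nu^-$ to $\sigma^-$ only at the functional-calculus step, and the constraint $\tau < T_{a,\ka}$ is precisely the admissibility condition $\eta < (1-2\delta)/6$. With this correction your outline goes through; the paper also computes $\lsn$ first via the cut-off logarithm and then exponentiates to reach $\Sn$, rather than applying $x\mapsto x^{1/2n}$ directly, but that is a harmless reordering.
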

%%%%%%%%%%%%%%%%%%%%%%%%%%%%%%%%%%%%%%%%%%%%%%%%%%%%%%%%%%%%%%%%%%%%%%%
\begin{proof} Let us underline the main steps we will encounter below. Writing first 
$$ \ti a_n \defeq \prod_{i=1}^n |a\circ\ka^i|^2 $$for $n\in\N$, 
we show that for $\tau<\fr{1}{2\Gamma}$, $\ti a_n$ belongs to a symbol
class $S_{\delta^+}^0$ with $\delta^+<1/2$. Then, we show by using the
symbolic calculus \eqref{e:symbcalc} and the Egorov property \eqref{e:Egorov} that $\ti
\Sn(a)=\opn(\ti a_n) + \cO(h^{\nu^-})$ for some $\nu^->0$. Finally, by bounding the spectrum of
$\ti \Sn(a)$, we will complete the proof of the proposition by using
the functional calculus  to define and compute both  $\lsn(a)$ and $\Sn(a)$. 
%%%%%%%%%%%%%%%%%%%%%%%%%%%%%%%%%%%%%%%%%%%%%%%%%%%%%%%%%%%%%%%%%%%%%%%
\begin{lem}\label{l:class1} 
Let $a\in S_0^0$ be a symbol on the torus. Then, for any multi-index $\alpha\in\N^2$, there exists $C_{\alpha,a,\ka}>0$ such that 
\begin{equation}\label{e:class1}
\forall n\geq 1\,,\quad \|\d^\alpha (a\circ\ka^n)\|_{C^0}\leq C_{\alpha,a,\ka}\e^{n|\alpha|\Gamma}\,.
\end{equation}
Hence for any  $\tau>0$ such that $\tau<\fr{1}{2\Gamma}$, we have uniformly for $n\leq n_\tau$:
\begin{equation}\label{eq:tian}
a\circ\ka^n\in S_\delta^0\,,\quad \delta=\tau\Gamma<\fr12.
\end{equation} 
\end{lem}
%%%%%%%%%%%%%%%%%%%%%%%%%%%%%%%%%%%%%%%%%%%%%%%%%%%%%%%%%%%%%%%%%%%%%%%
\begin{proof}
The behavior expressed by Eq. \eqref{e:class1} is well known for flows \cite{br}, and we refer to \cite{fnw}, Lemma 1 for a detailed proof in the case of applications. The second part of the lemma follows easily from \eqref{e:class1}.
\end{proof}

\begin{lem}\label{l:tan}
If $\tau<\frac{1}{2\Gamma}$, we have : 
$$
\forall n \leq n_\tau\,,\quad\ti a_n \in S_{\delta^+}^0\quad \trm{with}\  \delta^+=\tau\Gamma^+<\fr12\,.
$$
\end{lem}
\begin{proof} Since $a_+=1$ and $0<a_-<1$, $\ti a_n$ is uniformly bounded from above with respect to $h$. It is then  enough to show that for every multi index
$\alpha\in\N^{2}$ and $n\leq n_\tau$, one has 
$$\d^\alpha \prod_{i=1}^n (a\circ\ka^i)\lesssim 
h^{-\delta^+|\alpha|}.
$$
Set by convention $\d^0 f = f$. Applying the Leibniz rule,  we can
write
$$
\lno \d^\alpha \prod_{i=1}^n
a\circ\ka^i\rno_{C^0}\leq n^{|\alpha|}\sup_{\alpha_1+...+\alpha_n=\alpha} \prod_{i=1}^n
\|\d^{\alpha_i}(a\circ\ka^i)\|_{C^0}\,.
$$
Let us look at a typical term in the product appearing in the right hand side. Since at most $|\alpha|$
indices $\alpha_i$ in are non zero and $|a\circ\ka^i|\leq1$, 
\begin{align*}
\prod_{i=1}^n \|\d^{\alpha_i}(a\circ\ka^i)\|_{C^0}&\leq
\prod_{i=1}^{n}
C_{\alpha_{i},a,\ka} \e^{i |\alpha_{i}| \Gamma }\\
&\leq (\sup_{|\beta|\leq|\alpha|}
C_{\beta,a,\ka})^{|\alpha|}\e^{n|\alpha|\Gamma}\defeq
K_{\alpha,a,\ka}\e^{n|\alpha|\Gamma}\,.
\end{align*}
Finally, we simply get
$$
\lno \d^\alpha \prod_{i=1}^n a\circ\ka^i\rno_{C^0}\leq
K_{\alpha,a,\ka}\e^{|\alpha|(\Gamma n + \log n)}\leq
K_{\alpha,a,\ka}\e^{|\alpha|\Gamma ' n}
$$
for some $\Gamma'>\Gamma$. If we choose $\Gamma'=\Gamma^+$ and
$\Gamma^+-\Gamma$ small enough  such that $\tau\Gamma^+<\fr12$, the last equation will be true only for $n\geq n_0$, with $n_0$ fixed independent of $h$. 
Since  for $n<n_0$, we obviously have $\ti a_n \in S_0^0$,  we finally conclude that if  $\tau\Gamma^+<\fr12$, $\ti a_n\in   S_{\delta^+}^0$ with $\delta^+=\tau\Gamma^+<\fr12$, uniformly for $n\leq n_\tau$.
\end{proof}
We can now rewrite more explicitly Eq. \eqref{e:tsn}.
%&&&&&&&&&&&&&&&&&&&&&&&&&&&&&&&&&&&&&&&&&&&&&&&&&&&&&&&&&&&&&&&&&&&&&
\begin{lem}\label{l:iter}
Choose $\tau>0$ small enough such that
$\delta=\tau\Gamma<\fr{1}{2}$, and take as before $n\leq n_\tau$.
Then, we have
\begin{equation}\label{e:iter}
\tsn(a)=\opn(\ti a_n) + \Ohh(h^{\nu^-})
\end{equation}
where $\nu=1-2\delta$.
\end{lem}
%&&&&&&&&&&&&&&&&&&&&&&&&&&&&&&&&&&&&&&&&&&&&&&&&&&&&&&&&&&&&&&&&&&&&&
\begin{proof}
% Define 
%$$
%a\circ\ka^n\defeq f_n\,.
%$$
The Lemmas \ref{l:class1} and \ref{l:tan} tell us that uniformly for $n\leq
E(\tau\log h^{-1})$, the symbols $a\circ\ka^n$ and $\ti a_n$ belong to the class $S_{\delta^+}^0$ if $\delta=\tau\Gamma<\fr12$. This allows to write (using $U\equiv U_h(\ka)$ for simplicity):

\begin{align}
U^\dg\opn(\bar a)\opn(a)U&=U^\dg\opn(\bar a\sharp_h a)U\nonumber\\
&=U^\dg\opn(|a|^2)U+\Ohn(h^{1-2\delta^+})\label{e:ego2}\\
&=\opn(|a\circ\ka|^2) + \Ohn(h^{1-2\delta^+})\label{e:ego3}\\
&=\opn(\ti a_1) + R_1^1\,,\nonumber
\end{align}
where  the symbolic calculus \eqref{e:symbcalc} has been used to get
\eqref{e:ego2} and Eq. \eqref{e:Egorov} to deduce
Eq. \eqref{e:ego3}. The remainder $R_1^1$ have a norm of 
order $h^{1-2\delta+}$ in $\hn$. This calculation  can be iterated : suppose that the preceding step gave 
$$
\ti \cS_k(a)=\opn(\ti a_k) + \sum_{i=1}^k R^k_i\,,
$$
with $R_i^k=\Ohn(h^{1-2\delta^+})$ for $1\leq i\leq k$. Then, with the same arguments as those we 
used for the first step, we can find $R_{k+1}^{k+1}=\Ohn(h^{1-2\delta^+})$ such that 
$$
U^\dg \opn(\bar a)\opn(\ti a_k)\opn(a)U=\opn(\ti a_{k+1})+R_{k+1}^{k+1}\,.
$$
If we define now $R^{k+1}_i = U^\dg\opn(\bar a)R^{k}_i\opn(a)U=\Ohn(h^{1-2\delta^+})$, we get 
\begin{eqnarray*}
U^\dg\opn(\bar a)\ti\cS_k (a) \opn(a)U&=&U^\dg\opn(\bar a)\opn(\ti a_k)\opn(a)U
+ \sum_{i=1}^k R^{k+1}_i \\
&=&\opn(\ti a_{k+1})+ \sum_{i=1}^{k+1} R^{k+1}_i\,.
\end{eqnarray*}
This shows that
\begin{align*}
\ti\Sn(a)&=\opn(\ti a_n) + n\,\Ohh( h^{1-2\delta^+})\\
&=\opn(\ti a_n) + \Ohh( h^{1-2\delta^+})\,.
\end{align*}
In the preceding equation, the second line comes from the fact that $n\lesssim \log h^{-1}$. 
 Since we defined $\nu=1-2\delta$,  the
lemma is proved.
\end{proof}

From now on, we will always assume $n\leq n_\tau$ for some
$\tau>0$ \emph{fixed}. We will also choose
$\tau$ small enough such that
\begin{equation}\label{e:tauchoice}
\tau < T_{a,\ka}=\fr{1}{2\Gamma-12\log a_-}\,.
\end{equation}
This condition ensures in particular that  Lemma \ref{l:iter} is
valid, but it turns out that we will need a stronger condition than $\tau\Gamma<1/2$ to 
complete the proof of Proposition \ref{p:sn}. 

In order to apply the
functional calculus to $\tsn (a)$, we must bound its spectrum with
the help of Proposition \ref{p:gard}. This is expressed in the following
\begin{prop}\label{p:boundspec}
Define $\eta=-2\tau\log a_->0$. For $h>0$ small enough, we have
$$
\Spec \tsn(a) \subset [C_{n,h}, 2]\,,
$$
where $C_{n,h}=\e^{2n\log a_-}-ch^{1-2\delta^+}$ and $c>0$. In particular, there exists a constant $C>0$ such that $C_{n,h}\geq Ch^\eta$, and $\tsn(a)$ has strictly positive spectrum.
\end{prop}
\begin{proof}
We begin by proving the result for $\opn(\ti a_n)$. Since we have for some $c_1>0$
\begin{align*}
\ti a_n\geq a_-^{2n}&=\e^{2n\log a_-}\geq c_1h^{-2\tau\log a_-}\,,
\end{align*}
we will consider the symbol $b_n=\ti a_n-\e^{2n\log a_-}\geq 0$.
%c h^{-2\tau\log a_-}\geq 0$.
From Lemma \ref{l:tan}, we have $b_n\in S_{\delta^+}^0$ and we can
apply Proposition \ref{p:gard} : for any
$\lambda\in\Spec(\opn(\ti a_n))$, there exists $c>0$ such that 
$
|\lambda|\geq \e^{2n\log a_-} -c h^{1-2\delta^+}
\geq  c_1h^{-2\tau\log a_-} - c h^{1-2\delta^+}\,.
$
In order to have a strictly positive spectrum, we must have
$-2\tau\log a_- < 1-2\delta^+$, which is satisfied if $\tau$ is
chosen according  to Eq. \eqref{e:tauchoice}. Hence, there is
a constant $C>0$ such that for $h$ small enough, $c_1h^{-2\tau\log a_-} - ch^{1-2\delta^+}> C h^{-2\tau\log a_-}$, and the lower bound is obtained. For the
upper bound, we remark that we assumed that
$a_+=1$. By Proposition \ref{p:gard}, it follows that any constant
strictly bigger than 1 gives an upper bound for the spectrum. We return now to $\tsn(a)$. Since we have $\|\tsn(a)-\opn(\ti a_n)\|=\cO(h^{\nu^-})$ and $\nu^->\eta$ thanks to Eq. \eqref{e:tauchoice}, we get the final result if $h$ is small enough.
\end{proof}
%******************************************************************
Let us finish now the proof of Proposition \ref{p:sn}. For $n\leq n_\tau$, we begin by  constructing a smooth
function $\chi_{n,h}$ compactly supported, equal to 1 on $\Spec(\ti\Sn)$. To do
this, we define
$$
\chi_{n,h}(x)=\left\{
\begin{array}{ll}
0& \textrm{ if } x\leq \fr12 C_{n,h}\\
1& \textrm{ if } x\in [C_{n,h},2]\\
0& \textrm{ if } x\geq 3
\end{array}
\right.
$$
Then, the function
$$
\ell_n : x\mapsto \fr{\chi_{n,h}(x)}{2n}\log x
$$
is smooth  and equal to the function $\fr{1}{2n}\log(x)$ on $\Spec
\tsn(a)$,  since we have shown that $ \Spec(\tsn(a) )\subset
[C_{n,h},2]$. Furthermore, $\ell_n$ is compactly supported,
uniformly bounded with respect to $h$. Since $C_{n,h}\geq Ch^\eta$, the function  $\chi_{n,h}$  can easily be chosen $h^{\eta}$--admissible, which means that $\ell_n$  will also be 
$h^{\eta}$--admissible. Applying the standard functional calculus,
we have
\begin{align*}
&\ell_n(\tsn(a))=\fr{1}{2n}\log \tsn(a)\defeq \lsn(a)\,,
\end{align*}
where the first equality follow from the fact that $\ell_n(x)=\fr{1}{2n}\log x$ on $\Spec(\tsn(a))$. To
complete the proof of Proposition \ref{p:sn}, we compute $\ell_n(\tsn(a))$
using both Lemma \ref{l:iter} and Corollary \ref{c:calcfunctpert}. For
this purpose, we must check that the conditions required by this
corollary are fulfilled. First, $\ti a_n \in S_{\delta^+}^0$, and from
Eq. \eqref{e:tauchoice} we have $\eta<\fr{1-2\delta^+}{6}$.
Second, 
the remainder in
Eq. \eqref{e:iter} is of order $h^{\nu^-}$ with $\nu=1-2\delta$, and we clearly have 
$$\fr{\nu^-}{4}>\fr{1-2\delta^-}{6}>\eta\,.$$
We now apply the Corollary  \ref{c:calcfunctpert} to obtain
\begin{equation}
\lsn(a)=\ell_n(\opn(\ti
a_n)+\Ohh(h^{\nu^-}))=\opn(\lan)+\Ohh(h^{r})\,,\ \ r>0.
\end{equation}
Let us show that $r=\sigma^-$, where
$\sigma=1-\tau/T_{a,\ka}$. Indeed, the functional calculus states
that 
$$
r=\min\{1-2\delta^+-6\eps,\nu^--4\eps\}
$$
where $\eps$ has to be chosen in the interval $]\eta,  \min(\fr{1-2\delta^+}{6},\fr{\nu^-}{4})[$. Let
us choose $\eps=\eta^+$.
 Since $\nu=1-2\delta$, 
we have
\begin{equation}\label{sigma}
r=\min\{1-2\delta^+-6\eta^+,\nu^--4\eta^+\}=1-2\delta^+-6\eta^+=(1-\tau/T_{a,\ka})^-=\sigma^-\,.
\end{equation}
To get now $\Sn(a)$, we again apply the standard functional
calculus, and we obtain 
$$
\exp\lp \ell_n(\tsn(a))\rp = \exp \lp\fr{1}{2n}\log \tsn \rp
\defeq \Sn(a)
$$
where the first equality follows again from the property of $\ell_n$ on $\Spec(\lsn)$. We already stress that this
equation is essential to study the eigenvalues distribution of
$\Mak$. Since $\ti a_n\in S_{\delta^+}^0$, $\lan=\fr{1}{2n}\log \ti a_n \in S_{\delta^+}^0$. 
To compute $\Sn(a)$,  we simply choose a smooth cutoff function
supported in a $h-$independent neighborhood of $\Spec(\lsn(a))$. The
Corollary \ref{c:calcfunctpert} is used again to get
\begin{equation}
\Sn(a)\defeq\exp(\lsn(a))=\opn(a_n)+\Ohh(h^{\sigma^-})\,.
\end{equation}
\end{proof}

%%%%%%%%%%%%%%%%%%%%%%%%%%%%%%%%%%%%%%%%%%%
\subsection{Radial spectral density}\label{s:radial} 
We begin by recalling some elementary properties of ergodic means with respect
to the map $\ka$.  Since $|a|>0$ on $\t2$, the function $x\mapsto \log
|a\circ\ka^i(x)|$ is continuous on $\t2$ for any $i\in\N$. The
Birkhoff ergodic theorem then states that
$\lim_{n\rightarrow\infty}\lan(x)\defeq \ell a_\infty(x)$ exists for
$\mu-$almost every $x$. More precisely, if we denote
$\eia\defeq\essi
a_\infty 
$ and
$
\esa\defeq\esss a_\infty$ for $a_\infty=\exp(\ell a_\infty)$, we have:
$$
\eia\leq a_\infty(x)\leq\esa \ \trm{ for }\mu-\trm{almost every }x\,.
$$
In particular, if $\ka$ is ergodic with respect to the Lebesgue
measure $\mu$, the Birkhoff ergodic theorem states that:
\begin{equation}\label{birk}
\textrm{For }\mu-a.e.\  x,\quad \lim_{n\rightarrow\infty}\log
a_n(x)=\int_{\TT^2} \log |a|\,d\mu
 = \log \moy a\,,
\end{equation}
and in this case, $\eia=\esa= \moy a$.

\emph{Proof of  Theorem.~\ref{th:strip}}. As in
Proposition~\ref{p:weylineq}, we order the eigenvalues of $\Mak$ and
$\Sn(a)$ by decreasing moduli. Take arbitrary small $\epsilon>0$,
$\delta>0$ and $0<\gamma\leq \fr{\epsilon}{3}\delta $. We recall
that $I_\delta= [\eia-\delta,\esa +\delta]$ and define
$\Omega_{n,\gamma^-}\defeq \TT^2\setminus a_n^{-1}(I_{\gamma^-})$, so
$$
\mu(a_n^{-1}(I_{\gamma^-}))=1-\mu( \Omega_{n,\gamma^-})\,.
$$
Eq.~\eqref{birk} implies that  $\lim_{n\rightarrow\infty}\mu(
\Omega_{n,\gamma^-})=0$. Hence, there exists $n_0\in \N$ such that
$$
n\geq n_0 \Longrightarrow \mu(\Omega_{n,\gamma^-})\leq
\fr{\epsilon\,\delta}{6a_+- 3\moy a}= \fr{\epsilon\,\delta}{6- 3\moy a}\,.
$$
We now choose some  $n\geq n_0$.  Applying Proposition ~\ref{p:Wlaw} to
the operator $\Sn(a)$, we get immediately: 
\begin{equation}\label{e:weyltorus}
h\,\#\Big\{s\in \Spec (\Sn(a)) : s\in I_{\gamma}\Big\}\geq  
\mu(a_n^{-1}(I_{\gamma^-}) )+o_h(1)\,. 
\end{equation} 
Using the Weyl inequalities,
we can relate the spectrum of $\Sn(a)$ with that of $\Mak$. Call
$d_h\defeq \#\{\lambda\in \Spec(\Mak)\: :\: | \lambda|> \esa +
\delta\}$, where eigenvalues are counted with their algebraic
multiplicities. Hence, using Corollary~\ref{c:Weyl-ineq}, we get
\begin{equation}\label{samedisc} 
d_h\,\big(\esa + \delta\big)\leq \sum_{k=1}^{d_h}
|\lambda_k|\leq\sum_{k=1}^{d_h} s_k\,. 
\end{equation}
Among the
$d_h$ first (therefore, largest) eigenvalues
$(s_i)_{i=1,\ldots,d_h}$ of $\Sn(a)$, we now distinguish those
which are larger than $\gamma +\esa$, and call $d_h'=\#\{1\leq
i\leq d_h,\; s_i\leq\gamma +\esa\}$ the number of remaining ones.

Applying Proposition \ref{p:gard} to the observable  $a_n$, we are sure that 
for $h$ small  enough, $s_j\in\Spec(\Sn(a))\Rightarrow s_j < 2$. 
Hence, for $h$ small enough, \eqref{samedisc} induces
\begin{equation}\label{ineq1}
 d_h\delta\leq d_h'\gamma + (d_h-d_h')(2 -
\esa)\,. 
\end{equation}
 By construction,
$$
d_h-d_h'
\leq  \#\Big\{s \in \Spec (\Sn(a)) : s\notin I_{\gamma} \Big\}\,,
$$
and using \eqref{e:weyltorus}, we deduce 
$$
d_h-d_h'\leq h^{-1}(\mu( \Omega_{n,\gamma^-})+o_h(1))\,.
$$
Dividing now  \eqref{ineq1} by $h^{-1}\delta$ and using the preceding equation, we obtain :
\begin{eqnarray*}
h\,d_h&\leq & \fr{\gamma}{\delta} + \fr{2-\moy a}{\delta}\mu( \Omega_{n,\gamma^-}) + o_h(1)\\
&\leq&  \fr{\epsilon}{3} + \fr{\epsilon}{3} +o_h(1)\,.
\end{eqnarray*}
To get the second inequality we used the condition on $\gamma>0$.
There exists $h_0(\epsilon,n)$ such that for any $h\leq h_0(\epsilon,n)$,
the last remainder is smaller than $\epsilon/3$.
We conclude that  $\forall \epsilon,\delta >0$,  $\exists h_0^{-1}\in \N$
such that: 
\begin{equation}\label{ineq2} 
h\leq h_0 \Rightarrow \#\Big\{\lambda
\in \Spec (\Mak) \;:\; |\lambda|> \esa + \delta \Big\}\leq
h^{-1}\epsilon\,. 
\end{equation}
To estimate the number of eigenvalues
$|\lambda|<\eia-\delta$, we use the inverse propagator:
\begin{equation}\label{e:inverse}
\Mak^{-1}=U_h(\kappa)^{-1}\,\opn(a)^{-1}=
\opn(a^{-1}\circ\kappa)\,U_h(\kappa)^{-1} + \Ohh(h)\,,
\end{equation}
where Eq. \eqref{e:Egorov} has been used for the last equality. 
Since $U_h(\kappa)^{-1}$ is a quantization of the map $\kappa^{-1}$,
the right hand side has a form similar with \eqref{Qmap}, with a small perturbation of order $h$. The
function $a^{-1}$ satisfies
$$
a_+^{-1}\leq |a^{-1}|\leq a_-^{-1}\quad\text{ and }\quad
a^{-1}_\infty=(a_\infty)^{-1}\,.
$$
Hence, we also have $\operatorname{EI}_\infty(a^{-1})=
(\operatorname{ES}_\infty(a))^{-1},\ \
\operatorname{ES}_\infty(a^{-1})=
(\operatorname{EI}_\infty(a))^{-1}\,.$ Applying the above results to
the operator $\Mak^{-1}$, we find some $h_1(\epsilon,\delta)>0$ such that
\begin{equation}\label{ineq3} 
h\leq h_1 \Rightarrow \#\Big\{\lambda \in \Spec
(\Mak)\; :\; |\lambda|< \eia - \delta \Big\}\leq h^{-1}\epsilon\,. 
\end{equation}
Grouping  (\ref{ineq2}) and  (\ref{ineq3}) and taking $\epsilon$
arbitrarily small, we obtain Eq. \eqref{e:strip}\,. \stopthm

%%%%%%%%%%%%%%%%%%%%%%%%%%%%%%%%%%%%%%%%%%%%%%%%%%%%%%%%%%%%%%%%%
\subsection{Angular density}\label{s:angular}
%%%%%%%%%%%%%%%%%%%%%%%%%%%%%%%%%%%%%%%%%%%%%%%%%%%%%%%%%%%%%%%%%
From Eq. \eqref{e:bounds}, we know that for $h$ small
enough, all eigenvalues of $\Mak$ satisfy $|\lambda_i|\geq
a_-/2$. We can then write these eigenvalues as
$$
\lambda_i =r_i\, e^{2i\pi\theta_i}\,,\quad
r_i=|\lambda_i|,\quad \theta_i\in \SP^1\equiv [0,1)\,.
$$
(we skip the superscript $(h)$ for convenience). We want to show
that, under the ergodicity assumption on the map $\kappa$, the arguments
$\theta_i$ become homogeneously distributed over $\SP^1$ in the
semiclassical limit. We adapt the method presented in
\cite{bdb2,mok} to show that the same homogeneity holds for the
eigenangles of the map $U_h(\ka)$. The main tool is the study of the
traces $\Tr(\Mak^n)$, where $n\in\N$ is taken arbitrary large but
independent of the quantum dimension $h^{-1}$.

\emph{Proof of \thmref{th:ergo}, (ii)}. We begin by showing a useful
result concerning the trace of $\Mak^n$:
%%%%%%%%%%%%%%%%%%%%%%%%%%%%%%%%%%%%%%%%%%
\begin{prop}\label{p:traces}
The ergodicity assumption on $\kappa$ implies that
\begin{equation}\label{e:traces}
\forall n\in \Z\setminus \{0\},\qquad
\lim_{h\rightarrow0} h\Tr \Mak^n=0\,.
\end{equation}
\end{prop}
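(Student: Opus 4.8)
The plan is to express the trace of $\Mak^n$ in terms of the unitary propagator and the quantized damping, then use the Egorov property to reduce everything to a single quantized symbol composed with $U_h(\ka^n)$, and finally invoke the trace formula for quantum maps whose classical dynamics has a thin fixed-point set. Concretely, write
$$
\Mak^n = \opn(a)\,U_h(\ka)\,\opn(a)\,U_h(\ka)\cdots \opn(a)\,U_h(\ka)\,.
$$
Pushing each $U_h(\ka)$ to the right past the $\opn(a)$ factors and using the Egorov estimate \eqref{e:Egorov} repeatedly (exactly as in the proof of Proposition~\ref{p:sn}, with $n$ here a \emph{fixed} integer independent of $h$, so the remainders are genuinely $\cO(h)$), one gets
$$
\Mak^n = \opn\!\big(a\cdot (a\circ\ka)\cdots (a\circ\ka^{n-1})\big)\,U_h(\ka^n) + \Ohh(h)\,,
$$
using also that $U_h(\ka)^n = U_h(\ka^n)$ up to $\cO(h)$ (a standard consequence of the composition of quantized symplectic maps). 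Denote $b_n = a\cdot(a\circ\ka)\cdots(a\circ\ka^{n-1}) \in S_0^0$.

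**Reduction to a trace formula.** It then remains to show $h\,\Tr\big(\opn(b_n)\,U_h(\ka^n)\big)\to 0$. This is the "semiclassical trace formula" for quantum maps: the trace of a quantized observable composed with a quantized symplectic map localizes, as $h\to 0$, on the fixed points of the classical map $\ka^n$. Since $\ka$ is ergodic, Proposition~\ref{p:Minkowski} tells us that $\Fix(\ka^n)$ has Minkowski content zero for every $n\neq 0$; in particular it is a closed set of Lebesgue measure zero containing no open set, so any stationary-phase contribution carries weight tending to $0$. More precisely, I would split $b_n$ using a partition of unity into a piece supported in an $\eps$-neighborhood of $\Fix(\ka^n)$ and a piece supported away from it. On the latter, away from fixed points the phase in the trace integral is nonstationary, and repeated integration by parts (or the standard estimate on $\Tr(\opn(b)\,U_h(\ka^n))$ for maps with no fixed points in $\Supp b$, as used for the unitary case in \cite{bdb2,mok}) gives $\cO(h^\infty)$. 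On the neighborhood piece, one uses the crude bound $|h\,\Tr(\opn(\chi_\eps)\,U_h(\ka^n))| \lesssim \int_{\t2}|\chi_\eps|\,d\mu + o_h(1)$, which by the Minkowski-zero property of $\Fix(\ka^n)$ can be made arbitrarily small by choosing $\eps$ small and then $h$ small. The case $n<0$ follows by applying the $n>0$ result to $\Mak^{-n}$ together with the expression \eqref{e:inverse} for $\Mak^{-1}$, which again has the form of a damped quantum map for the ergodic map $\ka^{-1}$ up to an $\cO(h)$ perturbation (and $\Tr A = \overline{\Tr A^\dg}$ is not directly applicable since $\Mak$ is not normal, so one genuinely argues with $\ka^{-1}$).

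**Main obstacle.** The substantive point — and the one I expect to require the most care — is the trace estimate $h\,\Tr(\opn(b)\,U_h(\ka^n)) = o_h(1)$ when $\Supp b$ meets $\Fix(\ka^n)$ only in a set of small Minkowski content. The nonstationary-phase part is routine, but controlling the contribution near the fixed points requires a uniform bound on $h\,\Tr(\opn(\chi)\,U_h(\ka^n))$ in terms of $\|\chi\|_{C^0}$ and $\mathrm{vol}(\Supp\chi)$ that is not completely trivial for nonunitary-adjacent situations; here, however, the relevant operator $U_h(\ka^n)$ is still unitary, so the bound $|h\,\Tr(\opn(\chi)\,U_h(\ka^n))| \le \|\opn(\chi)\| \le \|\chi\|_{C^0} + \cO(h^{1-2\delta})$ is too weak (it does not see the volume) and must be replaced by a sharper Schatten-type estimate $|h\,\Tr(\opn(\chi)\,U)| \lesssim \|\chi\|_{L^1} + o_h(1)$, which follows from covering $\Supp\chi$ by $O(1/(h|\log h|))$-many phase-space cells, or more cleanly from the fact that $\opn(\chi)$ for $\chi$ supported on a set of volume $V$ has trace norm $\lesssim V/h$. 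Assembling this with Proposition~\ref{p:Minkowski} and letting $\eps\to 0$ after $h\to 0$ completes the argument.
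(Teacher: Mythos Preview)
Your overall architecture matches the paper's: reduce $\Mak^n$ via Egorov to $\opn(\text{symbol})\,U^n$ (the paper writes the symbol as $a'_n=\prod_{j=0}^{n-1}a\circ\ka^{-j}$; your $b_n$ has the wrong sign in the exponents, but this is cosmetic), then split with a partition of unity into a small-measure neighborhood of $\Fix(\ka^n)$ and its complement. The differences are in how each piece is handled, and in both cases the paper's route is more elementary than what you sketch.

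For the piece supported away from the fixed points, you propose a nonstationary-phase argument. The paper avoids oscillatory integrals entirely: it chooses the cover so that $\ka^n(D_i)\cap D_i=\emptyset$ for $i\geq 1$, approximates $\chi_i$ by $\tilde\chi_i^2$, and uses cyclicity of the trace together with Egorov to rewrite
\[
\Tr\big(\opn(\chi_i)\opn(a'_n)U^n\big)=\Tr\big(\opn(\tilde\chi_i\circ\ka^{-n})\,U^n\opn(\tilde\chi_i)\opn(a'_n)\big)+\cO(h)+\cO(\eps)\,,
\]
which vanishes to leading order because $\tilde\chi_i\,(\tilde\chi_i\circ\ka^{-n})\equiv 0$. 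This is purely algebraic and sidesteps any stationary-phase machinery.

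For the piece near $\Fix(\ka^n)$, you correctly flag that the operator-norm bound is too weak and reach for a Schatten estimate. The paper's argument is simpler: since $\chi_0\geq 0$, $\opn(\chi_0)$ is self-adjoint with eigenvalues $\geq -Ch$ by G\aa rding, so working in its eigenbasis gives directly $|h\Tr(\opn(\chi_0)B)|\leq \|B\|\,h\Tr(\opn(\chi_0))+\cO(h)$, and then the trace formula \eqref{e:trace} yields $h\Tr(\opn(\chi_0))=\int\chi_0\,d\mu+\cO(h)\leq\epsilon_0+\cO(h)$. The volume control you want falls out of positivity plus \eqref{e:trace}, with no trace-norm estimates needed. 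Finally, the identity $U_h(\ka)^n=U_h(\ka^n)$ you invoke is neither needed nor necessarily true; all that is used is that $U_h(\ka)^n$ satisfies Egorov for $\ka^n$, which follows by iteration.
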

%%%%%%%%%%%%%%%%%%%%%%%%%%%%%%%%%%%%%%%%%%%%%%

\begin{proof} Note here that $n$ is fixed, independently of $h$. As
  before, we write $U\equiv U_h(\ka)$ for convenience. Inserting
  products $U^{-1}U$ and using Egorov's property,  we get:
\begin{equation}\label{e:Makn}
\Mak^n=\opn(a'_n)U^n+\cO_n(h)\,,\quad\text{where}\quad
a'_n\defeq \prod_{j=0}^{n-1}a\circ\ka^{-j}\,. 
\end{equation}
Let $\epsilon_0>0$.
The next step consists of exhibiting a finite  open cover
$\t2=D_0\cup\bigcup_{i=1}^M D_i$ with the following properties :
\begin{itemize}
\item $D_0$ contains the fixed points of $\ka^n$ and has Lebesgue measure
$\mu(D_0)\leq \epsilon_0$. Such a set can be constructed thanks to
Proposition~\ref{p:Minkowski}.
\item For each $i=1,\ldots,M$, $\ka^n(D_i)\cap D_i=\emptyset$. This is possible,
because $\ka^n$ is continuous and without fixed points on
$\t2\setminus D_0$.
\end{itemize}
Then, a partition of unity subordinated to this cover can be
constructed:
$$
1=\chi_0+\sum_{i=1}^{M}\chi_i\,,
$$
with $\chi_i\in C^{\infty}(\t2)$, $\Supp \chi_i\subset D_i$
for $0\leq i\leq M$. Notice that the condition on $D_i$, $i\geq 1$
implies that $\chi_i (\chi_i\circ\kappa^{-n})\equiv 0$. After quantizing this partition we write:
$$
\Mak^n=\opn(\chi_0)\Mak^n+\sum_{i=1}^{M}\opn(\chi_i)\,\Mak^n\,.
$$
Using \eqref{e:Makn} and taking the trace:
\begin{eqnarray*}
h \Tr (\Mak^n)&=&h\Tr\big(\opn(\chi_0)\opn(
a'_n)U^n\big)\\
& & + h\sum_{i=1}^{M} \Tr\big(\opn(\chi_i)\opn(a'_n)U^n\big) +\cO_n(h)\,.\\
\end{eqnarray*}
Since $\chi_0\geq 0$, $\opn(\chi_0)^\dg=\opn(\chi_0)$ and we can perform the first trace in the right hand side in the 
basis where $\opn(\chi_0)$ is diagonal. This yields to  
$$
|\Tr(\opn(\chi_0)\opn(a'_n)U^n)|\leq a_+^n| \Tr(\opn(\chi_0))|+\cO(1)\,.
$$
The  estimates \eqref{e:trace} and Eq.  \eqref{e:symbcalc} imply
that
$$
h\Tr (\opn(\chi_0)) =\int_{\t2} \chi_0\,d\mu + \cO(h)\,.
$$
Since $\mu(D_0)\leq\epsilon_0$,  the integral on
the right hand side satisfies $|\int_{\t2} \chi_0\,d\mu |\leq \epsilon_0$, and finally 
$h\Tr\big(\opn(\chi_0)\opn(
a'_n)U^n\big) = \cO(\epsilon_0)+\cO(h)$.

To treat the terms $i\geq 1$, we take first $\eps>0$ arbitrary small
and a smooth function
$\ti\chi_i\in C_0^\infty$ such that $\Supp \ti\chi_i\subset \Supp
\chi_i$ and  $\|\ti\chi_i^2-\chi_i\|_{C^0}=\cO(\eps)$. Then, we use the symbolic calculus to write
\begin{align*}
U^n \opn(\chi_i)&=U^n\opn(\ti\chi_i)^2 + \Ohh(h)+\Ohh(\eps)\\
&=U^n\opn(\ti\chi_i)U^{-n}\,U^n \opn(\ti\chi_i) +
\Ohh(h)+\Ohh(\eps)\,.
\end{align*}
Using the cyclicity of the trace, this gives
\begin{eqnarray*}
h\Tr(\opn(\chi_i)\opn( a'_n)\,U^n)
&=&h\Tr(U^n\opn(\ti{\chi_i})U^{- n}U^{n}\opn(\ti{\chi_i})\opn(
a'_n))\\
& &+ \cO(h)+\cO(\eps)\\
&=&h\Tr(\opn(\ti{\chi_i}\circ \kappa^{-n}) U^{n}\opn(\ti{\chi_i})\opn(
a'_n))\\
& &+ \cO(h)+\cO(\eps)\\
&=&h\Tr(\opn( a'_n\,\ti{\chi_i}\,(\ti{\chi_i}\circ \kappa^{-n}))
U^{n})+ \cO(h)+\cO(\eps)\\
&=&\cO(h)+\cO(\eps)\,.
\end{eqnarray*}
In the last line we used $\ti \chi_i (\ti \chi_i\circ\kappa^{-n})\equiv 0$.
Adding up all these expressions, we finally obtain 
\begin{equation} 
h\Tr (\Mak^n) = \cO(\epsilon_0)+\cO(\eps)+\cO(h)\,. 
\end{equation}
Since this estimate holds for arbitrary small $\epsilon_0$ and $\eps$, it proves
the proposition.
\end{proof}
We will now use these trace estimates, together with the information we already have on the radial spectral distribution
(Theorem.~\ref{th:ergo}, $(i)$) to prove the homogeneous
distribution of the angles $(\theta_i)$.

\medskip

{\em Remark:} This step is not obvious a priori: for a general
non-normal $h^{-1}\times h^{-1}=N\times N$ matrix $M$, the first few traces $\Tr(M^n)$
cannot, when taken alone, provide much information on the spectral
distribution. As an example, the $N\times N$ Jordan block $J_N$ of
eigenvalue zero and its perturbation on the lower-left corner
$J_{N,\eps}=J_N+\eps E_{N,1}$ both satisfy $\Tr J^n=0$ for
$n=1,\ldots, N-1$. However, their spectra are quite different:
$\Spec(J_{N,\eps})$ consists in $N$ equidistant points of modulus
$\eps^{1/N}$. Only the trace $\Tr J^N$ distinguishes these two
spectra.

\medskip

By the Stone-Weierstrass theorem, any function $f\in C^0(\SP^1)$ can
be uniformly approximated by  trigonometric polynomials. Hence, for
any $\epsilon$ there exists a Fourier cutoff $K\in\N$ such that the
truncated Fourier serie of $f$ satisfies 
\begin{equation}\label{trunc}
f^{(K)}(\theta)=\sum_{k=-K}^K f_ke^{2i\pi k\theta}\quad
\textrm{satisfies}\quad \| f_K-f\|_{L^{\infty}}\leq \epsilon\,. 
\end{equation}
We first study the average of $f^{(K)}$ over the angles
$(\theta_j)$:
$$
h\sum_{j=1}^{h^{-1} }f^{(K)}(\theta_j)=\sum_{k=-K}^K
f_k h\sum_{j=1}^{h^{-1}}e^{2i\pi k\theta_j}\,.
$$
For each power $k\in [-K,K]$, we relate as follows the sum over the
angles to the trace $\Tr\Mak^k$ : 
\begin{equation}\label{e:k-sum}
h\sum_{j=1}^{h^{-1}} e^{2i\pi k\theta_j} =\frac{h}{\moy
a^k}\Tr(\Mak^k)+ h\sum_{j=1}^{h^{-1}} \frac{\moy a^k -r_j^k}{\moy
a^k}\, e^{2i\pi k\theta_j}\,. 
\end{equation}
 Although this relation holds for
any nonsingular matrix $M$, it becomes useful when one notices that
most radii $r_j$ are close to $\moy a$: this fact allows to show
that the second term in the above right hand side is much smaller than the first
one.

Take $\delta>0$ arbitrary small, and denote $I_\delta\defeq[\moy
a-\delta,\moy a+\delta]$. From Theorem.~\ref{th:ergo}, $(i)$ we
learnt that
%$\Omega(N)\defeq
$h\#\{ r_j\in I_\delta\}=1+o_h(1)$. Hence the second
term in the right hand side of \eqref{e:k-sum} can be split into:
\begin{equation}\label{e:term2} 
h\sum_{j=1}^{h^{-1}}  \frac{\moy a^k -r_j^k}{\moy
a^k}\, e^{2i\pi k\theta_j} =h\sum_{ r_l\in I_\delta}\frac{\moy
a^k -r_l^k}{\moy a^k}\,e^{2i\pi k\theta_l}+ o_h(1)\,. 
\end{equation}
 By
straightforward algebra, there exists $C_K>0$ such that
$$
\forall r\in I_\delta,\ \forall k\in [-K,K],\qquad \frac{|\moy a^k
-r^k|}{\moy a ^k}\leq C_K\,\delta\,,
$$
so that the left hand side in \eqref{e:term2} is bounded from above by
$C_K\,\delta + o_h(1)$. By summing over the Fourier
indices $k$, we find
$$
h\sum_{j=1}^{h^{-1}} f^{(K)}(\theta_j)=
f_0+\sum_{\stackrel{k=-K}{k\neq 0}}^K  \frac{f_k}{\moy a
^k}\,h\,\Tr(\Mak^k)+\cO_{K,f}(\delta)+o_h(1)\,.
$$
Using the trace estimates of Proposition~\ref{p:traces} for the traces up to
$|k|=K$, we thus obtain
$$
h\sum_{j=1}^{h^{-1}} f^{(K)}(\theta_j) = f_0 +
\cO_{K,f}(\delta)+o_h(1)\,.
$$
Since this is true for every $\delta>0$, we deduce:
$$
h\sum_{j=1}^{h^{-1}} f^{(K)}(\theta_j)=f_{0} +
o_h(1)\,.
$$
We now  use the estimate \eqref{trunc} to write:
$$
h\sum_{j=1}^{h^{-1}} f(\theta_j)=f_0 +o_h(1) +
\cO(\epsilon)\,.
$$
$\epsilon$ being arbitrarily small, this concludes the proof.
\stopthm

%%%%%%%%%%%%%%%%%%%%%%%%%%%%%%%%%%%%%%%%%%%%%%%
%%%%%%%%%%%%%%%%%%%%%%%%%%%%%%%%%%%%%%%%%%%%%%%
\section{The Anosov case}\label{s:anosov}
%%%%%%%%%%%%%%%%%%%%%%%%%%%%%%%%%%%%%%%%%%%%%%%
%%%%%%%%%%%%%%%%%%%%%%%%%%%%%%%%%%%%%%%%%%%%%%%
\subsection{Width of the spectral distribution}
If $\ka$ is Anosov, one can obtain much
more precise spectral asymptotics using dynamical information about the \emph{decay of correlations} of classical observables under the dynamics generated by $\ka$. We will make use of probabilistic notations: a symbol
$a$ is seen as a random variable, and its value distribution will be
denoted $P_a$. If we denote as before the Lebesgue measure by $\mu$,
this distribution is defined for any interval $I\in\R$ by :
\begin{align*}
P_a(I)&\defeq \mu(a^{-1}(I))\\
&=\int_I P_a(dt)\,.
\end{align*}
This is equivalent to the following property: for any  continuous
function $f\in C(\R)$, one has
$$
\E(f(a))\defeq\int_{\t2}f(a)d\mu=\int_{\R}f(t)P_a(dt)\,.
$$
We now state a key result concerning $\lan$ when $\ka$ is Anosov.
\begin{lem}\label{l:decay}
Set $\la=\log\moy a$, and 
$$ x_n\defeq  \fr{1}{ n}\sum_{i=1}^n \log|a\circ\ka^i| -
\ell a =\lan-\ell a\,.
$$
If  $\ka$ is Anosov, we
have
$$
\limsup_{n\to\infty} \E(n x_n^2)<\infty\,.
$$
\end{lem}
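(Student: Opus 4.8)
The plan is to recognize $\E(n x_n^2)$ as the normalized variance of a Birkhoff sum and to bound it using the exponential decay of correlations for the Anosov map $\ka$. First I would introduce the centered observable
$$
b \defeq \log|a| - \la \in C^\infty(\t2)\,,
$$
which makes sense because $|a|>0$ on the torus, so that $\log|a|$ is smooth, and which satisfies $\int_{\t2} b\, d\mu = 0$ directly from the definition $\moy a = \exp\{\int_{\t2}\log|a|\,d\mu\}$. With this notation $n x_n = \sum_{i=1}^n b\circ\ka^i$, hence
$$
\E(n x_n^2) = \fr1n\int_{\t2}\Big(\sum_{i=1}^n b\circ\ka^i\Big)^2 d\mu = \fr1n\sum_{i,j=1}^n \int_{\t2}(b\circ\ka^i)(b\circ\ka^j)\,d\mu\,.
$$

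Next I would use the $\ka$-invariance of the Lebesgue measure $\mu$ to reduce each term to a correlation: setting $C(k)\defeq \int_{\t2} b\,(b\circ\ka^{k})\,d\mu$ for $k\geq 0$, invariance gives $\int_{\t2}(b\circ\ka^i)(b\circ\ka^j)\,d\mu = C(|i-j|)$, and counting the pairs with a given value of $|i-j|$ yields
$$
\E(n x_n^2) = C(0) + 2\sum_{k=1}^{n-1}\Big(1-\fr kn\Big) C(k)\,.
$$

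The only dynamical input is then the exponential decay of correlations: since $\ka$ is Anosov (hence, being symplectic, volume-preserving with Lebesgue as SRB measure) and $b$ is smooth, there are constants $C_b>0$, $\gamma>0$ with $|C(k)|\leq C_b\,\e^{-\gamma k}$ for all $k\geq 1$; this is classical in the theory of Anosov diffeomorphisms (for torus automorphisms and their perturbations it follows, e.g., from the spectral gap of the transfer operator, and already holds for H\"older observables). Substituting,
$$
\E(n x_n^2)\leq C(0) + 2C_b\sum_{k\geq 1}\e^{-\gamma k} = C(0) + \fr{2C_b\,\e^{-\gamma}}{1-\e^{-\gamma}}\,,
$$
a bound independent of $n$, and taking $\limsup_{n\to\infty}$ proves the lemma.

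The computation is entirely routine once the decay estimate is granted; the latter is the genuine ingredient, but it is exactly the kind of information the section announces it will exploit. I note that merely summable decay $\sum_k|C(k)|<\infty$ would already give the statement, so no quantitative feature of the exponential rate is needed here — the rate itself will matter only later, for the sharper Anosov results of Theorems~\ref{th:width} and \ref{th:large}.
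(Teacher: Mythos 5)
Your argument coincides with the paper's: both expand $\E(n x_n^2)$ into a double sum of correlations of the centered observable along the orbit and then invoke exponential decay of correlations for the Anosov map (the paper cites Liverani) to bound the sum by $\cO(n)$. The only cosmetic difference is that you additionally use $\ka$-invariance of $\mu$ to reduce $c_{ij}$ to a function of $|i-j|$ before summing, which the paper leaves implicit.
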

\begin{proof}
Denote $f_i=\log|a\circ\ka^i|-\ell a $ and define the correlation function $c_{ij}$ as
$$
c_{ij}=\E(f_i f_j)\,.
$$
Then,
$$
\E(nx_n^2)=\fr{1}{n}\sum_{i=1}^n\sum_{j=1}^n c_{ij}\,.
$$
But for $\ka$ Anosov,  $|c_{ij}|\lesssim \e^{-\rho|i-j|}$ for some $\rho>0$ (see \cite{liv}). Hence 
$$\sum_{i=1}^n\sum_{j=1}^n c_{ij}=\cO(n)\,,$$
 and the proposition follows easily.
\end{proof}

\emph{Proof of \thmref{th:width}}. We now have all the tools to get our main result. In this paragraph, we will assume
that $n= n_\tau=E(\tau\log  h^{-1})$, with $\tau<T_{a,\ka}$  as above.  It will be more
convenient to show the following statement: for any $\eps>0$ and $C>0$,
\begin{equation}
 \lim_{h\rightarrow 0}\,h\,\#\Big\{1\leq
j\leq h^{-1} : \left|\log|\lambda_j^{(h)}|-\log\moy a\right|\leq C (
\log h^{-1})^{\eps-1/2}\Big\}= 1\,.
\end{equation}
For $h$ small enough, this equation is equivalent to \eqref{e:width} because $|\lambda_j^{(h)}|\geq a_-/2$. We will proceed exactly as in section \ref{s:radial}, but now $\delta$ and $\gamma$ will depend on $h$.
 
First we define as before two positive sequences $(\delta_h)_{h\in]0,1[}$ and
$(\gamma_h)_{h\in]0,1[}$ going to 0 as $h\to0$, and such that
$$
\fr{1}{\gamma_h\rc{\log h^{-1}}}\hto0 0 \quad\trm{and}\quad \fr{\gamma_h}{\delta_h}\hto0 0.
$$
 A simple choice can be made
by taking  $\delta_h\propto(\log h^{-1}) ^{\eps-\fr12}$ and $\gamma_h\propto(\log
h^{-1})^{\fr{\eps}{2}-\fr12}$, for some  $\eps\in]0,\fr12[$.

We call $\ell s_i$ the (positive) eigenvalues of $\lsn(a)$, and
define the integer $d_h$ such that
$$
d_h=\#\{1\leq i \leq h^{-1} : \log |\lambda_i| -\la\geq
\delta_h\}\,.
$$
The Weyl inequalities imply
\begin{equation}\label{eq:weyl0}
d_h(\la +\delta_h)\leq\sum_{i=1}^{d_h} \log
|\lambda_i|\leq \sum_{i=1}^{d_h}\ell s_i\,.
\end{equation}
Among the $d_h$ first (therefore, largest) numbers $(\ell
s_i-\la)_{i=1,\ldots,d_h}$, we now distinguish the $d_h'$ first
ones which are larger than $\gamma_h$, and
call
$$d_h-d_h'=\#\{1\leq i\leq d_h,\;
\ell s_i -\ell a< \gamma_h\}$$ the
number of remaining ones. Hence :
$$
d_h'=\#\{1\leq i \leq h^{-1} : \ell s_i -\la\geq\gamma_h\}\,.
$$
Substracting $d_h\ell a$ in Eq. \eqref{eq:weyl0} and noticing that
$d_h-d_h'\leq h^{-1} $, we get :

\begin{equation}\label{weyl2}
d_h\leq \fr{\gamma_h}{h\delta_h} + \fr{1}{\delta_h}\sum_{i=1}^{d_h'} (\ell s_i -  \ell a)\,.
\end{equation}

Recall that $\lsn -\ell a=\opn(\lan-\la)+\Ohh(h^{\alpha})$ for 
$\alpha=\sigma^->0$. From now on, $\alpha $ will denote a strictly positive
constant which value may change from equation to equation. Hence,
the sum in the right hand side of Eq. \eqref{weyl2} can be expressed
as :
$$
\sum_{i=1}^{d_h'} (\ell s_i -  \la)=\Tr
\Id_{[\gamma_h,2]}\lp\opn(\lan-\la)+\cO(h^{\alpha})\rp\,.
$$
The function $\Id_{[\gamma_h,2]}$ can easily be
smoothed to give a  function $\mathcal I_h$ such that $\mathcal
I_h(x)=0$ for $x\in\R\setminus[\gamma_h/2,3]$ and
$\mathcal I_h(t)=t $ on $[\gamma_h,2]$. Such a function
can be clearly chosen $w-$admissible with $ (\log
h^{-1})^{-1/2+\eps}\lesssim w(h)$. Note also that the logarithmic decay of $w$ in
this case will always  make the function $\mathcal I_h$ suitable
for the functional calculus expressed in Proposition \ref{p:calcfunct} and
Corollary \ref{c:calcfunctpert} -- see the discussion at the end of $\S$\ref{s:Sn}.
Continuing from these remarks, we obtain :
\begin{align*}
h\, d_h&\leq \fr{\gamma_h}{\delta_h}+\fr{1}{\delta_h}h\Tr \mathcal I_h\lp\opn(\lan -\la) +\Ohn(h^{\alpha})\rp\\
&\leq  \fr{\gamma_h}{\delta_h} + \fr{1}{\delta_h}h\Tr \opn(\mathcal I_h(\lan-\la)) +\cO(h^{\alpha}) \\
&\leq \fr{\gamma_h}{\delta_h} + \fr{1}{\delta_h} \int_\R\mathcal I_h(x)P_{x_n}(dx) +\cO(h^{\alpha})\,,
\end{align*}
where the functional calculus with perturbations  has been used. We now remark that $\forall
x\in \Supp \mathcal I_h$, one has
\begin{equation}\label{e:ineqtrace}
\mathcal I_h(x) \lesssim x^2\rc n\,.
\end{equation}
Indeed, we can clearly choose $\mathcal I_h$ such that $|\mathcal
I_h'(x)|\lesssim 1$. 
Hence $\mathcal I_h(x)\lesssim x
$, but since for $C>0$ fixed  we have $C(\log
h^{-1})^{-\fr12}\leq  x$ for $h$ small enough and $x\in \Supp \cI_h$, we get 
$$x\lesssim x^2\rc{\log
  h^{-1}}\,,
$$
which imply Eq. \eqref{e:ineqtrace}. Using Lemma
\ref{l:decay}, we continue from these remarks and obtain 
\begin{align*}
\int_\R \rc n\mathcal I_h(x)P_{x_n}(dx)&\leq \int_\R n
x^2 P_ {x_n}(dx)=\E (n x_n^2)<\infty\,,
\end{align*}
from which we conclude that
$$
h\, d_h\leq \fr{\gamma_h}{\delta_h} + \cO(\fr{1}{\rc{\log h^{-1}}\delta_h})
+\cO(h^{\alpha})\hto0 0\,.
$$
The theorem is completed by using the inverse map $\Mak^{-1}$,
exactly as for the proof of Theorem \ref{th:strip}, since $\|\opn(a^{-1}\circ\ka)U_h(a)^{-1}-\Mak^{-1}\|=\cO(h)$. \stopthm

\subsection{Estimations on the number of ``large'' eigenvalues.}

In this paragraph, we address the question of counting the eigenvalues
of $\Mak$ outside a circle with radius \emph{strictly} larger than
$\moy a$ when $\ka$ is Anosov.  As we already noticed, informations on
the eigenvalues of $\Sn(a)$ can be obtained from the function $a_n$, via the functional calculus. Hence, if we want to count the eigenvalues of $\Mak$ away from the average $\moy a$, we are lead to estimate the function $a_n$ away from its typical value $\moy a$. More dynamically speaking, we are interested in large deviations results for the map $\ka$. For $a\in \ctc$, these estimates take usually the following form \cite{op,rby}:
\begin{thm}[Large deviations] \label{th:dev} Let $c>0$ be a positive
  constant, and define 
$$\lc= \log (1+c/\moy a)\,.
$$
If $\ka$ is Anosov, there exists a function $I : \R_+\mapsto \R_+$,
positive, continuous and  monotonically increasing, such that 
\begin{equation}\label{e:dev}
\limsup_{n\to\infty}\fr{1}{n}\log \mu\lp x: x_n \in [\lc, +\infty[\rp \leq -I(\lc)\,.
\end{equation}
In particular, for  $n\geq 1$ large enough, one has
\begin{equation}\label{e:dev2}
P_{x_n}([\lc,\infty[)=\cO(\e^{-n I(\lc)})\,.
\end{equation}
\end{thm}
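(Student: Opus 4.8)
The plan is to read $x_n$ as the deviation of a Birkhoff average of a smooth observable from its space mean, and then to run the G\"artner--Ellis large deviations upper bound, the substantive input being the thermodynamic formalism of Anosov diffeomorphisms. Since $|a|\geq a_->0$ and $a\in\ctc$, the function $f\defeq\log|a|$ is smooth, hence Lipschitz, and by ergodicity of $\ka$ its space mean is $\int_{\t2}f\,d\mu=\log\moy a=\ell a$. Writing $S_n f\defeq\sum_{i=1}^n f\circ\ka^i$, one has $x_n=\fr1n S_n f-\ell a$, so \eqref{e:dev} is exactly an upper--tail large deviations bound for $\fr1n S_n f$ about its mean, and \eqref{e:dev2} will be a direct reformulation of it.

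The first step is to establish the existence and regularity of the free energy function
$$
\Lambda(q)\defeq\lim_{n\to\infty}\fr1n\log\int_{\t2}\e^{q S_n f}\,d\mu\,,\qquad q\geq 0\,.
$$
For this I would invoke a Markov partition (Sinai, Bowen): it realizes $(\t2,\ka,\mu)$ as a measure isomorphism onto a mixing subshift of finite type $(\Sigma,\sigma)$ carrying the Gibbs equilibrium state of the H\"older geometric potential $\varphi=-\log J^u\circ\pi$, where $J^u$ is the unstable Jacobian of $\ka$ and $\pi:\Sigma\to\t2$ the coding map. Ruelle's Perron--Frobenius theory for transfer operators then gives that the topological pressure $P(\cdot)$ is real--analytic along the line $\{\varphi+q\,f\circ\pi\}_{q\in\R}$, while the Gibbs property turns the partition sums into $\int_{\t2}\e^{q S_n f}\,d\mu\asymp\e^{\,n(P(\varphi+q\,f\circ\pi)-P(\varphi))}$ with multiplicative constants independent of $n$. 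Hence $\Lambda(q)=P(\varphi+q\,f\circ\pi)-P(\varphi)$ exists, is real--analytic and convex, with $\Lambda(0)=0$ and $\Lambda'(0)=\int f\,d\mu=\ell a$; it is \emph{strictly} convex unless $f$ is cohomologous to a constant, a degenerate case in which $x_n\to 0$ uniformly on $\t2$ and the statement is trivial. This is precisely what is carried out in \cite{op,rby}, so one may instead simply quote it.

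Granted $\Lambda$, the bound follows from a Chernoff estimate and a Legendre inversion. For $d=\lc>0$ and any $q\geq 0$, Markov's inequality gives
$$
\mu\lp x:\fr1n S_n f(x)\geq\ell a+d\rp\leq\e^{-qn(\ell a+d)}\int_{\t2}\e^{q S_n f}\,d\mu\,,
$$
whence $\limsup_{n}\fr1n\log\mu(x:x_n\geq d)\leq-\big(q(\ell a+d)-\Lambda(q)\big)$; optimizing over $q\geq 0$ yields \eqref{e:dev} with
$$
I(d)\defeq\sup_{q\geq 0}\big(q(\ell a+d)-\Lambda(q)\big)=\Lambda^*(\ell a+d)\,,
$$
the last equality because $\Lambda'(0)=\ell a<\ell a+d$ forces the constrained supremum to coincide with the full Legendre transform $\Lambda^*$. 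The stated properties of $I$ are then pure convex analysis: $\Lambda^*$ is convex, lower semicontinuous, nonnegative, and vanishes only at $\Lambda'(0)=\ell a$, so $I$ is nonnegative, continuous, and strictly increasing on the interior of its effective domain --- which contains $\lc$ for the relevant range of $c$ (say $\moy a+c<a_+$). Finally \eqref{e:dev2}: \eqref{e:dev} gives, for every $\eps>0$ and all $n$ large, $P_{x_n}([\lc,\infty[)\leq\e^{-n(I(\lc)-\eps)}$, which is the claimed $\cO(\e^{-n I(\lc)})$ up to an arbitrarily small exponential loss --- a loss absorbed downstream by replacing $\lc$ with $\lc^-$ when \eqref{e:dev2} is applied in \thmref{th:large}.

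The one genuine difficulty is the first step: the existence, real--analyticity and strict convexity of $\Lambda$, which rest on the full thermodynamic formalism for Anosov maps --- Markov partitions, Gibbs states, the spectral gap of transfer operators, analyticity of topological pressure. Everything downstream (the exponential Chernoff bound, the Legendre inversion, the elementary bookkeeping producing the regularity and monotonicity of $I$) is routine. One could alternatively route through the level--$2$ Donsker--Varadhan large deviations for empirical measures and contract down to $\fr1n S_n f$, but as only the upper bound is needed here the direct G\"artner--Ellis argument is the most economical.
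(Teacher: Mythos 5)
The paper states Theorem~\ref{th:dev} as a direct quotation of the large-deviations results in \cite{op,rby} and offers no proof of its own, so there is no in-paper argument for your proposal to be compared against. Your sketch correctly reproduces the standard thermodynamic-formalism route that those references carry out: reduce $x_n$ to a Birkhoff sum of the H\"older observable $f=\log|a|$, code $(\t2,\ka,\mu)$ onto a mixing subshift of finite type via a Markov partition, identify Lebesgue with the Gibbs/SRB state for $\varphi=-\log J^u$ (available here since $\ka$ preserves volume, so Lebesgue is the SRB measure), invoke Ruelle's transfer-operator spectral gap to obtain existence, convexity and real analyticity of $\Lambda(q)=P(\varphi+qf)-P(\varphi)$, and produce the rate function $I(d)=\Lambda^*(\ell a+d)$ by a Chernoff bound together with Legendre inversion; the claimed regularity and monotonicity of $I$ then follow from elementary convex analysis. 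Two hypotheses enter your argument without comment and deserve to be named: (i) topological mixing of the coding subshift, needed for the spectral gap --- automatic here because a symplectic Anosov diffeomorphism of $\t2$ is topologically conjugate to a hyperbolic toral automorphism, hence mixing; and (ii) that $\ell a+\lc$ lies in the effective domain of $\Lambda^*$, which you do flag, and in the degenerate regime the bound \eqref{e:dev} is vacuously true anyway, so nothing is lost. These are cosmetic caveats; the proposal is correct and is, in substance, the proof the cited references give.
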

We now proceed to the proof of the theorem concerning the ``large'' eigenvalues of $\Mak$.

\emph{Proof of \thmref{th:large}.} 
As before, we prove the result for $\lsn (a)$, the extension to
$\Sn(a)$ being straightforward by taking the exponential. Define as
before 
$$d_h= \#\{1\leq j\leq h^{-1}:\log |\lambda_j^{(h)}|\geq \la +
\lc\}\,.$$ 
We also choose a small $\rho>0$, and define 
$$
d_h'=\#\{1\leq j\leq h^{-1}:\ell s_j\geq \la + \lc -\rho\}\,.
$$
The Weyl inequality can be written :
$$
d_h(\la+\lc)\leq \sum_{j=1}^{d_h'} \ell s_j + (d_h-d'_h)(\ell a+\lc-\rho)\,.
$$
Substracting $d_h(\la+\lc-\rho)$ on both sides, we get for $\rho$
small enough
$$
d_h\,\rho\leq \sum_{j=1}^{d_h'} (\ell s_j-\ell a)  - d_h'(\lc-\rho)\leq \sum_{j=1}^{d_h'} (\ell s_j-\ell a)\,.
$$
If we choose $n=n_\tau=E(\tau\log h^{-1})$ and $\tau<T_{a,\ka}$ as before, we can use exactly the same methods as above to evaluate  the right hand side of the preceding equation. Let $\mathcal I$ be a smooth function with $\mathcal I=1 $ on $[\lc-\rho,2]$, and $ \mathcal I=0$ on $\R\setminus [\lc-2\rho,3]$. We have 
\begin{equation*}
h\sum_{j=1}^{d_h'} (\ell s_j-\ell a)\leq h\Tr \mathcal I \lp \opn(\lan-\la)+\Ohn(h^{\sigma^-})\rp\,.
\end{equation*}
Recall that $\lan- \la\in S_{\delta^+}^0$ and $\sigma=1-2\delta-6\eta=1-\tau/T_{a,\ka}$. We easily check that $\sigma^-$ satisfies the condition $(ii)$ of Corollary \ref{c:calcfunctpert} since $\cI$ does not depend on $h$. 
Hence, 
$$
h\sum_{j=1}^{d_h'} (\ell s_j-\ell a)\leq \int_{\t2} \bbbone_{[\lc-2\rho, 3]}(\lan-\la) + \cO(h^{r})\,,\ \ r>0.
$$
Let us show that in fact, $r=\sigma^-$. By the functional
calculus, we have 
$$
r=\min\{1-2\delta^+-6\eps,\sigma^- -4\eps\}=\min\{1-2\delta^+-6\eps, 1-2\delta^+-6\eta^+ -4\eps\}
$$
where now, $\eps>0$ is arbitrary since $\mathcal I$ does not depend on $h$: this implies immediately  $r= \sigma^-$.
Using \eqref{e:dev2}, we get
$$
h \,d_h\leq \fr{1}{\rho}P_{x_n}([\lc-2\rho,\infty[)+\cO(h^{\sigma^-}) =\cO(\fr{1}{\rho}\e^{-n I(\lc-2\rho)}) +\cO(h^{\sigma^-})\,.
$$
Since $\rho$ is  arbitrarily small, we end up with
\begin{equation}\label{bound1}
h \,d_h =\cO(h^{\tau I(\lc^-)} +h^{\sigma^-})=\cO( h^{\tau I(\lc^-)} +h^{1-\tau/T_{a,\ka}^-})=\cO(h^{\min\{\tau I(\lc^-),1-\tau/T_{a,\ka}^-\}})\,.
\end{equation}
It is now straightforward to see that the bound is minimal if we choose
$$
\tau=\tau_c=\fr{T_{a,\ka}^-}{1+I(\lc^-)T_{a,\ka}^-}\,\,.
$$
\stopthm

\section{Numerical examples}\label{s:numeric}

In this section, we present a numerical illustration of Theorems
\ref{th:ergo} and \ref{th:width} for a simple example, the well known quantized cat
map. 

\subsection{The quantized cat map and their perturbations}
We represent a point of the torus $x\in\t2$ by a vector of $\R^2$ that we denote $X=(q,p)$.
Any matrix $A\in SL(2,\Z)$ induces an invertible symplectic flow $\ka : x\mapsto x'$ on  $\t2$ via a
transformation of the vector $X$ given by :
$$
X'=AX \mod 1\,.
$$
The inverse transformation is induced by $A^{-1}$, with $X=A^{-1}X'
\mod 1$. If $\Tr A>2$, this classical map, known as ``cat map'',  has  strong
chaotic features : in particular, it has the Anosov property (which implies
ergodicity). Any quantization $U_h(A)$ (see \cite{deg}) of $A\in SL(2,\Z)$ with $\Tr A>2$ satisfies
the hypotheses required for the unitary part of the maps
\eqref{Qmap}, and  the Egorov estimate \eqref{e:Egorov} turns out to
be exact, i.e. it holds without any remainder term.

Let us give a concrete example that will be treated numerically
below. Let $m\in \N$ and $A\in SL(2,\Z)$ of the form:
\begin{equation}\label{catmatrix}
A=\begin{pmatrix}
2m&1\\
4m^2-1&2m
\end{pmatrix}
\end{equation}
Then, in the position basis we have
\begin{equation}
U_h(A)_{jk}=\rc{h}\exp 2\i\pi h [mk^2-kj+mj^2]\,.
\end{equation}

We can also define some simple perturbations of the cat maps, by multiplying $U_h(A)$
with a matrix of the form $\exp (\fr{-2\i\pi}{h} \opn(H))$, with $H\in
\ctc$ a real function, playing the role of a ``Hamiltonian''. If we denote by $\e^{X_H}$ the  classical 
flow generated by $H$ for unit time, the total classical map will be  $\ka\defeq\e^{X_H}\circ\ka_A$.
For reasonable choices of ``Hamiltonians'' $H$, $\ka$ still define an Anosov maps on the torus, and 
the operators
\begin{equation}
\ti U_h(A,H)=e^{-\fr{2i\pi}{h}\opn( H)}U_h(A)
\end{equation}
quantize the map $\ka$. 
In our numerics, we have chosen $m=1$, and 
$$\opn(H)=\fr{\alpha}{4\pi^2} \sin(2\pi q)$$ 
with $\alpha=0.05$. This operator is diagonal in the position representation, and for $\alpha<0.33$ \cite{bk}, the classical map $e^{X_H}\circ\ka_A$ is Anosov. 

Remark that because of the perturbation, Egorov property \eqref{e:Egorov} now holds with some nonzero remainder term a priori. Such perturbed cat maps do not present in general the numerous spectral degeneracies caracteristic for the non-perturbed cat maps \cite{deg}, hence they can be seen as a classical map with generic, strong chaotic features.

\begin{figure}
\includegraphics[width=0.4\columnwidth, angle=0]{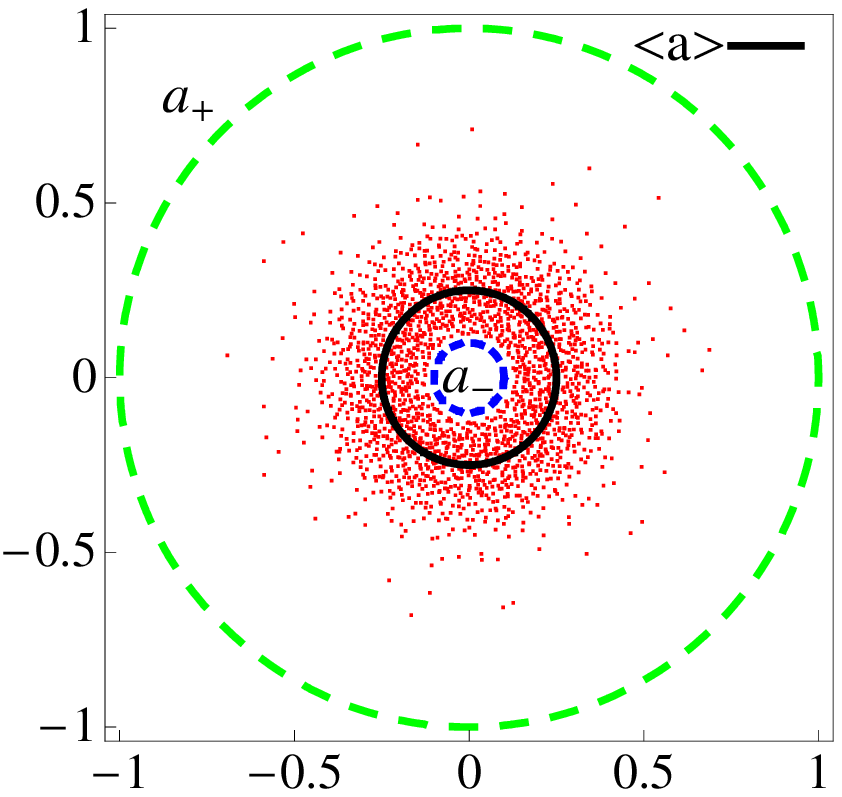}
\includegraphics[width=0.4\columnwidth, angle=0]{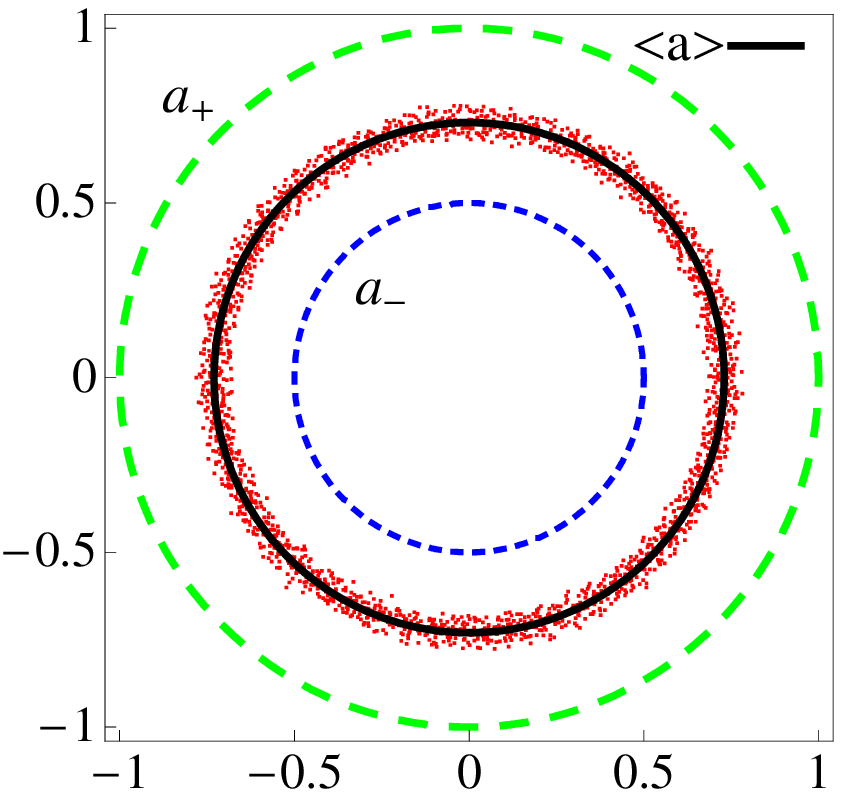}
\caption{Spectrum of $\Mak$ in the complex plane for $h^{-1}=2100$. The dashed circles correspond to $a_+$ and $a_-$, while the plain circle has radius $\moy a$.  To the left, we plot $a=a_1$, and  $a=a_2$ to the right.}
\label{spectres}
\end{figure}
For the damping terms, we choose two symbols of the form $a(q)$, whose quantizations are then diagonal matrices with entries $a(hj)$, $j=1,...,h^{-1}$. The function $a_1(q)$ has a plateau $a_1(q)=1$ for $q\in[1/3,2/3]$, another one for $q\in [0,1/6]\cup[5/6,1]$ and varies smoothly inbetween. For the second one, we take $a_2(q)=1-\fr12\sin(2\pi q)^2$. Numerically, we have computed
$$
\moy{a_1}\approx 0.250\,\quad\trm{and}\quad \moy {a_2}\approx0.728\,.
$$
Fig.  \ref{spectres} and \ref{densites} represent the spectrum of our perturbed cat map for $h^{-1}=2100$, with dampings $a_1$ and $a_2$. The spectrum stay inside an anulus delimited by $a_{i+}$ and $a_{i-}$, as stated in Eq. \eqref{e:strict}, and  the clustering of the eigenvalues around $\moy a_i$ is remarkable. For a more quantitative observation, the integrated radial and angular density of eigenvalues for
different values of $h^{-1}$  are represented in Fig. \ref{spectres}. We check that for moduli, the curve jumps around $\moy a$, which denotes a
maximal density around this value, and we clearly see the homogeneous angular  repartition of the eigenvalues of $\Mak$.

\begin{figure}
\includegraphics[width=0.4\columnwidth, angle=0]{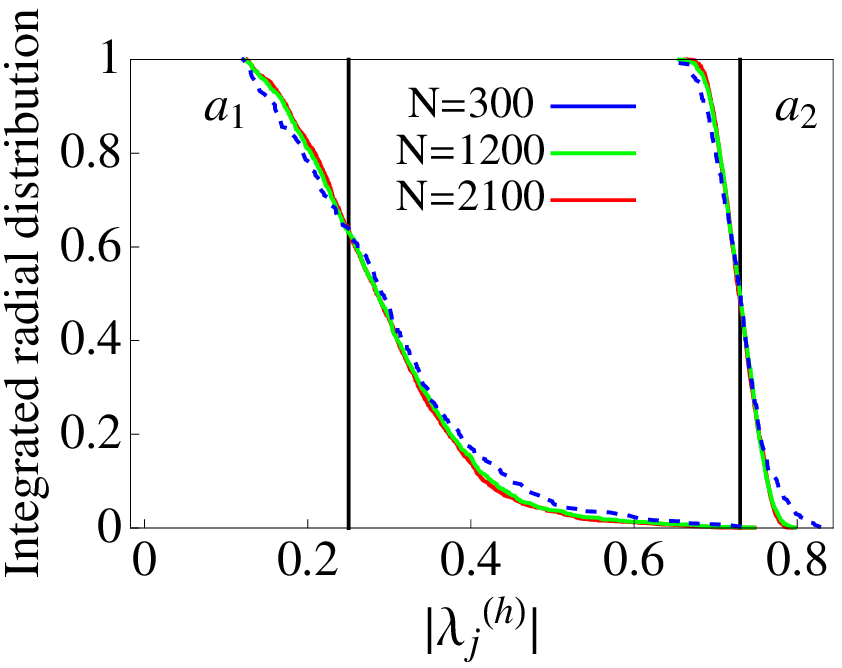}
\includegraphics[width=0.4\columnwidth]{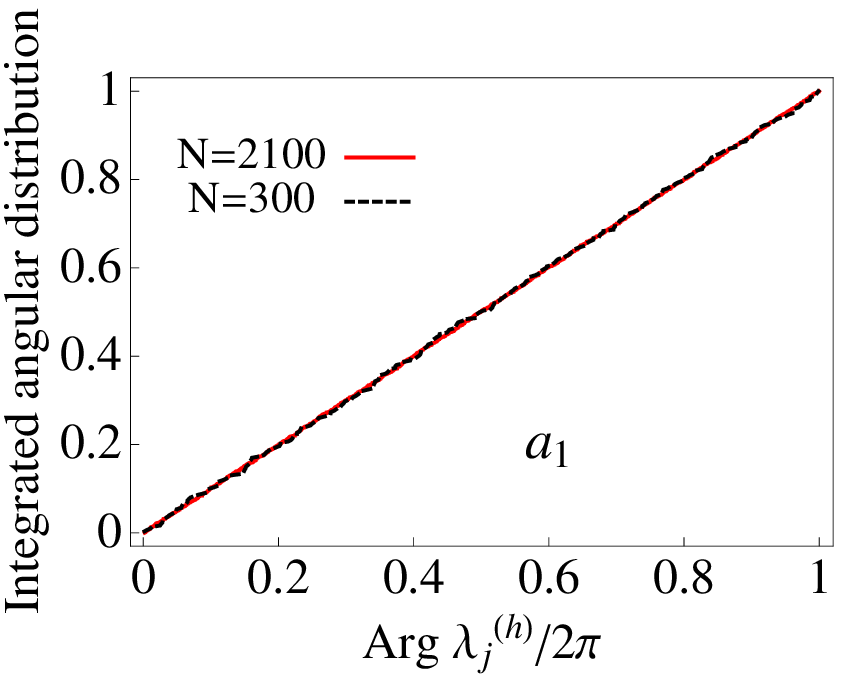}
\caption{Integrated spectral densities for the perturbed cat map. The radial distribution is represented to the left, the vertical bars indicate the value $\moy a$ for $a=a_1$ and $a=a_2$.  The angular distribution is represented to the right for the map $M_h(a_1,\ka)$.}
\label{densites}
\end{figure}

\begin{figure}
\includegraphics[width=0.5\columnwidth, angle=0]{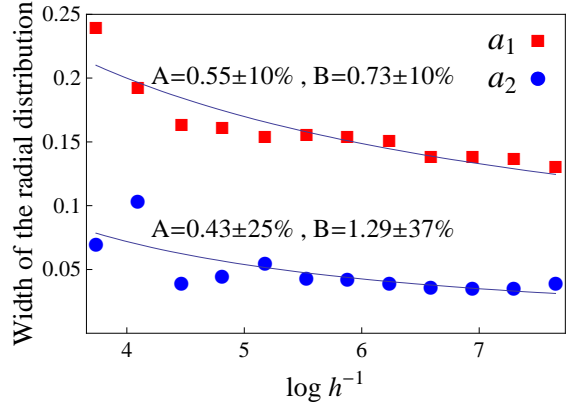}
\caption{Width of the radial distribution, together with the best 2-parameter fits $A (\log h^{-1})^{-B}$ and the asymptotic standard errors.}
\label{f:width}
\end{figure}

As we can observe in Fig. \ref{densites}, the width of the jumps do
not depend a lot upon $h$, at least for the numerical range we have
explored. This behavior could be explained by  \thmref{th:width},
which states that the speed of the clustering may be governed by $\log
h^{-1}$. To check this observation more in detail, we define the \emph{width}
$W_h$ of the spectral distribution of $\Mak$ as 
$$
W_h=|\lambda^{(h)}_{E(\fr{1}{4h})}|-|\lambda^{(h)}_{E(\fr{3}{4h})}|\,,
$$
and plot $W_h$ as a function of $h$ -- see Fig. \ref{f:width}. We clearly observe a decay with $h^{-1}$, although the 2-parameter fits $A(\log h^{-1})^{-B}$ hints a decay slightly faster than $(\log h^{-1})^{-1/2}$. Other numerical investigations presented in \cite{ns} for the quantum baker map show the same type of decay, and a solvable quantization of the baker map allow to compute explicitely the width $W_h$, which turns to be exactly proportional to $\rc{\log h^{-1}}$. This result, together with the numerics presented above,  seems to indicate that the bound on the decay of the eigenvalue distribution  expressed by \thmref{th:width} may be optimal.

\medskip

\subsection*{Acknowledgment}
I would like to thank very sincerely Stéphane Nonnenmacher for suggesting
this problem, and above all for his generous
help and patience while introducing me to the subject. I also have
benefitted from his careful reading
of preliminary versions of this work. I would also like to
thank Frédéric Faure and Maciej Zworski for helpful and enlightening discussions.


\begin{thebibliography}{8}

\bibitem[Ana]{ana} N. Anantharaman,  workshop ``Spectrum and dynamics'', Centre de Recherche Math\'ematiques, Montr\'eal, April 7-11, 2008.

\bibitem[AL]{al}M.~Asch and G.~Lebeau, {\it The spectrum of the damped wave operator
for a bounded domain in $\R^2$}, Experimental math. {\bf 12} (2003),
227--241

\bibitem[AA]{avez} V.I. Arnold and A. Avez, ``Ergodic problems of classical mechanics'', Addison-Wesley, 1968.

\bibitem[BK]{bk} P.A. Boasman and J.P. Keating, {\em Semiclassical asymptotics of perturbed cat maps},  Proc. R. Soc. Math. \& Phys. Sc.  {\bf 449} no. 1937, 629--653 


\bibitem[BU]{bu} D.~Borthwick and A.~Uribe, {\em On the pseudospectra of Berezin-Toeplitz operators},
Meth. Appl. Anal. {\bf 10} (2003), 31--65

\bibitem[BDB1]{bdb1} A.~Bouzouina  and S.~De~Bi\`{e}vre,
{\it Equipartition of the eigenfunctions of quantized ergodic maps
on the torus}, Commun. Math. Phys. {\sf 178} (1996), 83--105

\bibitem[BDB2]{bdb2} A.~Bouzouina and S.~De~Bi\`{e}vre,
{\it \'Equidistribution des valeurs propres et ergodicit\'e
semi-classique de symplectomorphismes du tore quantifi\'es}, C. R.
Acad. Sci. Paris S\'erie I, {\bf 326} (1998), 1021--1024


\bibitem[BR]{br} A. Bouzouina and D. Robert, {\em Uniform semiclassical estimates for the propagation of quantum observables},  Duke Math. Journal, {\bf 111} no. 2  (2002), 223-252

\bibitem[Com]{com} L. Comtet, Analyse combinatoire, Vol. 1, Collection SUP: ``Le Math\'ematicien'' {\bf 4},
Presses Univ. France, Paris, 1970. MR 41:6697

\bibitem[DEG]{deg} M. Degli Esposti and S. Graffi (eds): {\em  The mathematical aspects of quantum
maps,} Springer, 2003.

\bibitem[DS]{ds}  M. Dimassi and J. Sj\"ostrand, {\it Spectral Asymptotics in
the semi-classical limit,} Cambridge University Press, 1999.

\bibitem[ET]{et}M. Embree and L.N.~Trefethen, {\em Spectra and pseudospectra, the
behaviour of non-normal matrices and operators}, Princeton Univ.
Press, 2005.

\bibitem[FNW]{fnw} A. Fannjiang, S. Nonnenmacher and L. Wolowski, {\em
    Relaxation time for quantized toral maps}, Annales Henri Poincaré
  {\bf 7} no. 1 (2006), 161-198.

\bibitem[GS]{gs} A. Grigis and J. Sj\"ostrand, {\em Microlocal analysis for differential operators}, London math. soc. L.N.S. 196, Cambridge University Press (1983)

\bibitem[HB]{hb}
J.H.~Hannay and M.V.~Berry: {\em Quantization of linear maps on a
torus - Fresnel diffraction by a periodic grating}, Physica {\bf D1}
(1980), 267--290


\bibitem[KMR]{kmr}
J.P.~Keating, F.~Mezzadri and  J.M.~Robbins: {\em Quantum boundary
conditions for torus maps}. Nonlinearity  {\sf 12} (1991),
579-591

\bibitem[KNS]{kns} J.P.~Keating, M. Novaes, H. Schomerus: {\em  Model for chaotic dielectric microresonators}, Phys. Rev. {\bf A 77} (2008), 013834.



\bibitem[K\"on]{kon} H. K\"onig, {\em Eigenvalues distribution for
compact operators,} Birkh\"auser, 1986.


\bibitem[Liv]{liv} C. Liverani, {\emph Decay of correlations}, 
Ann. of Math. {\bf 142} (1995), 239--301

\bibitem[MOK]{mok} J. Markolf and S. O'Keefe:  {\em Weyl's law and quantum ergodicity for maps
with divided phase space.} Nonlinearity {\sf 18} (2005), 277--304

\bibitem[NS]{ns} S. Nonnenmacher and E. Schenck, arXiv:0803.1075v2 [nlin.CD] (2008), to appear in Phys. Rev. {\bf E}.

\bibitem[NZ]{nz} S. Nonnenmacher and M. Zworski:
{\em Distribution of resonances for open quantum maps}. 
Commun. Math. Phys. {\sf 269} (2007), 311--365

\bibitem[OP]{op} S. Orey and S. Pelikan, {\em Deviations of Trajectories Averages and the defect in Pesin's Formula for Anosov Diffeomorphisms}, Trans. Amer. Math. Soc. {\bf 315} no. 2 (1989), 741--753

\bibitem[RBY]{rby}L. Rey-Bellet and  L.-S. Young, {\em Large deviations in non-uniformly hyperbolic dynamical systems}, Erg. Th. \& Dynam. Sys. {\bf28} (2008), 587--612

\bibitem[Rud]{rud} W. Rudin, {\em Functional Analysis,} McGraw-Hill, 1991.

\bibitem[Sj\"o]{sjo} J.~Sj\"ostrand, {\em Asymptotic distribution of eigenfrequencies
  for damped wave equations}, Publ. R.I.M.S. {\bf 36} (2000), 573--611

\bibitem[SZ]{sz} J. Sj\"ostrand and M. Zworski, {\em Fractal upper bound on the density of semiclassical resonances}, Duke Math. J. {\bf 137} no. 3 (2007), 381--459

\bibitem[Tab]{tab}M.~Tabor, {\em A semiclassical quantization of area-preserving maps},
Physica {\bf D6} (1983), 195--210

\bibitem[Zel]{zel}
S.~Zelditch: {\em Index and dynamics of quantized contact
transformations}. Annales de l'institut Fourier, {\sf 47} no.
1 (1997),  305--363
\end{thebibliography}
\end{document}